\numberwithin{equation}{section}
\newtheorem{prop}{Proposition}[section]
\newtheorem{thm}[prop]{Theorem}
\newtheorem{lemma}[prop]{Lemma}
\newtheorem{cor}[prop]{Corollary}
\newtheorem{df}[prop]{Definition}
\newtheorem{rem}[prop]{Remark}
\newtheorem{ex}[prop]{Example}
\newcommand{\A}{\mathcal{A}}
\newcommand{\B}{\mathcal{B}}
\newcommand{\HH}{\mathcal{H}}
\newcommand{\mc}{\mathcal}
\newcommand{\R}{\mathbb{R}}
\newcommand{\C}{\mathbb{C}}
\newcommand{\N}{\mathbb{N}}
\newcommand{\de}{\mathrm{d}}
\newcommand{\inner}[1]{\left<\smash[t]{#1}\right>}
\newcommand{\ket}[1]{\left|\smash[t]{#1}\right>}
\newcommand{\tr}{\mathrm{Tr}}
\newcommand{\id}{\mathsf{id}}
\newcommand{\ran}{\mathrm{range}}
\title{Pythagoras Theorem in Noncommutative Geometry}
\author{Francesco D'Andrea}
\address{Dipartimento di Matematica e Applicazioni, Universit\`a di Napoli ``Federico II'' and I.N.F.N. Sezione di Napoli, Complesso MSA, Via Cintia, 80126 Napoli, Italy}
\email{francesco.dandrea@unina.it}
\thanks{%
\textit{Date:} September 2015. \\ \indent
2010 \textit{Mathematics Subject Classification.}~Primary: 58B34; Secondary: 46L87; 54E35. \\ \indent
\textit{Acknowledgements.}~Research supported by UniNA and Compagnia di San Paolo in the framework of the Program STAR 2013. \vspace{3pt}}
\begin{document}

\begin{abstract}
After a review of the results in \cite{DM13} about Pythagorean inequalities for products of spectral triples, I will present some new results and discuss classes of spectral triples and states for which equality holds.
\end{abstract}

\maketitle

%%% ======================================================================

\section{Introduction}
The Pythagorean Theorem (Euclid I.47) is probably the most famous statement in all of mathematics, and also the one with the largest number of proofs.
There is a collection of 366 different proofs is in the book by E.S.~Loomis \cite{Loo68}, and a track online of some false proofs \cite{Bog15}, which includes some from Loomis' book itself.
A celebrated visual proof is the one in Figure~\ref{fig:1a}, which Proclus (ca.~412--485 \textsc{AD}) attributes to Pythagoras himself \cite[p.~61]{Mao07}.
A similar one \cite[Proof~36]{Loo68} is considered the first ever recorded proof of the Pythagorean Theorem \cite[Chap.~5]{Mao07}.
A chronology can be found in \cite[p.~241]{Mao07}.

\begin{figure}[t]
  \hspace*{-7mm}%
  \subfloat[\label{fig:1a}The blue area equals the total area minus four times the red area, so:
	\mbox{$\;\;a^2=(b+c)^2-4(bc/2)=b^2+c^2$.}]
	{\makebox[7.5cm]{\includegraphics[width=5cm]{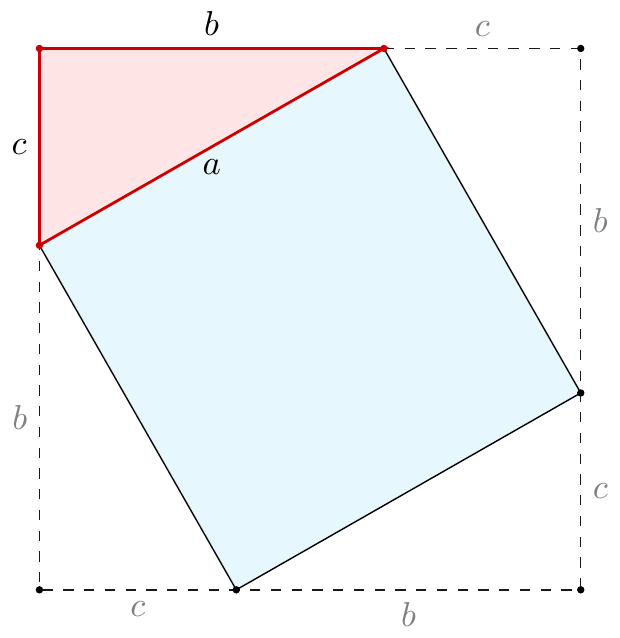}}}
	\hspace*{-8mm}%
	\subfloat[\label{fig:1b}A hyperbolic right-angled triangle]
  {\makebox[7cm]{\includegraphics[width=5cm]{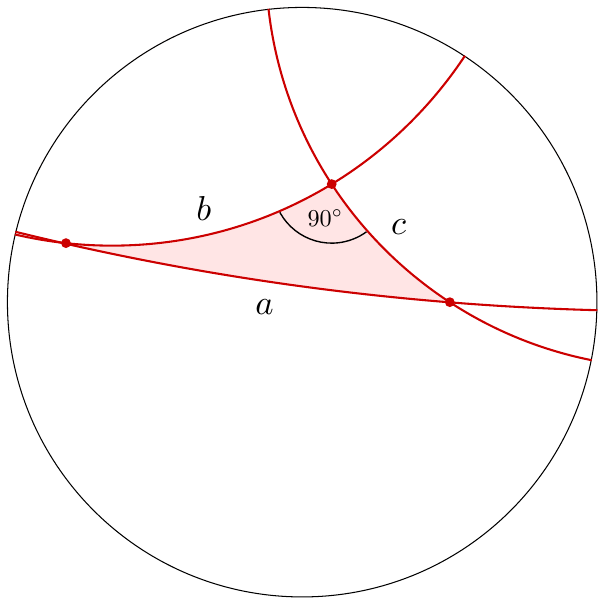}}}
	
	\caption{}
\end{figure}

There are generalizations to hyperbolic and elliptic geometry as well.
For a right-angled geodesic triangle in the hyperbolic plane (Fig.~\ref{fig:1b}): 
$$
\cosh(a)=\cosh(b)\cosh(c) \;,
$$
see e.g.~\cite[p.~81]{Thu97}, while on a unit sphere:
$$
\cos(a)=\cos(b)\cos(c) \;.
$$
In both cases, if one performs a Taylor series expansion, at the leading order one retrieves the usual Pythagoras theorem, which then holds for ``very small triangles''.

\medskip

Pythagoras theorem for ``very small triangles'' is exactly the way one defines the product metric in Riemannian geometry. On a Cartesian product $M=M_1\times M_2$ of two Riemannian manifolds, one defines the line element as
\begin{equation}\label{eq:lineel}
ds^2=ds_1^2+ds_2^2
\end{equation}
where $ds_1$ resp.~$ds_2$ are the line elements on $M_1$ resp.~$M_2$.
As shown in \cite{DM13}, one can integrate last equality, and prove that
Pythagoras equality holds for geodesic right-angled triangles, provided
the two legs are ``parallel'' one to $M_1$ and the other to $M_2$
(cf.~\S\ref{sec:Hodge}).
The first complications arise in this example when points are replaced by probability measures, and the geodesic distance is replaced by its natural generalization: the Wasserstein distance of order $1$ (see e.g.~\cite{Vil08}). One can realize via a simple example that Pythagoras equality is replaced by the inequalities
\begin{equation}\label{eq:Pineq}
\sqrt{b^2+c^2}\leq a\leq \sqrt{2}\,\sqrt{b^2+c^2} \;,
\end{equation}
and that the upper bound -- with coefficient $\sqrt{2}$ -- is optimal (in \S3.3 of \cite{DM13} we exhibit an example where $a$ assumes all values between the lower and upper bound).

\medskip

Now, \eqref{eq:Pineq} is a particular instance of a formula which holds in noncommutative geometry. Given a $C^*$-algebra $A$, we can define a distance on its state space $\mc{S}(A)$ by means of a spectral triple. Suppose $A=A_1\otimes A_2$ is a tensor product of two \mbox{$C^*$-algebras}.
Identifying elements $\varphi=(\varphi_1,\varphi_2)$ and $\psi=(\psi_1,\psi_2)\in\mc{S}(A_1)\times\mc{S}(A_2)$ with product states of $A$, one may wonder whether Pythagoras equality holds for the triangle with vertices $\varphi,\psi$ and $(\psi_1,\varphi_2)$ (Fig.~\ref{fig:2}). It turns out that \eqref{eq:Pineq} is valid even in this more general framework \cite{DM13}. More precisely, the right inequality is valid for arbitrary spectral triples, while the left one is valid for unital spectral triples only, but it holds for arbitrary states (with components $\varphi_1,\varphi_2,\psi_1,\psi_2$ replaced by marginals). I~will present a slightly different proof of the inequalities \eqref{eq:Pineq} for spectral triples in \S\ref{sec:prodsp} (including the generalization to marginals).

\begin{figure}[t]
  \centering\hspace*{3mm}%
	\includegraphics[height=4.8cm]{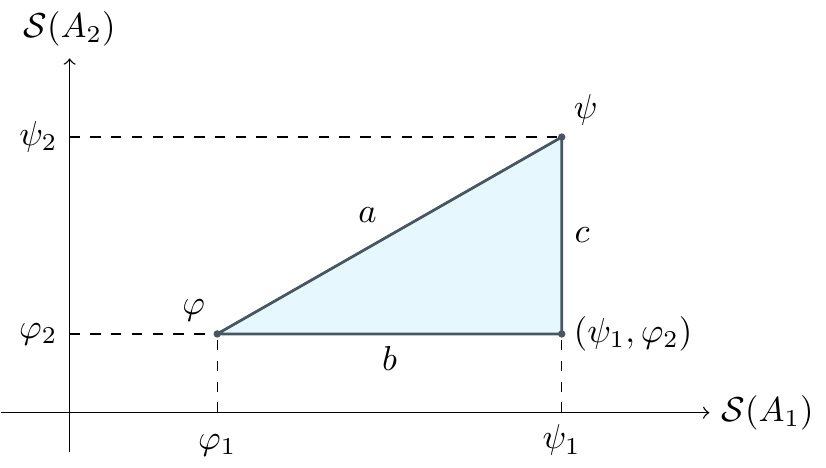}
	
	\vspace{-10pt}
	
	\caption{}
	\label{fig:2}
\end{figure}

It is then natural to wonder for which states and spectral triples the left inequality in \eqref{eq:Pineq} is actually an equality, that is: Pythagoras theorem holds.
In the example $M=M_1\times M_2$ of Riemannian manifolds, it holds for geodesic triangles, hence when the vertices are pure states of the commutative algebra $C_0(M_1)\otimes C_0(M_2)$~\cite{DM13}.

In the case of pure states, the equality was proved for the product of any Riemannian manifold with a two-point space \cite{MW02}, and for the product of Moyal plane and a two-point space \cite{MT13} under the hypothesis that the two states on Moyal plane are obtained one from the other by a translation (so, even in this non-commutative example, there is a geodesic flow connecting the vertices of the triangle).

In \S\ref{sec:CN}, I will discuss the discrete analogue of the example of Riemannian manifolds: I will prove that the equality holds for pure states in the product of canonical spectral triples associated to two arbitrary \emph{finite} metric spaces.
One interest for this proof is that it is purely algebraic and makes no use of geodesics.
In \S\ref{sec:geo}, I will generalize the result in
\cite{MW02,MT13} to the product of an arbitrary spectral triple with the two-point space, under the hypothesis that the states considered are connected by some generalized geodesic; I will then explain how to extend such a result from the two-point space to an arbitrary finite metric space.
As a corollary, we get Pythagoras equality for arbitrary pure states in the product of a Riemannian manifold and a finite metric space, thus completing the ``commutative'' picture, and the product of Moyal plane with a finite metric space.

At this point one could believe that Pythagoras equality is a property of pure states, but there is at least one example where it holds for \emph{arbitrary} product states: it is the product of two two-point spaces (the simplest example conceivable), as I will explain in \S\ref{sec:C2}.

A much more difficult question is whether the equality holds for pure states in the product of arbitrary (unital\footnote{In the non-unital case, sometimes even the left inequality in \eqref{eq:Pineq} is violated by pure states.}) spectral triples. In fact, pure states of a commutative \mbox{$C^*$-algebra} are characters. At the present time, there is no general proof nor counterexamples (even noncommutative) to Pythagoras equality for pure/character states.

\medskip

The paper is organized as follows: there is a first part discussing the general theory (\S\ref{sec:2} and \S\ref{sec:prodsp}), a second part where I collected examples (from \S\ref{sec:4} to \S\ref{sec:geo}), and a section (\S\ref{sec:pur}) where I highlight some analogies between the spectral distance and the purified distance of quantum information.

\section{Some preliminary definitions}\label{sec:2}

\subsection{Cartesian products and the product metric}\label{sec:2.1}

If $X$ is a set, we call $d:X\times X\to [0,+\infty]$ an \emph{extended
semi-metric} if, for all $x,y,z\in X$: \\[5pt]
\begin{tabular}{cp{5cm}l}
i)   & $d(x,y)=d(y,x)$ & (symmetry); \\[2pt]
ii)  & $d(x,x)=0$ & (reflexivity); \\[2pt]
iii) & $d(x,y)\leq d(x,z)+d(z,y)$ & (triangle inequality).
\end{tabular} \\[5pt]
If in addition \\[5pt]
\begin{tabular}{cp{5cm}l}
iv)  & $d(x,y)=0\;\Rightarrow\;x=y$ & (identity of the indiscernibles),
\end{tabular} \\[5pt]
we call $d$ an \emph{extended metric} \cite{DD09}.
It is a \emph{metric} `tout court' if $d(x,y)<\infty\;\forall\;x,y$.

\medskip

For $p>0$, we denote by
$$
\|v\|_p=\big(|v_1|^p+|v_2|^p+\ldots+|v_n|^p\big)^{\frac{1}{p}}
$$
the $p$-norm of $v=(v_1,\ldots,v_n)\in\R^n$, and by $\|v\|_\infty=\max_i|v_i|$ the sup (max) norm.

\medskip

Given two metric space $(X_1,d_1)$ and $(X_2,d_2)$, one can verify that  the formula
$$
d(x,y):=\left\|\big(\,d_1(x_1,y_1)\,,\,d_2(x_2,y_2)\,\big)\right\|_p
 \;,\quad \forall\;x=(x_1,x_2),
y=(y_1,y_2)\in X_1\times X_2,
$$
defines a metric on the Cartesian product $X=X_1\times X_2$, for any fixed $p\in (0,\infty]$. For $p=2$, we call this distance the \emph{product metric} and denote it by $d_1\boxtimes d_2$.

More generally (for any $p$), if $d_1$ and $d_2$ are extended metrics, then $d$ is an extended metric, and if they are extended semi-metrics, so is $d$.\footnote{The non-trivial part is to prove that $d$ satisfies the triangle inequality. For $x,y,z\in X$, let $u,v,w\in\R^2$ be the vectors $u=\big(d_1(x_1,y_1),d_2(x_2,y_2)\big)$,
$v=\big(d_1(x_1,z_1),d_2(x_2,z_2)\big)$, and \mbox{$w=\big(d_1(z_1,y_1),d_2(z_2,y_2)\big)$}.
Then $d(x,y)=\|u\|_p\leq\|v+w\|_p$ (since the norm is a non decreasing function of the components, and $d_1,d_2$ satisfy the triangle inequality). From the norm inequality $\|v+w\|_p\leq\|v\|_p+\|w\|_p=d(x,z)+d(z,y)$ we get then the triangle inequality for $d$.}

\subsection{Forms and states on a \texorpdfstring{$C^*$-algebras}{C*-algebras}}
A \emph{form} on a complex vector space $A$ is a linear map $\varphi:A\to\C$. Given a form $\varphi:A\to\C$ on a unital algebra $A$, I will denote by $\varphi^\sharp:A\to A$ the map $\varphi^\sharp(a)=\varphi(a)1_A$.
Note that if $\varphi(1_A)=1$, then $\varphi^\sharp$ is an \emph{idempotent}.
Here by idempotent I mean an endomorphism $\pi$ of a vector space
satisfying $\pi\circ\pi=\pi$, while I will reserve the term \emph{projection} for bounded operators $p$ on a Hilbert space satisfying $p^2=p^*=p$.

\smallskip

A \emph{state} on a complex $C^*$-algebra $A$ is a form which is positive and normalized:
$$
\varphi(a^*a)\geq 0\;\forall\;a\in A \;,\qquad
\|\varphi\|=1 \;,
$$
where $\|\varphi\|$ is the norm dual to the $C^*$-norm of $A$:
$$
\|\varphi\|:=\sup_{a\in A,a\neq 0}\frac{\|\varphi(a)\|}{\|a\|}
$$
If $A$ is unital, the normalization condition is equivalent to $\varphi(1_A)=1$.

States of $A$ form a convex set $\mc{S}(A)$. Its extremal points -- states that cannot be written as a convex combination of other (two or more) states -- are called \emph{pure}. Pure states $\hat{x}$ of $C_0(M)$, with $M$ a locally compact Hausdorff space, are identified with points via the formula $\hat{x}(f):=f(x)$, and are in fact \emph{characters} (one dimensional representations) of the algebra $C_0(M)$.

If $A$ is a concrete $C^*$-algebra of bounded operators on a Hilbert space $\HH$, a \emph{normal} state $\varphi$ is one of the form
$$
\varphi(a)=\tr_{\HH}(\rho\hspace{1pt}a) \;,\qquad\forall\;a\in A,
$$
where $\rho$ is a positive trace class operator normalized to $1$, called a \emph{density operator} or \emph{density matrix} for $\varphi$ (and in general is not unique).
For an abstract $C^*$-algebra, we can talk about normal states with respect to a given representation (cf.~\S2.4.3 and Def.~2.4.25 of \cite{BR97}).

If $A$ is a finite-dimensional $C^*$-algebra every state is normal; if $A=M_n(\C)$ with its natural representation on $\C^n$ the correspondence between density matrices and states is a bijection; this is true even if $A$ is the $C^*$-algebra of compact operators on a infinite-dimensional separable Hilbert space, with the weak$^*$ topology induced by the trace norm on density matrices (Prop.~2.6.13
and Prop.~2.6.15 of \cite{BR97}).

For a discussion of states from the point of view of quantum physics one can also see \cite{Lan98book}. On the probabilistic interpretation of states, and relation between quantum mechanics and probability theory, one can see \cite[App.~A]{RP11}.

\subsection{States of a bipartite system}
Let $A=A_1\otimes A_2$ be the minimal tensor product of two unital $C^*$-algebras $A_1$ and $A_2$.
Given a state $\varphi$ on $A$, we define its \emph{marginals} as the states $\varphi_1^\flat\in\mc{S}(A_1)$ and $\varphi_2^\flat\in\mc{S}(A_2)$ given by\footnote{To simplify the discussion, I give here the definition only in the unital case.}
$$
\varphi_1^\flat(a_1)=\varphi(a_1\otimes 1) \;,\qquad
\varphi_2^\flat(a_2)=\varphi(1\otimes a_2) \;,
$$
for all $a_1\in A_1$ and $a_2\in A_2$.

We call $\varphi$ a \emph{product state} if it is the tensor product of its marginals\footnote{Here I adopt the terminology of \cite{BZ06,RP11} (slightly different from the one of \cite{DM09}).}, i.e.~if it is of the form $\varphi=\varphi_1\otimes\varphi_2$ for some $\varphi_1\in\mc{S}(A_1)$ and $\varphi_2\in\mc{S}(A_2)$ (note that the latter condition makes sense in the non-unital case too).
The set of product states will be identified with the Cartesian product \mbox{$\mc{S}(\A_1)\times\mc{S}(\A_2)$} via the bijection $\varphi_1\otimes\varphi_2\mapsto (\varphi_1,\varphi_2)$, and more generally there is a surjective (but not injective) map
\begin{equation}\label{eq:margmap}
\mc{S}(A)\twoheadrightarrow\mc{S}(A_1)\times\mc{S}(A_2)
\;,\qquad
\varphi\mapsto (\varphi_1^\flat,\varphi_2^\flat)
\;.
\end{equation}
A state is called \emph{separable} if it is a convex combination of product states. In fact, for infinite-dimensional algebras, we may want to consider the \emph{closed} convex hull of product states, and call $\varphi\in\mc{S}(A)$ separable if it is of the form
\begin{equation}\label{eq:separable}
\varphi=\sum_{n\geq 0}p_n\,\varphi_{1,n}\otimes\varphi_{2,n} \;,
\end{equation}
where $\varphi_{i,n}\in\mc{S}(A_i)$ for all $i=1,2$ and $n\geq 0$,
$\{p_n\}_{n\geq 0}$ is a probability distribution, and the series is convergent in the weak$^*$ topology. A state which is not separable is called \emph{entangled}.

In the case of matrix algebras, separable states are the ones with density matrix of the form
$$
\rho=\sum_{i=1}^kp_i\,\rho_{1,i}\otimes\rho_{2,i} \;.
$$
If $\rho$ is of the above type, applying the transposition to the first factor we get a new positive matrix $\rho^{T_1}$. This simple observation (called ``Peres' criterion'', cf.~\S15.4 of \cite{BZ06}) allows to prove that entangled states exist.

\begin{ex}\label{ex:2.1}
Let $A_1=A_2=M_2(\C)$, $\{e_i\}_{i=1,2}$ be the canonical basis of $\C^2$ and $e_{ij}$ the matrix with $1$ in position $(i,j)$ and zero everywhere else.
Let $\rho_{\,\mathrm{Bell}}$ be the projection in the direction of the unit vector
$\frac{1}{\sqrt{2}}(e_1\otimes e_1+e_2\otimes e_2)$, hence
$$
\rho_{\,\mathrm{Bell}}=\frac{1}{2}\sum\nolimits_{ij}e_{ij}\otimes e_{ij} \;.
$$
Thus, $\rho_{\,\mathrm{Bell}}$ is a pure state of the composite system $A=A_1\otimes A_2$. Since
$$
\rho_{\,\mathrm{Bell}}^{T_1}=\frac{1}{2}\sum\nolimits_{ij}e_{ij}\otimes e_{ji}
$$
has an eigenvector $v:=e_1\otimes e_2-e_2\otimes e_1$ with negative eigenvalue ($\rho_{\,\mathrm{Bell}}^{T_1}v=-\frac{1}{2}\,v$), it is not a positive matrix, and the state associated to $\rho_{\,\mathrm{Bell}}$ is entangled.
\end{ex}

The study of entanglement is an extremely active area of research in physics. For the quantum mechanical interpretation of marginals as projective measurements, and entanglement in quantum mechanics, one can see \cite[\S15]{BZ06} or \cite[\S10.2]{RP11}.
The one in Example \ref{ex:2.1} is a composite system of two qubits, and $\rho_{\,\text{Bell}}$ is a Bell states for this system \cite[Ex.~3.2.1]{RP11}.

\medskip

Note that any separable pure states is a product state, since a pure state by definition cannot be a convex combination of other two or more states. Hence for pure states these two notions coincide.
On the other hand, pure states can be entangled, as in Example \ref{ex:2.1}.

In the case of commutative $C^*$-algebras $A_1=C_0(X_1)$ and $A_2=C_0(X_2)$, from the identification $A=A_1\otimes A_2\simeq C_0(X_1\times X_2)$ we deduce that every pure state of $A$ ($=$ point of $X_1\times X_2$) is a product state.
In the unital case, $\mc{S}(A)$ is a compact convex set, hence by
Krein-Milman theorem it is the closed convex hull of its extreme points. This and the above observation on pure states proves that every state of a commutative unital $C^*$-algebra $A$ is separable: entanglement is an exclusive property of noncommutative $C^*$-algebras/quantum systems.

\begin{ex}
If $A=M_n(\C)$, pure states are in bijection with rank $1$ projections, and then with points of $\C\mathrm{P}^{n-1}$.
The map $\C\mathrm{P}^{m-1}\times\C\mathrm{P}^{n-1}\to\C\mathrm{P}^{mn-1}$
into product states of a bipartite system is given by the Segre embedding \cite[\S4.3]{BZ06}.
\end{ex}

\begin{ex}
If $A=\C^n$, $\mc{S}(A)$ is the set of probability distributions on $n$ points, which geometrically is the standard $(n-1)$-simplex:
$$
\mc{S}(A)=\Delta^{n-1}:=\big\{\,
p=(p_1,\ldots,p_n)\in\R^n\,:\,p_i\geq 0\;\wedge\;\textstyle{\sum_i}\,p_i=1\,
\big\} \;.
$$
The embedding as product states $\Delta^{m-1}\times\Delta^{n-1}\to\Delta^{mn-1}$, $(p,q)\mapsto x:=(p_iq_j)$, has for image
the subset of $x=(x_{ij})$ satisfying the algebraic equations
$$
x_{ij}x_{kl}=x_{il}x_{kj} \;,\qquad\forall\;i,j,k,l.
$$
If $m=n=2$, we get the quadric surface described in Appendix \ref{app:A}.
\end{ex}

\subsection{Spectral triples and the spectral distance}
If $\A$ is a $*$-algebra of bounded operators on a complex Hilbert space, we denote by $\mc{S}(\A)$ the set of states of the norm closure of $\A$, and by $\A^{\mathrm{sa}}$ the set of selfadjoint elements of $\A$.

A natural way to construct a metric on $\mc{S}(\A)$ is by means of a spectral triple. Standard textbooks on this subject are \cite{Con94,GVF01,Lan02}.

\begin{df}
A \emph{spectral triple} $(\A,\HH,D)$ is the datum of a $*$-algebra of bounded operators on a Hilbert space $\HH$ and a (unbounded) selfadjoint operator $D$ on $\HH$, such that $a(D+i)^{-1}$ is compact and $[D,a]$ is bounded for all
$a\in\A$. A spectral triple is:
\begin{list}{}{\leftmargin=1.5em \itemsep=1pt}
\item[i)] \emph{unital} if $1_{\B(\HH)}\in\A$ (the algebra is unital and its unit is the identity on $\HH$);
\item[ii)] \emph{even} if there is a grading $\gamma=\gamma^*$, $\gamma^2=1$, commuting with $\A$ and anticommuting with $D$.
\end{list}
We call $D$ a (generalized) Dirac operator.
\end{df}

Although the definition makes sense for real Hilbert spaces as well, and there are examples where one is forced to work over the field of real numbers (e.g.~in the spectral action approach to the internal space of the Standard Model of elementary particles, see e.g.~\cite{CM08,SuijBook}), here I will focus on complex algebras and spaces (the only exception being the example in \S\ref{sec:Bloch}).

\medskip

If $(\A,\HH,D)$ is a spectral triple, an extended metric $d_D$ on $\mc{S}(\A)$
is defined by:
\begin{equation}\label{eq:spectrald}
d_D(\varphi,\psi)=\sup_{a\in\A^{\mathrm{sa}}}\big\{\varphi(a)-\psi(a):\|[D,a]\|\leq 1\big\}\;,
\end{equation}
for all $\varphi,\psi\in\mc{S}(\A)$. We refer to this as the \emph{spectral distance}.

\medskip

Note that from any odd spectral triple $(\A,\HH,D)$ we can construct an even spectral triple \mbox{$(\A,\HH\otimes\C^2,D\otimes\sigma_1,1\otimes\sigma_3)$}
without changing the spectral distance. Here $\sigma_1,\sigma_3$ are Pauli matrices, we use the obvious representation $a\mapsto a\otimes 1$ of $\A$ on $\HH\otimes\C^2$, and since $[D,a]$ and $[D\otimes\sigma_1,a\otimes 1]=[D,a]\otimes\sigma_1$ have the same norm, clearly the distance doesn't changes. So, we do not loose generality by considering only even spectral triples.

\subsection{Spectral distance between normal states}
For normal states there is a formula for the distance \eqref{eq:spectrald} which is a little bit more explicit.
Let $(\A,\HH,D)$ be a spectral triple and
$\varphi_1,\varphi_2$ two distinct normal states with density matrices $\rho_1,\rho_2$ satisfying $\rho:=\rho_1-\rho_2\in\A$.
Call
\begin{equation}\label{eq:innerHS}
\inner{a,b}_{\mathrm{Tr}}=\tr(a^*b)
\end{equation}
and let $\|a\|_{\mathrm{Tr}}$ be the associated norm (whenever they are well-defined, for example for
$a\in\mc{L}^1(\HH)$ and $b\in\B(\HH)$ the former, and for $a\in\mc{L}^2(\HH)$ the latter). Let
$$
V_\rho:=\big\{a=a^*\in\A:\inner{a,\rho}_{\mathrm{Tr}}=0\big\}
$$
be the subspace of $\A^{\mathrm{sa}}$ orthogonal to $\rho$,
and set
$$
L(\rho):=\inf_{a\in V_\rho}\|[D,\rho+a]\| \;.
$$

\begin{prop}\label{prop:drho}
Either $L(\rho)=0$ and $d_D(\varphi_1,\varphi_2)=\infty$, or $L(\rho)\neq 0$ and
\begin{equation}\label{eq:drho}
d_D(\varphi_1,\varphi_2)=L(\rho)^{-1}\|\rho\|_{\mathrm{Tr}}^2 \;.
\end{equation}
\end{prop}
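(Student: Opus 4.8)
The plan is to turn the definition \eqref{eq:spectrald} of $d_D(\varphi_1,\varphi_2)$ into a constrained linear optimization over $\A^{\mathrm{sa}}$ and to diagonalize it using the Hilbert--Schmidt geometry of $\rho$. The first step is to rewrite the objective: since $\varphi_1,\varphi_2$ are normal with density matrices $\rho_1,\rho_2$, for every $a\in\A^{\mathrm{sa}}$ one has
\[
\varphi_1(a)-\varphi_2(a)=\tr(\rho_1 a)-\tr(\rho_2 a)=\tr(\rho a)=\inner{\rho,a}_{\mathrm{Tr}},
\]
which is real because $\rho=\rho^*$ and $a=a^*$, and well defined because $\rho\in\mc{L}^1(\HH)$ pairs with the bounded operator $a$. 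Hence $d_D(\varphi_1,\varphi_2)=\sup\big\{\inner{\rho,a}_{\mathrm{Tr}}:a\in\A^{\mathrm{sa}},\,\|[D,a]\|\leq 1\big\}$.

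Next I would exploit that $\rho\in\A^{\mathrm{sa}}$ with $\|\rho\|_{\mathrm{Tr}}^2=\inner{\rho,\rho}_{\mathrm{Tr}}\neq 0$, the latter holding because $\varphi_1\neq\varphi_2$ forces $\rho\neq 0$ (and $\rho\in\mc{L}^1(\HH)\subset\mc{L}^2(\HH)$). This yields an orthogonal splitting $\A^{\mathrm{sa}}=\R\rho\oplus V_\rho$: every $a\in\A^{\mathrm{sa}}$ is written uniquely as $a=\lambda\rho+b$ with $\lambda=\inner{\rho,a}_{\mathrm{Tr}}/\|\rho\|_{\mathrm{Tr}}^2\in\R$ and $b\in V_\rho$. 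The objective then becomes $\inner{\rho,a}_{\mathrm{Tr}}=\lambda\|\rho\|_{\mathrm{Tr}}^2$, while the constraint reads $\|\lambda[D,\rho]+[D,b]\|\leq 1$. For $\lambda\neq 0$ I set $c:=\lambda^{-1}b\in V_\rho$, so that $[D,a]=\lambda[D,\rho+c]$ and the constraint is $|\lambda|\,\|[D,\rho+c]\|\leq 1$; since $\|\rho\|_{\mathrm{Tr}}^2>0$ the value is largest for $\lambda>0$, and maximizing $\lambda$ amounts to minimizing $\|[D,\rho+c]\|$ over $c\in V_\rho$, whose infimum is exactly $L(\rho)$.

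Carrying out the optimization yields the dichotomy. If $L(\rho)>0$, then for admissible $a$ with $\lambda>0$ one gets $\inner{\rho,a}_{\mathrm{Tr}}=\lambda\|\rho\|_{\mathrm{Tr}}^2\leq\|\rho\|_{\mathrm{Tr}}^2/\|[D,\rho+c]\|\leq\|\rho\|_{\mathrm{Tr}}^2/L(\rho)$, while $\lambda\leq 0$ gives a non-positive (hence non-maximal) value, so $d_D\leq L(\rho)^{-1}\|\rho\|_{\mathrm{Tr}}^2$; choosing $c_n\in V_\rho$ with $\|[D,\rho+c_n]\|\to L(\rho)$ and $a_n=\|[D,\rho+c_n]\|^{-1}(\rho+c_n)$ gives $\|[D,a_n]\|=1$ and $\inner{\rho,a_n}_{\mathrm{Tr}}\to L(\rho)^{-1}\|\rho\|_{\mathrm{Tr}}^2$, proving equality. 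If instead $L(\rho)=0$, the same family $a_n$ is still admissible but now $\inner{\rho,a_n}_{\mathrm{Tr}}\to\infty$, so $d_D=\infty$.

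I expect the difficulty here to be bookkeeping rather than conceptual: checking that $\rho$ has finite nonzero Hilbert--Schmidt norm so that the splitting $\A^{\mathrm{sa}}=\R\rho\oplus V_\rho$ is legitimate, that every commutator involved is bounded (guaranteed by $\rho,b\in\A$ and the spectral triple axioms), and---most importantly---that neither the infimum defining $L(\rho)$ nor the supremum defining $d_D$ need be attained, which is why I would argue throughout with minimizing and maximizing \emph{sequences} rather than with optimizers. The degenerate subcase $\|[D,\rho+c]\|=0$ for some $c\in V_\rho$ is subsumed in $L(\rho)=0$ and disposed of by the same scaling argument.
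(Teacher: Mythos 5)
Your proof is correct and takes essentially the same route as the paper: rewrite $\varphi_1(a)-\varphi_2(a)$ as the trace pairing $\inner{\rho,a}_{\mathrm{Tr}}$, decompose $a=\lambda(\rho+c)$ with $c\in V_\rho$, and reduce the supremum to the infimum $L(\rho)$. The only difference is organizational: the paper disposes of the vanishing-commutator case (your ``degenerate subcase'') first, as its case (i), whereas you treat it last, and your systematic use of maximizing/minimizing sequences just makes explicit the non-attainment issues the paper leaves implicit.
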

\begin{proof}
Let $W:=\{a\in\A^{\mathrm{sa}}:a\notin V_\rho\}$.
We distinguish between two cases:
\begin{itemize}\itemsep=2pt
\item[i)] $\exists\;a_0\in W$ such that $[D,a_0]=0$;
\item[ii)] $[D,a]\neq 0$ for all $a\in W$.
\end{itemize}
In the first case, any $a=\lambda a_0$ satisfies $\|[D,a]\|\leq 1$ and:
$$
d_D(\varphi_1,\varphi_2) \geq \lambda\inner{\rho,a_0}_{\mathrm{Tr}} \qquad\forall\;\lambda\in\R\;.
$$
Since $\inner{\rho,a_0}_{\mathrm{Tr}}\neq 0$ by definition of $W$, we get $d_D(\varphi_1,\varphi_2)=\infty$. Moreover, called
$$
b=ta_0-\rho \;,\qquad t:=\frac{\|\rho\|_{\mathrm{Tr}}^2}{\inner{\rho,a_0}_{\mathrm{Tr}}} \;,
$$
one checks that $b\in V_\rho$, and $[D,\rho+b]=t[D,a_0]=0$. Therefore, $L(\rho)=0$. Thus, the proposition is proved in case (i). Now we pass to case (ii).

\smallskip

Since $\inner{\rho,a}_{\mathrm{Tr}}=0$ for $a\in V_\rho$, we can write
$$
d_D(\varphi_1,\varphi_2) =
\sup_{a\in W}\frac{ \inner{\rho,a}_{\mathrm{Tr}} }{ \|[D,a]\| } \;,
$$
where by hypothesis the denominator is not zero.

Any $a\in\A^{\mathrm{sa}}$, $a\notin V_\rho$, can be written in a unique way as
$$
a=\lambda(\rho+b) \;,
$$
with $\lambda\neq 0$ and $b\in V_\rho$
(take $\lambda:=\inner{\rho,a}_{\mathrm{Tr}}/\|\rho\|_{\mathrm{Tr}}^2$, $b:=\lambda^{-1}a-\rho$, and check that $b\in V_\rho$). Thus:
$$
d_D(\varphi_1,\varphi_2) =\sup_{\lambda\neq 0,b\in V_\rho}
\frac{\lambda\|\rho\|_{\mathrm{Tr}}^2}{\|\lambda[D,\rho+b]\|}
=\sup_{b\in V_\rho}
\frac{\|\rho\|_{\mathrm{Tr}}^2}{\|[D,\rho+b]\|} \;.
$$
Now the conclusion is obvious: either the sup is infinite, when
the inf of the denominator is zero, or $L(\rho)\neq 0$ and
$d_D(\varphi_1,\varphi_2)=L(\rho)^{-1}\|\rho\|_{\mathrm{Tr}}^2$.
\end{proof}

\begin{rem}
In \cite[Eq.~(38)]{SC13}, the authors give a formula similar to \eqref{eq:drho}, with $L$ replaced by $\|[D,\rho]\|$. They claim that
$d_D(\varphi_1,\varphi_2)$ is equal to:
$$
\|\rho\|_{\mathrm{Tr}}^2 / \|[D,\rho]\| \;.
$$
This statement unfortunately is wrong.
A counterexample is in Prop.~3.10 of \cite{CDMW09}: we have normal pure states at infinite distance, despite the fact that $[D,\rho]$ is not zero (only scalar multiples of the identity commute with the Dirac operator of \cite{CDMW09}).
\end{rem}

\begin{rem}\label{rem:2.7}
Let $\A=M_n(\C)$ be naturally represented on $\HH=\C^n\otimes\C^k$, with $n,k\geq 1$.
In this case states and density matrices in $\A$ are in bijection,
and Prop.~\ref{prop:drho} gives the distance between arbitrary states.

Assume further that $[D,a]=0$ only if $a$ is proportional to the identity (``connectedness'' condition); note that this forces $k\geq 2$, otherwise $D\in\A$ commutes with $a=D$. In such a case it is well known that the distance is finite (see e.g.~Prop.~4.2 of \cite{CDMW09}), and then given by \eqref{eq:drho} for any pair of states.
\end{rem}

One should stress that, even if \eqref{eq:drho} may look simpler than \eqref{eq:spectrald}, it is not.
We just traded a sup for an inf in the definition of $L(\rho)$.

\section{Products of spectral triples}\label{sec:prodsp}
The product $(\A,\HH,D)$ of two spectral triples $(\A_1,\HH_1,D_1,\gamma_1)$ and $(\A_2,\HH_2,D_2)$, is defined as
\begin{equation}\label{eq:prod}
\A=\A_1\odot\A_2 \;,\qquad
\HH=\HH_1\otimes\HH_2 \;,\qquad
D=D_1\otimes 1+\gamma_1\otimes D_2 \;,
\end{equation}
where $\odot$ is the algebraic tensor product.

On the set $\mc{S}(\A_1)\times\mc{S}(\A_2)$ 
of product states of $\A$, two extended metrics are defined: the spectral distance $d_D$ and the product distance $d_{D_1}\!\boxtimes d_{D_2}$. The purpose of this section is to discuss the relation between these two.

In the Euclidean space, Pythagoras equality is a criterion that can be used to decide whether two intersecting lines are orthogonal or not. This motivates the next definition.

\begin{df}\label{def:2.2}
If
$$
d_D(\varphi,\psi)=d_{D_1}\!\boxtimes d_{D_2}\,(\varphi,\psi)
$$
we will say that the states $\varphi,\psi$ satisfy \emph{Pythagoras equality}. If such equality holds for all pure states, we will say that the product is \emph{orthogonal}.
\end{df}

The main result of \cite{DM13} is that for arbitrary unital spectral triples the distances are equivalent (although not necessarily equal), and more precisely:

\begin{thm}\label{thm:2.3}
For all product states $\varphi,\psi$:
$$
d_{D_1}\!\boxtimes d_{D_2}(\varphi,\psi)
\!\stackrel{(i)}{\rule{0pt}{8pt}\leq\rule{0pt}{0pt}}\!
d_D(\varphi,\psi)
\!\stackrel{(ii)}{\rule{0pt}{8pt}\leq\rule{0pt}{0pt}}\!\!
\sqrt{2}\,d_{D_1}\!\boxtimes d_{D_2}(\varphi,\psi)
$$
\end{thm}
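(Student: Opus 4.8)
The plan is to prove the two inequalities separately, the engine in both cases being the decomposition of the commutator $[D,a]$ into the two ``partial'' commutators $\partial_1 a:=[D_1\otimes 1,a]$ and $\partial_2 a:=[\gamma_1\otimes D_2,a]$, so that $[D,a]=\partial_1 a+\partial_2 a$. The selfadjoint unitary $G:=\gamma_1\otimes 1$ commutes with $\A$ and with $\gamma_1\otimes D_2$ but anticommutes with $D_1\otimes 1$; consequently $G$ anticommutes with $\partial_1 a$ and commutes with $\partial_2 a$ for every $a\in\A$. Writing $\partial_1 a=\tfrac12([D,a]-G[D,a]G)$ and $\partial_2 a=\tfrac12([D,a]+G[D,a]G)$ and using $\|G[D,a]G\|=\|[D,a]\|$, I immediately obtain the basic estimates $\|\partial_1 a\|\le\|[D,a]\|$ and $\|\partial_2 a\|\le\|[D,a]\|$, which drive the upper bound.

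For inequality (ii) I would fix $a\in\A^{\mathrm{sa}}$ with $\|[D,a]\|\le 1$ and insert the intermediate product state $\chi:=\psi_1\otimes\varphi_2$, so that $\varphi(a)-\psi(a)=[\varphi(a)-\chi(a)]+[\chi(a)-\psi(a)]$. Applying the slice maps $\id\otimes\varphi_2$ and $\psi_1\otimes\id$ (unital completely positive, hence norm contractions) produces $b:=(\id\otimes\varphi_2)(a)\in\A_1^{\mathrm{sa}}$ and $c:=(\psi_1\otimes\id)(a)\in\A_2^{\mathrm{sa}}$ with $\varphi(a)-\chi(a)=\varphi_1(b)-\psi_1(b)$ and $\chi(a)-\psi(a)=\varphi_2(c)-\psi_2(c)$. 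Since the slice maps intertwine the derivations, $[D_1,b]=(\id\otimes\varphi_2)(\partial_1 a)$ and $[D_2,c]=(\psi_1\otimes\id)([1\otimes D_2,a])$, while $[1\otimes D_2,a]=G\,\partial_2 a$ has the same norm as $\partial_2 a$; hence $\|[D_1,b]\|\le\|\partial_1 a\|\le 1$ and $\|[D_2,c]\|\le\|\partial_2 a\|\le 1$. By definition of the spectral distances this gives $\varphi(a)-\psi(a)\le d_{D_1}(\varphi_1,\psi_1)+d_{D_2}(\varphi_2,\psi_2)$, and the elementary inequality $\|v\|_1\le\sqrt2\,\|v\|_2$ on $\R^2$ turns the $\ell^1$-sum into $\sqrt2\,d_{D_1}\!\boxtimes d_{D_2}(\varphi,\psi)$; taking the supremum over $a$ finishes (ii). Note this step needs no unitality.

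For inequality (i) I would use unitality to test the supremum defining $d_D$ with the ansatz $a=\alpha\,a_1\otimes 1+\beta\,1\otimes a_2$, where $a_i\in\A_i^{\mathrm{sa}}$ are (nearly) optimal for $d_{D_i}$ with $\|[D_i,a_i]\|\le 1$, and $\alpha,\beta\in\R$ satisfy $\alpha^2+\beta^2=1$. A direct computation gives $[D,a]=\alpha S_1+\beta S_2$ with $S_1:=[D_1,a_1]\otimes 1$ and $S_2:=\gamma_1\otimes[D_2,a_2]$. The grading relation $\gamma_1[D_1,a_1]=-[D_1,a_1]\gamma_1$ shows that $S_1$ and $S_2$ anticommute, so $[D,a]^2=\alpha^2 S_1^2+\beta^2 S_2^2$; as $S_1^*S_1=[D_1,a_1]^*[D_1,a_1]\otimes 1$ and $S_2^*S_2=1\otimes[D_2,a_2]^*[D_2,a_2]$ are commuting positive operators acting on different tensor legs, $\|[D,a]\|^2=\alpha^2\|[D_1,a_1]\|^2+\beta^2\|[D_2,a_2]\|^2\le\alpha^2+\beta^2=1$. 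On product states $\varphi(a)-\psi(a)=\alpha(\varphi_1(a_1)-\psi_1(a_1))+\beta(\varphi_2(a_2)-\psi_2(a_2))$, and optimizing over $a_1,a_2$ and over the unit circle $\alpha^2+\beta^2=1$ yields $d_D(\varphi,\psi)\ge\sqrt{d_{D_1}(\varphi_1,\psi_1)^2+d_{D_2}(\varphi_2,\psi_2)^2}=d_{D_1}\!\boxtimes d_{D_2}(\varphi,\psi)$ (the degenerate cases $d_{D_i}=\infty$ being handled by $\alpha=1$ or $\beta=1$).

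The main obstacle I expect is the lower bound: everything hinges on recognizing that the grading $\gamma_1$ forces the two component commutators to anticommute and to live on different tensor factors, which is precisely what collapses $\|\alpha S_1+\beta S_2\|^2$ to the Pythagorean sum $\alpha^2\|S_1\|^2+\beta^2\|S_2\|^2$ with no cross term. The upper bound is comparatively soft once the contractivity of the slice maps and the estimates $\|\partial_i a\|\le\|[D,a]\|$ are in hand; the only delicate points there are the bookkeeping for possibly infinite distances and the verification that the slice maps genuinely intertwine $[D_i,\cdot\,]$ with the partial commutators $\partial_i$.
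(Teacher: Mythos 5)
Your proposal is correct and follows essentially the same route as the paper: for (ii) you pass through the intermediate product state $\psi_1\otimes\varphi_2$ and contract with slice maps, which is exactly Lemma~\ref{lemma:2.4} combined with the triangle inequality and Cauchy--Schwarz, while for (i) you test the supremum defining $d_D$ on selfadjoint elements of $\A_1+\A_2$ and use the anticommutation of $[D_1,a_1]\otimes 1$ with $\gamma_1\otimes[D_2,a_2]$ to obtain the Pythagorean norm identity \eqref{eq:lemma1} and then optimize by Cauchy--Schwarz, which is precisely the paper's auxiliary semi-metric $d_D^\times$ of \S\ref{sec:aux} together with Prop.~\ref{prop:3}. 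The one point where you go beyond the text is your self-contained proof of the key estimates $\|\partial_i a\|\le\|[D,a]\|$ for the partial commutators via the symmetrization $\partial_1 a=\tfrac12\big([D,a]-G[D,a]G\big)$ with $G=\gamma_1\otimes 1$; the paper imports this bound from Lemma 9 and Corollary 11 of \cite{DM13} without repeating the proof.
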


\noindent
The inequality (ii) is in fact valid also when the spectral triples are non-unital, and it is a consequence of the following basic property: taking a product of spectral triples doesn't increase the horizontal resp.~vertical distance. That is:

\begin{lemma}\label{lemma:2.4}
For a product of arbitrary (not necessarily unital) spectral triples:
$$
d_D(\varphi,\psi_1\otimes\varphi_2)\leq d_{D_1}(\varphi_1,\psi_1)
\;\;\quad\text{and}\quad\;\;
d_D(\psi_1\otimes\varphi_2,\psi)\leq d_{D_2}(\varphi_2,\psi_2) \;,
$$
for all $\varphi=\varphi_1\otimes\varphi_2$ and $\psi=\psi_1\otimes\psi_2$.
\end{lemma}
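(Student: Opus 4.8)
The plan is to reduce any admissible test element $a\in\A^{\mathrm{sa}}$ of the product triple to an admissible test element of the relevant factor, by evaluating away the component in which the two states agree, and to check that this reduction increases neither the difference $\varphi(a)-\psi(a)$ nor the commutator norm. For the first inequality I would fix the common marginal $\varphi_2$ and set $b:=(\id\otimes\varphi_2)(a)\in\A_1^{\mathrm{sa}}$; writing $a=\sum_i a_{1,i}\otimes a_{2,i}$ one has $\varphi_1(b)=(\varphi_1\otimes\varphi_2)(a)$ and $\psi_1(b)=(\psi_1\otimes\varphi_2)(a)$, so that $(\varphi_1-\psi_1)(b)$ is exactly the quantity to be bounded. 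It then suffices to prove $\|[D_1,b]\|\le\|[D,a]\|$: if so, $b$ is admissible for $d_{D_1}$ whenever $a$ is admissible for $d_D$, and taking the supremum over $a$ gives $d_D(\varphi,\psi_1\otimes\varphi_2)\le d_{D_1}(\varphi_1,\psi_1)$.

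The heart of the argument is a norm estimate that splits $[D,a]$ into its two pieces. I would introduce the symmetry $G:=\gamma_1\otimes 1$, which is selfadjoint, satisfies $G^2=1$, and commutes with every element of $\A$ because $\gamma_1$ commutes with $\A_1$. A direct computation gives $GDG=-D_1\otimes 1+\gamma_1\otimes D_2$, and since $G$ commutes with $a$ one has $G[D,a]G=[GDG,a]$, whence $\tfrac{1}{2}\big([D,a]-G[D,a]G\big)=[D_1\otimes 1,a]$ and $\tfrac{1}{2}\big([D,a]+G[D,a]G\big)=[\gamma_1\otimes D_2,a]$. As $\|GXG\|=\|X\|$, both pieces are bounded in norm by $\|[D,a]\|$. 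The slice map $\id\otimes\varphi_2$ is unital completely positive, hence a norm contraction, and it intertwines commutators in the sense that $[D_1,b]=(\id\otimes\varphi_2)\big([D_1\otimes 1,a]\big)$; combining, $\|[D_1,b]\|\le\|[D_1\otimes 1,a]\|\le\|[D,a]\|$, as needed.

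For the second inequality I would fix $\psi_1$ and set $c:=(\psi_1\otimes\id)(a)\in\A_2^{\mathrm{sa}}$, so that $(\varphi_2-\psi_2)(c)=(\psi_1\otimes\varphi_2-\psi_1\otimes\psi_2)(a)$. Here the asymmetric role of the grading has to be handled: $D$ contains $\gamma_1\otimes D_2$ rather than $1\otimes D_2$. The key observation is that $[\gamma_1\otimes D_2,a]=(\gamma_1\otimes 1)[1\otimes D_2,a]$, again because $\gamma_1$ commutes with $\A_1$, so that $\|[1\otimes D_2,a]\|=\|[\gamma_1\otimes D_2,a]\|\le\|[D,a]\|$ by the same $G$-conjugation estimate. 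Since $[D_2,c]=(\psi_1\otimes\id)\big([1\otimes D_2,a]\big)$ and $\psi_1\otimes\id$ is again contractive, $\|[D_2,c]\|\le\|[D,a]\|\le 1$, giving $(\varphi_2-\psi_2)(c)\le d_{D_2}(\varphi_2,\psi_2)$ and hence the second bound after taking the supremum.

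The point I expect to require the most care — and the only place where (non)unitality and normality could enter — is keeping every object inside the algebraic tensor product so that the partial evaluations are unambiguous. For $a\in\A_1\odot\A_2$ one has $[D_1\otimes 1,a]\in\B(\HH_1)\odot\A_2$ and $[1\otimes D_2,a]\in\A_1\odot\B(\HH_2)$, so $\varphi_2$ and $\psi_1$ are applied only to elements of $\A_2$ and $\A_1$ respectively, and the resulting $b,c$ automatically lie in $\A_1,\A_2$; no unit and no normality of the states is invoked. The technical input I rely on is the contractivity of the slice maps on these algebraic tensors, and it is precisely this, together with the $G$-conjugation splitting, that makes the product operation non-expanding on each factor's distance.
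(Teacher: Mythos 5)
Your proposal is correct and follows essentially the same route as the paper: reduce via the slice map $a\mapsto(\id\otimes\varphi_2)(a)$, observe that this preserves the difference of the two states, bound the commutator norm, and then majorize the supremum over the image of this map by the supremum over all of $\A_1^{\mathrm{sa}}$. The only difference is that the paper imports the key estimate $\|[D_1,a_1]\|\le\|[D,a]\|$ from Lemma~9 and Corollary~11 of \cite{DM13} without repeating the proof, whereas you establish it yourself via the conjugation splitting by $G=\gamma_1\otimes 1$ together with contractivity of the slice maps, so your argument is self-contained (and correct, modulo the harmless slip of calling the slice map ``unital'' in the possibly non-unital setting, where completely positive and contractive is what you actually use).
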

\begin{proof}
For $a\in\A^{\mathrm{sa}}$ denote $a_1:=(\id\otimes\varphi_2)(a) \in\A_1^{\mathrm{sa}}$ and observe that
\begin{equation}\label{eq:spoilmarg}
\varphi(a)-(\psi_1\otimes\varphi_2)(a)=\varphi_1(a_1)-\psi_1(a_1) \;.
\end{equation}
From Lemma 9 and Corollary 11 of \cite{DM13} (I will not repeat the proof here):
$$
\|[D_1,a_1]\| \leq \|[D,a]\| \;.
$$
(There is a typo in Eq.~(18) of \cite{DM13}: there is no square in the last norm.)

Thus
\begin{align*}
d_D(\varphi,\psi_1\otimes\varphi_2)
&=\sup_{a\in\A^{\mathrm{sa}}}\big\{\varphi_1(a_1)-\psi_1(a_1):\|[D,a]\|\leq 1\big\} \\
&\leq\sup_{a\in\A^{\mathrm{sa}}}\big\{\varphi_1(a_1)-\psi_1(a_1):\|[D_1,a_1]\|\leq 1\big\}
\intertext{and this can be majorated by considering all $a_1\in\A_1^{\mathrm{sa}}$ rather than only those in the image of the map $\A^{\mathrm{sa}}\to\A_1^{\mathrm{sa}}$ above, and we get}
&\leq\sup_{a_1\in\A_1^{\mathrm{sa}}}\big\{\varphi_1(a_1)-\psi_1(a_1):\|[D_1,a_1]\|\leq 1\big\} = d_{D_1}(\varphi_1,\psi_1) \;.
\end{align*}
This proves the first inequality of Lemma \ref{lemma:2.4},
the other being similar.
\end{proof}

\begin{proof}[Proof of Theorem \ref{thm:2.3}\textnormal{(ii)}]
Now, call $a:=d_D(\varphi,\psi)$, $b:=d_D(\varphi,\psi_1\otimes\varphi_2)$ and $c:=d_D(\psi_1\otimes\varphi_2,\psi)$ the lengths of the edges of the triangle in Fig.~\ref{fig:2}. From the triangle inequality and Cauchy-Schwarz in $\R^2$:
$$
a\leq b+c=(1,1)\cdot (b,c)\leq\sqrt{2}\sqrt{b^2+c^2} \;.
$$
From this and Lemma \ref{lemma:2.4}, we get the 
inequality (ii) of Theorem \ref{thm:2.3}.
\end{proof}

In the unital case, (i) implies that the inequalities in Lemma \ref{lemma:2.4} are in fact equalities: so, the product doesn't change the horizontal resp.~vertical distance. In the non-unital case, on the other hand, there are simple counterexamples \cite[\S6]{DM13}.

\medskip

One may wonder if Theorem \ref{thm:2.3} can be generalized to arbitrary states, i.e.~if
$$
\quad
d_{D_1}\!\boxtimes d_{D_2}(\varphi^\flat,\psi^\flat)
\leq d_D(\varphi,\psi)\leq
\sqrt{2}\,d_{D_1}\!\boxtimes d_{D_2}(\varphi^\flat,\psi^\flat)
\quad ?
$$
for all $\varphi,\psi\in\mc{S}(\A)$, where
$\varphi^\flat:=\varphi_1^\flat\otimes\varphi_2^\flat$ and $\psi^\flat:=\psi_1^\flat\otimes\psi_2^\flat$ are defined using marginals.
It is easy to convince one-self that the inequality on the right can't always be true: take two different states with the same marginals, $\varphi\neq\psi$ but $\varphi^\flat=\psi^\flat$; then the product distance is zero, but $d_D(\varphi,\psi)\neq 0$, and we get a contradiction.  What fails in the proof of Lemma \ref{lemma:2.4} is the equality \eqref{eq:spoilmarg}, which holds only for product states.
On the other hand, I will show that the inequality on the left is valid for arbitrary states and unital spectral triples.

\subsection{An auxiliary extended semi-metric}\label{sec:aux}
From now on, I will assume we have a product of two \emph{unital} spectral triples (counterexamples to Theorem \ref{thm:2.3}(i) in the non-unital case can be found in \cite[\S6]{DM13}).
Consider the vector space sum:
\begin{equation}\label{eq:V}
\A_1+\A_2:=\big\{a=a_1\otimes 1+1\otimes a_2\,\big|\,a_1\in\A_1,a_2\in\A_2 \big\} \;,
\end{equation}
where we identify $\A_1$ with $\A_1\otimes 1$ and $\A_2$ with $1\otimes\A_2$.

An extended semi-metric $d^\times_D$ on $\mc{S}(\A)$ is given by
$$
d^\times_D(\varphi,\psi)=\sup_{a\in (\A_1+\A_2)^{\mathrm{sa}}}\big\{\varphi(a)-\psi(a):\|[D,a]\|\leq 1\big\}\;.
$$
Since $(\A_1+\A_2)^{\mathrm{sa}}$ is a subset of $\A^{\mathrm{sa}}$, 
we have the obvious inequality
\begin{equation}\label{eq:ineqV}
d^\times_D (\varphi,\psi)\leq d_D(\varphi,\psi) \;,
\end{equation}
for all $\varphi,\psi\in\mc{S}(\A)$.
The next one is Lemma 8 of \cite{DM13}, of which I will give a shorter proof.

\begin{lemma}
For any selfadjoint $a_1\in\A_1$ and $a_2\in\A_2$:
\begin{equation}\label{eq:lemma1}
\|[D_1,a_1]\|^2+\|[D_2,a_2]\|^2=\|[D,a_1\otimes 1+1\otimes a_2]\|^2 \;.
\end{equation}
\end{lemma}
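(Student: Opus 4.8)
The plan is to expand the commutator on the right-hand side explicitly, using the defining properties of the grading $\gamma_1$, and then to reduce the computation of its operator norm to the norm of a sum of two commuting positive operators. First I would use that $\gamma_1$ commutes with $\A_1$ and anticommutes with $D_1$, together with $[D_1,1]=0$. Writing $B_i:=[D_i,a_i]$ for $i=1,2$, the terms proportional to $[\gamma_1,a_1]$ vanish and one is left with
\[
[D,a_1\otimes 1+1\otimes a_2]=B_1\otimes 1+\gamma_1\otimes B_2=:T \;.
\]
Since $a_1,a_2,D_1,D_2$ are selfadjoint, each $B_i$ is skew-adjoint ($B_i^*=-B_i$), and because $\gamma_1=\gamma_1^*$ one finds $T^*=-T$. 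Hence $T^*T=-T^2$ and $\|T\|^2=\|T^*T\|=\|T^2\|$, so it suffices to understand $T^2$.

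The conceptual heart of the argument is the computation of $T^2$. Expanding, the two off-diagonal contributions are $B_1\gamma_1\otimes B_2$ and $\gamma_1 B_1\otimes B_2$, and they cancel exactly because $\gamma_1$ anticommutes with $B_1$: from $\gamma_1 a_1=a_1\gamma_1$ and $\gamma_1 D_1=-D_1\gamma_1$ one gets $\gamma_1 B_1=-B_1\gamma_1$. Using $\gamma_1^2=1$ the diagonal terms then give
\[
T^2=B_1^2\otimes 1+1\otimes B_2^2 \;.
\]
Setting $P_i:=B_i^*B_i=-B_i^2\geq 0$, so that $\|P_i\|=\|B_i\|^2=\|[D_i,a_i]\|^2$, we obtain
\[
\|T\|^2=\|T^2\|=\|P_1\otimes 1+1\otimes P_2\| \;,
\]
a norm of a sum of two commuting positive operators on $\HH_1\otimes\HH_2$.

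It remains to prove that this norm is additive, i.e.\ $\|P_1\otimes 1+1\otimes P_2\|=\|P_1\|+\|P_2\|$; this would immediately yield \eqref{eq:lemma1}. The upper bound is the triangle inequality. For the matching lower bound I would argue by approximate eigenvectors: choosing unit vectors $\xi_n\in\HH_1$, $\eta_n\in\HH_2$ with $\langle\xi_n,P_1\xi_n\rangle\to\|P_1\|$ and $\langle\eta_n,P_2\eta_n\rangle\to\|P_2\|$, and testing the positive operator $P_1\otimes 1+1\otimes P_2$ on $\xi_n\otimes\eta_n$, the expectation tends to $\|P_1\|+\|P_2\|$, whence the norm is at least this value. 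This final passage — from the purely algebraic identity for $T^2$ to the additivity of the norm on the tensor sum — is the only genuinely analytic step and is where I expect the real care to be needed; everything preceding it is bookkeeping with the (anti)commutation relations, the essential point being the cancellation of cross terms forced by the grading.
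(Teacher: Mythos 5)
Your proof is correct and takes essentially the same route as the paper's: both reduce the claim, via the cross-term cancellation forced by the grading, to the identity $\|P_1\otimes 1+1\otimes P_2\|=\|P_1\|+\|P_2\|$ for positive operators, established by the triangle inequality in one direction and by testing the expectation on product unit vectors in the other (your approximate-eigenvector sequence is just a sequential phrasing of the paper's supremum over $v_1\otimes v_2$). The only difference is expository: you spell out the algebraic cancellation and the skew-adjointness/$C^*$-identity steps that the paper leaves implicit.
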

\begin{proof}
Consider the positive operators $A:=-[D_1,a_1]^2$, $B:=-[D_2,a_2]^2$, and
note that $$C:=-[D,a_1\otimes 1+1\otimes a_2]^2=A\otimes 1+1\otimes B\;.$$
From the triangle inequality, we get $\|C\|\leq\|A\|+\|B\|$. We want to prove the opposite inequality.
We achieve this by writing the left hand side in \eqref{eq:lemma1} as:
\begin{align*}
\text{l.h.s.} &=
\sup_{\substack{v_1\in\HH_1,v_2\in\HH_2\\[2pt] \|v_1\|=\|v_2\|=1}}
\big\{ \inner{v_1,Av_1}+\inner{v_2,Bv_2} \big\}
=\sup_{\substack{v_1\in\HH_1,v_2\in\HH_2\\[2pt] \|v_1\|=\|v_2\|=1}}
\inner{v_1\otimes v_2,C(v_1\otimes v_2)}
\\
& \leq \sup_{v\in\HH:\|v\|=1}\inner{v,Cv}=\|C\|=\text{r.h.s.} \;,
\end{align*}
where last inequality comes from considering all unit vectors in $\HH$,
rather than only homogeneous tensors $v_1\otimes v_2$.
\end{proof}

\begin{prop}\label{prop:3}
For any $\varphi,\psi$ one has
$$
d_D^\times(\varphi,\psi)=d_{D_1}\!\boxtimes d_{D_2}(\varphi^\flat,\psi^\flat) \;,
$$
where, as before, $\varphi^\flat_1$ and $\varphi^\flat_2$ are the marginals of $\varphi$
and $\varphi^\flat=\varphi_1^\flat\otimes\varphi_2^\flat$.
\end{prop}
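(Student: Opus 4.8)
The plan is to reduce the defining supremum of $d_D^\times$ to a two-dimensional constrained optimization and then close it with Cauchy--Schwarz. A generic selfadjoint element of $\A_1+\A_2$ can be written (after absorbing an imaginary scalar into the two summands) as $a=a_1\otimes 1+1\otimes a_2$ with $a_1\in\A_1^{\mathrm{sa}}$ and $a_2\in\A_2^{\mathrm{sa}}$, so that the Lemma \eqref{eq:lemma1} is available. First I would split the objective functional using the very definition of the marginals, $\varphi(a_1\otimes 1)=\varphi_1^\flat(a_1)$ and $\varphi(1\otimes a_2)=\varphi_2^\flat(a_2)$, to obtain
$$
\varphi(a)-\psi(a)=\big(\varphi_1^\flat(a_1)-\psi_1^\flat(a_1)\big)+\big(\varphi_2^\flat(a_2)-\psi_2^\flat(a_2)\big)\;,
$$
while \eqref{eq:lemma1} turns the constraint $\|[D,a]\|\leq 1$ into $\|[D_1,a_1]\|^2+\|[D_2,a_2]\|^2\leq 1$. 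Thus $d_D^\times(\varphi,\psi)$ is the supremum of a sum of two decoupled linear functionals of $a_1$ and $a_2$, coupled only through the radii $t_i:=\|[D_i,a_i]\|$.

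Next I would exploit positive homogeneity. Writing $b_i:=d_{D_i}(\varphi_i^\flat,\psi_i^\flat)$, the rescaling $a_i\mapsto a_i/t_i$ and the definition \eqref{eq:spectrald} give $\sup\{\varphi_i^\flat(a_i)-\psi_i^\flat(a_i):\|[D_i,a_i]\|\leq t_i\}=t_i b_i$ for $t_i>0$. The two-variable supremum therefore factors through the radii, reducing everything to
$$
d_D^\times(\varphi,\psi)=\sup\big\{\,t_1b_1+t_2b_2:t_1,t_2\geq 0,\ t_1^2+t_2^2\leq 1\,\big\}\;.
$$
Cauchy--Schwarz in $\R^2$ yields $t_1b_1+t_2b_2\leq\sqrt{t_1^2+t_2^2}\,\sqrt{b_1^2+b_2^2}\leq\sqrt{b_1^2+b_2^2}$, with equality in the direction $(t_1,t_2)\propto(b_1,b_2)$; hence the supremum equals $\sqrt{b_1^2+b_2^2}=d_{D_1}\!\boxtimes d_{D_2}(\varphi^\flat,\psi^\flat)$. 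For the matching lower bound I would make attainment explicit: for $\varepsilon>0$ pick $a_i$ with $\|[D_i,a_i]\|\leq 1$ and $\varphi_i^\flat(a_i)-\psi_i^\flat(a_i)\geq b_i-\varepsilon$, feed the rescaled pair $(t_1a_1,t_2a_2)$ with $(t_1,t_2)=(b_1,b_2)/\sqrt{b_1^2+b_2^2}$ into the definition of $d_D^\times$, and let $\varepsilon\to 0$.

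The one point requiring care — and the \emph{main obstacle} — is the treatment of infinite distances, where the expression $t_ib_i$ conceals a potential $0\cdot\infty$. I would dispose of the infinite case first: if either $b_i=+\infty$, then taking $t_i=1$ and the other radius $0$ already forces $d_D^\times(\varphi,\psi)=+\infty=\sqrt{b_1^2+b_2^2}$, so both sides agree. Once $b_1,b_2<\infty$ is assumed, the functional $a_i\mapsto\varphi_i^\flat(a_i)-\psi_i^\flat(a_i)$ must vanish identically on the subspace $\{a_i:[D_i,a_i]=0\}$ (otherwise scaling would make $b_i$ infinite), so the boundary radius $t_i=0$ contributes nothing anomalous and the homogeneity argument goes through verbatim.
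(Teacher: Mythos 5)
Your proof is correct and follows essentially the same route as the paper's: split the functional via the marginals, convert the constraint using the norm identity \eqref{eq:lemma1}, decouple the two suprema through the radii on the unit circle by positive homogeneity, and conclude with Cauchy--Schwarz in $\R^2$. The only difference is that you spell out the edge cases (an infinite $d_{D_i}$, and the $t_i=0$ boundary where rescaling breaks down) which the paper delegates to \cite[Lemma 7]{DM13}.
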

\begin{proof}
For any $a=a_1\otimes 1+1\otimes a_2\in (\A_1+\A_2)^{\mathrm{sa}}$,
$$
\varphi(a)-\psi(a)=\big\{\varphi_1^\flat(a_1)-\psi_1^\flat(a_1)\big\}+\big\{\varphi_2^\flat(a_2)-\psi_2^\flat(a_2)\big\} \;.
$$
From \eqref{eq:lemma1},
\begin{align*}
d_D^\times(\varphi,\psi) &=
\sup_{\substack{\alpha,\beta\geq 0\\[2pt] \alpha^2+\beta^2=1}}\left(
\sup_{a_1=a_1^*\in\A_1}\big\{\varphi_1^\flat(a_1)-\psi_1^\flat(a_1):\|[D,a_1]\|\leq\alpha\big\}
\right. \\[-8pt] & \hspace{3cm} \left.
+\sup_{a_2=a_2^*\in\A_2}\big\{\varphi_2^\flat(a_2)-\psi_2^\flat(a_2):\|[D,a_2]\|\leq\beta\big\}
\right) \\[3pt]
=\sup_{\substack{\alpha,\beta\geq 0\\[2pt] \alpha^2+\beta^2=1}} &
\left(\alpha\,d_{D_1}(\varphi_1^\flat,\psi_1^\flat)+\beta\,d_{D_2}(\varphi_2^\flat,\psi_2^\flat)\right)
=\sqrt{d_{D_1}(\varphi_1^\flat,\psi_1^\flat)^2+d_{D_2}(\varphi_2^\flat,\psi_2^\flat)^2} \;,
\end{align*}
where last equality comes from Cauchy-Schwarz inequality (cf.~\cite[Lemma 7]{DM13} for the details).
\end{proof}

As a corollary, from last proposition and \eqref{eq:ineqV}, we get
\begin{equation}\label{eq:ineqDtimes}
d_D(\varphi,\psi)\geq d_{D_1}\!\boxtimes d_{D_2}(\varphi^\flat,\psi^\flat) 
\end{equation}
for arbitrary states.
Clearly, since the product distance depends only on the marginals, for $\varphi\neq\psi$ with the same marginals the right hand side of \eqref{eq:ineqDtimes} is zero, while $d_D$ is not. Whatever the spectral triples are, \eqref{eq:ineqDtimes} cannot be an equality on arbitrary states, but it could be on the set of product states, where $d_D^\times$ is a proper extended metric (or on some smaller set).

\subsection{Product states}\label{sec:prodstates}
In this section we give a criterion (Prop.~\ref{prop:5}) to check whether $d_D$ is the product metric on product states. As an application, I will show in \S\ref{sec:C2} that in a product of two two-point spaces, Pythagoras equality holds for arbitrary product states.

For any $\varphi=\varphi_1\otimes\varphi_2$ and $\psi=\psi_1\otimes\psi_2$
we define a map $P_{\varphi,\psi}:\A\to\A$ by:
\begin{equation}\label{eq:Pvp}
P_{\varphi,\psi}:=\varphi_1^\sharp\otimes\id+\id\otimes\psi_2^\sharp-\varphi_1^\sharp\otimes\psi_2^\sharp \;.
\end{equation}
One can verify that $\id-P_{\varphi,\psi}$
(and then $P_{\varphi,\psi}$) is an idempotent by looking at the identity:
\begin{equation}\label{eq:onemP}
\id-P_{\varphi,\psi}=(\id-\varphi_1^\sharp)\otimes(\id-\psi_2^\sharp)
\end{equation}
and noting that $\varphi_1^\sharp,\psi_2^\sharp$ (and then $\id-\varphi_1^\sharp$ and $\id-\psi_2^\sharp$) are idempotents.

\begin{lemma}
We can decompose $\A$ as a direct sum (of vector spaces, not algebras):
\begin{equation}\label{eq:sum}
\A=\ran(P_{\varphi,\psi})\oplus\ker(P_{\varphi,\psi}) \;,
\end{equation}
where range and kernel of the idempotent are:
$$
\ran(P_{\varphi,\psi})=\A_1+\A_2 \;,\qquad
\ker(P_{\varphi,\psi})=\ker\varphi_1\otimes\ker\psi_2 \;,
$$
and $\A_1+\A_2$ is the set \eqref{eq:V}.
\end{lemma}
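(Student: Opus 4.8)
The plan is to treat the decomposition \eqref{eq:sum} as the standard splitting induced by an idempotent, and then identify the two summands by exploiting the factorized form \eqref{eq:onemP} of the complementary idempotent. Since we have already checked that $P_{\varphi,\psi}$ is idempotent, the abstract splitting $\A=\ran(P_{\varphi,\psi})\oplus\ker(P_{\varphi,\psi})$ is automatic: for any $a$ one writes $a=P_{\varphi,\psi}a+(a-P_{\varphi,\psi}a)$, where the first term lies in the range and the second in the kernel (because $P_{\varphi,\psi}(a-P_{\varphi,\psi}a)=0$ by idempotency), and the two subspaces meet only in $0$, since an element that is both fixed and annihilated by $P_{\varphi,\psi}$ must vanish. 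This same remark furnishes the identities $\ran(P_{\varphi,\psi})=\ker(\id-P_{\varphi,\psi})$ and $\ker(P_{\varphi,\psi})=\ran(\id-P_{\varphi,\psi})$, which reduce the whole statement to understanding $\id-P_{\varphi,\psi}$.

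For the kernel I would invoke \eqref{eq:onemP}, so that $\id-P_{\varphi,\psi}=(\id-\varphi_1^\sharp)\otimes(\id-\psi_2^\sharp)$. The factor $\id-\varphi_1^\sharp$ is the idempotent complementary to $\varphi_1^\sharp$, and a one-line check shows its range is exactly $\ker\varphi_1$: every $a-\varphi_1(a)\,1$ lies in $\ker\varphi_1$, and conversely every element of $\ker\varphi_1$ is fixed by $\id-\varphi_1^\sharp$. Likewise $\ran(\id-\psi_2^\sharp)=\ker\psi_2$. Since the range of a tensor product of idempotents is the algebraic tensor product of the individual ranges, I conclude $\ker(P_{\varphi,\psi})=\ran(\id-P_{\varphi,\psi})=\ker\varphi_1\otimes\ker\psi_2$.

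For the range I would compute $\ker(\id-P_{\varphi,\psi})$ by a double inclusion. One direction is immediate from the factorization: each factor kills the unit, $(\id-\varphi_1^\sharp)(1)=0$ and $(\id-\psi_2^\sharp)(1)=0$ (here one uses the normalization $\varphi_1(1)=\psi_2(1)=1$), so $\id-P_{\varphi,\psi}$ annihilates both $a_1\otimes 1$ and $1\otimes a_2$; hence $\A_1+\A_2\subseteq\ker(\id-P_{\varphi,\psi})=\ran(P_{\varphi,\psi})$. For the reverse inclusion, since $\A=\A_1\odot\A_2$ is spanned by simple tensors and $P_{\varphi,\psi}$ is linear, it suffices to observe that $P_{\varphi,\psi}(a_1\otimes a_2)=\varphi_1(a_1)(1\otimes a_2)+\psi_2(a_2)(a_1\otimes 1)-\varphi_1(a_1)\psi_2(a_2)(1\otimes 1)$ is visibly of the form $b_1\otimes 1+1\otimes b_2$, so $\ran(P_{\varphi,\psi})\subseteq\A_1+\A_2$.

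The computations are elementary; the only point deserving care is the assertion that the range of $(\id-\varphi_1^\sharp)\otimes(\id-\psi_2^\sharp)$ is $\ker\varphi_1\otimes\ker\psi_2$. This is where I would be explicit, noting that for idempotents $\pi_1,\pi_2$ the operator $\pi_1\otimes\pi_2$ is again idempotent with range spanned by the simple tensors $\pi_1 u\otimes\pi_2 v$, which is precisely the algebraic tensor product $\ran\pi_1\odot\ran\pi_2$. I expect no genuine obstacle beyond the bookkeeping of identifying the ranges and kernels of the two complementary idempotents correctly.
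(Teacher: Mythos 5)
Your proposal is correct and follows essentially the same route as the paper: both identify $\ker(P_{\varphi,\psi})$ as $\ran(\id-P_{\varphi,\psi})=\ker\varphi_1\otimes\ker\psi_2$ via the factorization \eqref{eq:onemP}, establish $\ran(P_{\varphi,\psi})=\A_1+\A_2$ by double inclusion (your observation that $\id-P_{\varphi,\psi}$ annihilates $\A_1+\A_2$ is just a rephrasing of the paper's check that $P_{\varphi,\psi}$ acts as the identity there), and obtain the splitting from the idempotent decomposition $a=P_{\varphi,\psi}(a)+(\id-P_{\varphi,\psi})(a)$. The only difference is that you spell out details the paper leaves as ``a simple check,'' such as the identification of the range of a tensor product of idempotents with the tensor product of their ranges.
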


\begin{proof}
Clearly $\ran(P_{\varphi,\psi})\subset\A_1+\A_2$.
But a simple check proves that $P_{\varphi,\psi}$ is the identity on $\A_1+\A_2$, hence $a=P_{\varphi,\psi}(a)\;\forall\;a\in\A_1+\A_2$ and the opposite inclusion $\A_1+\A_2\subset\ran(P_{\varphi,\psi})$ holds too, proving that the two sets coincide.
The range of $\id-\varphi_1^\sharp$ is the kernel of $\varphi_1$, and similar for $\id-\psi_2^\sharp$.
Hence the range of $\id-P_{\varphi,\psi}$, which is the kernel of $P_{\varphi,\psi}$, is
$\ker\varphi_1\otimes\ker\psi_2$.
Any $a\in\A$ can be written in a unique way as \mbox{$a=P_{\varphi,\psi}(a)+(\id-P_{\varphi,\psi})(a)$},
proving \eqref{eq:sum}.
\end{proof}

\begin{prop}\label{prop:5}
Let $\varphi=\varphi_1\otimes\varphi_2$ and $\psi=\psi_1\otimes\psi_2$. If,
for all $a\in\A^{\mathrm{sa}}$,
\begin{equation}\label{eq:ineqD}
\|[D,P_{\varphi,\psi}(a)]\|\leq\|[D,a]\| \;,
\end{equation}
then $d_D(\varphi,\psi)=d_D^\times(\varphi,\psi)$ (so, the states $\varphi,\psi$ satisfy Pythagoras equality).
\end{prop}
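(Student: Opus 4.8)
The plan is to establish the reverse of the always-valid inequality \eqref{eq:ineqV}, i.e. $d_D(\varphi,\psi)\le d_D^\times(\varphi,\psi)$, by pushing every test element for $d_D$ into the subspace $\A_1+\A_2$ through the idempotent $P_{\varphi,\psi}$ without altering the quantity $\varphi(a)-\psi(a)$ and, thanks to the hypothesis, without increasing $\|[D,a]\|$. The crucial algebraic fact I would verify first is that both states are fixed by $P_{\varphi,\psi}$:
\begin{equation*}
\varphi\circ P_{\varphi,\psi}=\varphi \;,\qquad \psi\circ P_{\varphi,\psi}=\psi \;.
\end{equation*}
This is a short computation on simple tensors $x\otimes y$ using \eqref{eq:Pvp}. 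For $\varphi=\varphi_1\otimes\varphi_2$ the term $\varphi_1^\sharp\otimes\id$ reproduces $\varphi$, while the terms $\id\otimes\psi_2^\sharp$ and $\varphi_1^\sharp\otimes\psi_2^\sharp$ both evaluate to $\varphi_1(x)\psi_2(y)$ and, carrying opposite signs in \eqref{eq:Pvp}, cancel. Symmetrically, applying $\psi=\psi_1\otimes\psi_2$ one finds that $\id\otimes\psi_2^\sharp$ reproduces $\psi$ while $\varphi_1^\sharp\otimes\id$ and $\varphi_1^\sharp\otimes\psi_2^\sharp$ cancel.

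Next I would record two structural properties of $P_{\varphi,\psi}$ that are already implicit above: it maps $\A$ onto $\A_1+\A_2$ (this is the range computation in the decomposition \eqref{eq:sum}), and it preserves selfadjointness. The latter holds because $\varphi_1^\sharp$ and $\psi_2^\sharp$ are $*$-preserving: states are Hermitian forms, so $\varphi_1(x^*)=\overline{\varphi_1(x)}$ and hence $\varphi_1^\sharp(x^*)=\varphi_1^\sharp(x)^*$, and likewise for $\psi_2^\sharp$; therefore every term in \eqref{eq:Pvp} is $*$-preserving and so is $P_{\varphi,\psi}$. Thus for any $a=a^*\in\A$ the element $b:=P_{\varphi,\psi}(a)$ is a selfadjoint element of $\A_1+\A_2$.

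With these in hand the argument is immediate. Fix any $a\in\A^{\mathrm{sa}}$ with $\|[D,a]\|\le 1$ and set $b:=P_{\varphi,\psi}(a)\in(\A_1+\A_2)^{\mathrm{sa}}$. By the hypothesis \eqref{eq:ineqD}, $\|[D,b]\|\le\|[D,a]\|\le 1$, so $b$ is an admissible test element in the supremum defining $d_D^\times$; and by the fixed-point property, $\varphi(b)-\psi(b)=\varphi(a)-\psi(a)$. Hence every value attained in the supremum defining $d_D$ is already attained in the supremum defining $d_D^\times$, which gives $d_D(\varphi,\psi)\le d_D^\times(\varphi,\psi)$. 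Combined with \eqref{eq:ineqV}, this yields the claimed equality $d_D(\varphi,\psi)=d_D^\times(\varphi,\psi)$.

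Within this proof nothing is genuinely hard: the fixed-point identities, the range statement, and selfadjointness all hold for arbitrary product states, so the contractivity bound \eqref{eq:ineqD} is exactly the sole nontrivial input, and I expect the real difficulty to lie not here but in checking \eqref{eq:ineqD} in concrete examples (as will be done for two two-point spaces in \S\ref{sec:C2}). The one subtlety I would emphasize in the write-up is that $P_{\varphi,\psi}$ fixes \emph{both} $\varphi$ and $\psi$ simultaneously; this is what allows a single projection, tailored to the pair $(\varphi,\psi)$, to serve for the whole difference $\varphi(a)-\psi(a)$.
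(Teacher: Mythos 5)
Your proof is correct and follows essentially the same route as the paper: both arguments rest on the fixed-point identities $\varphi\circ P_{\varphi,\psi}=\varphi$ and $\psi\circ P_{\varphi,\psi}=\psi$ together with the hypothesis \eqref{eq:ineqD} to turn every test element for $d_D$ into a test element for $d_D^\times$, and then conclude via \eqref{eq:ineqV}. The only cosmetic differences are that the paper deduces the fixed-point property from the kernel computation $\ran(\id-P_{\varphi,\psi})=\ker\varphi_1\otimes\ker\psi_2$ rather than by direct evaluation on simple tensors, and that your explicit check that $P_{\varphi,\psi}$ preserves selfadjointness makes precise a point the paper leaves implicit.
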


\begin{proof}
From $\ran(\id-P_{\varphi,\psi})=\ker\varphi_1\otimes\ker\psi_2$, 
$\id-P_{\varphi,\psi}$ maps $\A$ into the kernel of both $\varphi$ and $\psi$.
So, for all $a\in\A$:
$$
\varphi(P_{\varphi,\psi}(a))=\varphi(a) \;,\qquad
\psi(P_{\varphi,\psi}(a))=\psi(a) \;.
$$
Moreover, $b=P_{\varphi,\psi}(b)$ for all $b\in\A_1+\A_2$. Thus:
\begin{align*}
d_D^\times(\varphi,\psi)
&=\sup_{b\in (\A_1+\A_2)^{\mathrm{sa}}}\big\{\varphi(P_{\varphi,\psi}(b))-\psi(P_{\varphi,\psi}(b)):\|[D,P_{\varphi,\psi}(b)]\|\leq 1\big\}
\\
&=\sup_{a\in\A^{\mathrm{sa}}}\big\{\varphi(P_{\varphi,\psi}(a))-\psi(P_{\varphi,\psi}(a)):\|[D,P_{\varphi,\psi}(a)]\|\leq 1\big\}
\\
&=\sup_{a\in\A^{\mathrm{sa}}}\big\{\varphi(a)-\psi(a):\|[D,P_{\varphi,\psi}(a)]\|\leq 1\big\} \;.
\end{align*}
Now, from \eqref{eq:ineqD}, we deduce that $d_D^\times(\varphi,\psi)\geq d_D(\varphi,\psi)$.
But the opposite inequality also holds, cf.\ \eqref{eq:ineqV}, hence the thesis.
\end{proof}

\subsection{Normal states}
Let $\varphi=\varphi_1\otimes\varphi_2$ and $\psi=\psi_1\otimes\psi_2$ be as in previous section.
In this section, I assume that $\varphi_1$ and $\psi_2$ are normal states with density matrices $\rho_1,\rho_2$:
$$
\varphi_1(a)=\tr_{\HH_1}(\rho_1a) \;,\qquad
\psi_2(b)=\tr_{\HH_2}(\rho_2b) \;,
$$
for all $a\in\A_1$ and $b\in\A_2$. The above formulas allow then to extend
$\varphi_1$ and $\psi_2$ to the whole $\B(\HH_1)$ resp.~$\B(\HH_2)$,
and $P_{\varphi,\psi}$ to the whole $\B(\HH)$. Note however that,
since in general the density matrix of a given state is not unique, the extension is also
not unique.

Let $K$ be the norm of the idempotent \eqref{eq:Pvp}:
$$
K:=\sup_{b\in\B(\HH):b\neq 0}\frac{\|P_{\varphi,\psi}(b)\|}{\|b\|}
$$
\begin{lemma}\label{lemma:K3}
$1\leq K\leq 3$.
\end{lemma}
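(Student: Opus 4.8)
The plan is to sandwich $K$ by treating the two bounds separately: the lower bound comes for free from the idempotent structure, and the upper bound comes from the triangle inequality applied to the three pieces of \eqref{eq:Pvp}, each of which I will argue is a contraction on $\B(\HH)$.

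For the lower bound, I would first note that the factorization \eqref{eq:onemP} persists verbatim on all of $\B(\HH)$: after the normal extension, $\varphi_1^\sharp$ and $\psi_2^\sharp$ are still idempotents of $\B(\HH_1)$ and $\B(\HH_2)$, so $\id-P_{\varphi,\psi}=(\id-\varphi_1^\sharp)\otimes(\id-\psi_2^\sharp)$ and hence $P_{\varphi,\psi}$ remain idempotent on $\B(\HH)$. Since $1=1\otimes 1\in\A_1+\A_2$ and $P_{\varphi,\psi}$ is the identity there, one has $P_{\varphi,\psi}(1)=1$, whence $K\geq\|P_{\varphi,\psi}(1)\|/\|1\|=1$ (equivalently, any nonzero idempotent satisfies $\|\pi\|=\|\pi^2\|\leq\|\pi\|^2$, so $\|\pi\|\geq 1$).

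For the upper bound, the triangle inequality applied to \eqref{eq:Pvp} gives, for every $b\in\B(\HH)$,
$$
\|P_{\varphi,\psi}(b)\|\leq\|(\varphi_1^\sharp\otimes\id)(b)\|+\|(\id\otimes\psi_2^\sharp)(b)\|+\|(\varphi_1^\sharp\otimes\psi_2^\sharp)(b)\| \;,
$$
so it suffices to show each of the three summands is $\leq\|b\|$. The key observation is that each map is built from a slice map by a state. Writing $(\varphi_1^\sharp\otimes\id)(b)=1_{\HH_1}\otimes(\varphi_1\otimes\id)(b)$, where $(\varphi_1\otimes\id)\colon\B(\HH)\to\B(\HH_2)$ is the normal slice map $b\mapsto\tr_{\HH_1}\big((\rho_1\otimes 1)\,b\big)$, the first term equals $\|(\varphi_1\otimes\id)(b)\|$; symmetrically for the second. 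The third satisfies $(\varphi_1^\sharp\otimes\psi_2^\sharp)(b)=(\varphi_1\otimes\psi_2)(b)\,1_{\HH}$, so its norm is $|(\varphi_1\otimes\psi_2)(b)|\leq\|b\|$ since $\varphi_1\otimes\psi_2$ is a state. Granting that the slice maps are contractive, summing yields $\|P_{\varphi,\psi}(b)\|\leq 3\|b\|$, i.e.\ $K\leq 3$.

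The only point that needs genuine care — the hard part — is justifying that the slice maps $\varphi_1\otimes\id$ and $\id\otimes\psi_2$ are well-defined, bounded, and contractive on the \emph{whole} von Neumann algebra $\B(\HH)$, and not merely on the algebraic tensor product $\A_1\odot\A_2$. This rests on the normality hypothesis: a normal state extends to a weak$^*$-continuous, unital, completely positive slice map, and such a map is automatically contractive (by Stinespring's dilation, or by the Russo–Dye theorem, which gives $\|\Phi\|=\|\Phi(1)\|=1$ for a unital positive $\Phi$). Once contractivity of the slices is secured, the remainder of the argument is just the triangle inequality, and both the constants $1$ and $3$ are seen to be the natural ones.
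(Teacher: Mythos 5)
Your proof is correct and follows essentially the same route as the paper's: the lower bound from the fact that $P_{\varphi,\psi}$ fixes elements of its range (e.g.\ $P_{\varphi,\psi}(1)=1$), and the upper bound from the triangle inequality applied to the three terms of \eqref{eq:Pvp} together with the norm-$1$ property of states. The only difference is that you spell out carefully the contractivity of the normal slice maps on all of $\B(\HH)$, which the paper leaves implicit; this is a welcome clarification but not a different argument.
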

\begin{proof}
Any $b\in\ran(P_{\varphi,\psi})$ gives a lower bound $1$.
The upper bound comes from the norm $1$ property of a state together with the triangle inequality.
\end{proof}

\begin{lemma}
Assume $\rho_1$ commutes with $\gamma_1$ and either: i) $\rho_2$ commutes with $D_2$ (it can be, for example, an eigenstate of the
Dirac operator); or ii) $(\A_2,\HH_2,D_2,\gamma_2)$ is also even (although we do not use $\gamma_2$ in the definition of the product
spectral triple) and $\rho_2$ commutes with $\gamma_2$. Then:
\begin{equation}\label{eq:nullA}
\varphi_1(\gamma_1[D_1,a])=0 \;,\qquad\forall\;a\in\A_1,
\end{equation}
and
\begin{equation}\label{eq:nullB}
\psi_2([D_2,b])=0 \;,\qquad\forall\;b\in\A_2.
\end{equation}
\end{lemma}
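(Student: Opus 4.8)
The plan is to prove both vanishing statements by the same mechanism: each trace has the form $\tr(\rho\,X)$ where $X$ is \emph{odd} with respect to a grading that commutes with $\rho$, and such a trace must vanish (the familiar fact that a supertrace annihilates odd operators). I would treat the two identities separately, noting a pleasant structural point: \eqref{eq:nullA} uses \emph{only} the hypothesis that $\rho_1$ commutes with $\gamma_1$, independently of which of the two cases on $\rho_2$ holds, whereas \eqref{eq:nullB} is where the case distinction (i)/(ii) enters.

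For \eqref{eq:nullA}, first I would record that $[D_1,a]$ anticommutes with $\gamma_1$: since $\gamma_1 D_1=-D_1\gamma_1$ (evenness) and $\gamma_1 a=a\gamma_1$ for $a\in\A_1$ (the grading commutes with the algebra), a one-line computation gives $\gamma_1[D_1,a]=-[D_1,a]\gamma_1$. Hence $Y:=\gamma_1[D_1,a]$ satisfies $\gamma_1 Y\gamma_1=-Y$, i.e.\ $Y$ is odd. Then, using that $\rho_1$ commutes with $\gamma_1$ and $\gamma_1^2=1$, together with cyclicity of the trace, I would compute
$$\tr(\rho_1 Y)=-\tr(\rho_1\gamma_1 Y\gamma_1)=-\tr(\gamma_1\rho_1\gamma_1\,Y)=-\tr(\rho_1 Y) \;,$$
so that $\tr(\rho_1 Y)=0$, which is precisely \eqref{eq:nullA}.

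For \eqref{eq:nullB} I would split along the two hypotheses. In case (i), $\rho_2$ commutes with $D_2$, and a single application of cyclicity turns the claim into $\tr(\rho_2[D_2,b])=\tr([\rho_2,D_2]\,b)=0$. In case (ii), I would repeat the odd-operator argument verbatim with $\gamma_2$ in place of $\gamma_1$: now $[D_2,b]$ is itself odd for $\gamma_2$ (same computation as above, without the extra grading factor, since the commutator already anticommutes with $\gamma_2$), and since $\rho_2$ commutes with $\gamma_2$ the identical chain of equalities yields $\tr(\rho_2[D_2,b])=-\tr(\rho_2[D_2,b])$, hence it vanishes.

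There is no serious obstacle here; the content is entirely the parity bookkeeping, and the argument is purely algebraic. The only point requiring a moment's care is the legitimacy of cyclicity of the trace: since $\rho_1,\rho_2$ are trace class while the commutators $[D_i,\cdot]$ (bounded by the spectral-triple axioms) and the gradings $\gamma_i$ are bounded, every operator whose trace appears is trace class, so the manipulations $\tr(AB)=\tr(BA)$ used above are all valid.
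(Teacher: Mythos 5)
Your treatment of \eqref{eq:nullA} and of case (ii) of \eqref{eq:nullB} is correct and is essentially the paper's own argument: the same three ingredients (anticommutation of the commutator with the grading, cyclicity of the trace, commutation of the density matrix with the grading) in the same order, with your ``supertrace kills odd operators'' packaging being only cosmetic. In these two cases your closing remark about cyclicity is also sound, because every factor besides the trace-class density matrix --- the gradings $\gamma_i$ and the commutators $[D_i,\cdot\,]$ --- is bounded, and $\tr(AB)=\tr(BA)$ is legitimate for $A$ trace class and $B$ bounded.

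Case (i) of \eqref{eq:nullB} is where your proof has a genuine gap, and it is exactly the point the paper flags: since $D_2$ is unbounded, the cyclic property of the trace does not apply. Your step $\tr(\rho_2[D_2,b])=\tr([\rho_2,D_2]\,b)$ is not covered by your final justification: to carry it out you must split the \emph{bounded} operator $[D_2,b]$ into the \emph{unbounded} pieces $D_2b$ and $bD_2$ and cycle $D_2$ itself around the trace. The intermediate operators $\rho_2 D_2 b$ and $\rho_2 b D_2$ need not be trace class, nor even bounded, so the two traces you implicitly separate and recombine may simply not exist. For instance, if $\rho_2\propto\sum_{n\geq 0}2^{-n}\ket{n}\bra{n}$ and $D_2=\sum_{n\geq 0}3^{n}\ket{n}\bra{n}$, then $[\rho_2,D_2]=0$ but $\rho_2D_2$ is unbounded, and taking $b=1$ the splitting produces $\infty-\infty$. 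The paper repairs this by avoiding cyclicity altogether: one computes $\tr_{\HH_2}(\rho_2[D_2,b])$ in a joint eigenbasis $\{e_n\}$ of $\rho_2$ and $D_2$, which exists on $(\ker\rho_2)^\perp$ because the two commute and $\rho_2$ is compact (vectors in $\ker\rho_2$ contribute nothing to the trace). Each diagonal term then vanishes by self-adjointness alone: writing $\rho_2e_n=p_ne_n$ and $D_2e_n=\lambda_ne_n$, one has $\inner{e_n,\rho_2[D_2,b]e_n}=p_n\bigl(\inner{D_2e_n,b\,e_n}-\inner{e_n,b\,D_2e_n}\bigr)=p_n(\lambda_n-\lambda_n)\inner{e_n,b\,e_n}=0$. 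Your case (i) needs this eigenbasis computation (or an equivalent argument with the spectral projections of $D_2$) to be complete; as written, the step is unjustified.
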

\begin{proof}
If $[\rho_1,\gamma_1]=0$, since $\gamma_1$ anticommutes with 
$D_1$ and commutes with $\A_1$, then for all $a\in\A_1$:
\begin{align*}
\varphi_1(\gamma_1[D_1,a]) &=-\varphi_1([D_1,a]\gamma_1)
=-\tr_{\HH_1}(\rho_1[D_1,a]\gamma_1)
\\
\intertext{and from the cyclic property of the trace:}
&=-\tr_{\HH_1}(\gamma_1\rho_1[D_1,a])
=-\tr_{\HH_1}(\rho_1\gamma_1[D_1,a])
\\
&=-\varphi_1(\gamma_1[D_1,a]) \;,
\end{align*}
This implies \eqref{eq:nullA}.

The proof of \eqref{eq:nullB} under the hypothesis (ii) is similar.
If $[\rho_2,\gamma_2]=0$, then
\begin{align*}
\psi_2([D_2,b]) &=
\psi_2(\gamma_2^2[D_2,b])=
-\psi_2(\gamma_2[D_2,b]\gamma_2) \\
&=-\tr_{\HH_2}(\rho_2\gamma_2[D_2,b]\gamma_2)
=-\tr_{\HH_2}(\gamma_2\rho_2\gamma_2[D_2,b]) \\
&=-\tr_{\HH_2}(\rho_2\gamma_2^2[D_2,b])=-\psi_2([D_2,b]) \;,
\end{align*}
hence $\psi_2([D_2,b])=0$.

If, on the other hand, (i) is satisfied, that is $[D_2,\rho_2]=0$, then:
$$
\psi_2([D_2,b])=\tr_{\HH_2}([D_2,\rho_2b])=0 \;.
$$
(Since $D_2$ is unbounded, the cyclic property of the trace doesn't hold.
But one can prove the above equality by writing the trace in an eigenbasis
of $D_2$ and $\rho_2$.)
\end{proof}

\begin{prop}\label{prop:9}
Let $\varphi_1$ and $\psi_2$ be normal and satisfying \eqref{eq:nullA} and
\eqref{eq:nullB}. Then
$$
d_D^\times(\varphi,\psi) \leq d_D(\varphi,\psi)\leq K\,d_D^\times(\varphi,\psi) \;.
$$
If $K=1$, then Pythagoras equality holds.
\end{prop}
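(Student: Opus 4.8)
The plan is to handle the two inequalities separately. The left one, $d_D^\times(\varphi,\psi)\le d_D(\varphi,\psi)$, is exactly \eqref{eq:ineqV} and requires nothing new. For the right one I will rerun the argument in the proof of Proposition~\ref{prop:5}, now tracking a multiplicative constant. The two facts used there, namely $\varphi(P_{\varphi,\psi}(a))=\varphi(a)$ and $\psi(P_{\varphi,\psi}(a))=\psi(a)$, remain valid, since by \eqref{eq:onemP} and \eqref{eq:sum} the map $\id-P_{\varphi,\psi}$ sends $\A$ into $\ker\varphi_1\otimes\ker\psi_2\subseteq\ker\varphi\cap\ker\psi$. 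The only ingredient I still need is the commutator estimate
$$
\big\|[D,P_{\varphi,\psi}(a)]\big\|\le K\,\big\|[D,a]\big\|\,,\qquad\forall\,a\in\A^{\mathrm{sa}}\,.
$$
Granting it, take any selfadjoint $a$ with $\|[D,a]\|\le 1$: then $P_{\varphi,\psi}(a)$ lies in $(\A_1+\A_2)^{\mathrm{sa}}$ and satisfies $\|[D,P_{\varphi,\psi}(a)]\|\le K$, so $K^{-1}P_{\varphi,\psi}(a)$ is admissible in the supremum defining $d_D^\times$ and yields $d_D^\times(\varphi,\psi)\ge K^{-1}(\varphi(a)-\psi(a))$. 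Taking the supremum over all such $a$ gives $d_D(\varphi,\psi)\le K\,d_D^\times(\varphi,\psi)$.

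The crux is the commutator estimate, and this is where \eqref{eq:nullA} and \eqref{eq:nullB} enter. Writing $U:=\gamma_1\otimes 1$, a selfadjoint unitary on $\HH$, I would establish the operator identity
$$
[D,P_{\varphi,\psi}(a)]=U\,P_{\varphi,\psi}\!\big(U[D,a]\big)\,,\qquad\forall\,a\in\A\,,
$$
where $P_{\varphi,\psi}$ denotes the extension \eqref{eq:Pvp} to $\B(\HH)$ on which $K$ is computed. To check it I expand $[D,a]$ using $D=D_1\otimes 1+\gamma_1\otimes D_2$, left-multiply by $U$, and apply $P_{\varphi,\psi}=\varphi_1^\sharp\otimes\id+\id\otimes\psi_2^\sharp-\varphi_1^\sharp\otimes\psi_2^\sharp$ term by term: every resulting term carrying a factor $\varphi_1(\gamma_1[D_1,\cdot])$ vanishes by \eqref{eq:nullA} and every term carrying a factor $\psi_2([D_2,\cdot])$ vanishes by \eqref{eq:nullB}. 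After the final left multiplication by $U$ (and using $\gamma_1^2=1$ and that $\gamma_1$ commutes with $\A_1$) what survives is precisely $[D_1,(\id\otimes\psi_2)(a)]\otimes 1+\gamma_1\otimes[D_2,(\varphi_1\otimes\id)(a)]$, which is $[D,P_{\varphi,\psi}(a)]$ computed directly from $P_{\varphi,\psi}(a)=(\id\otimes\psi_2)(a)\otimes 1+1\otimes(\varphi_1\otimes\id)(a)-(\varphi_1\otimes\psi_2)(a)\,1$. The estimate then drops out for free: since $U$ is unitary,
$$
\big\|[D,P_{\varphi,\psi}(a)]\big\|=\big\|P_{\varphi,\psi}(U[D,a])\big\|\le K\,\big\|U[D,a]\big\|=K\,\big\|[D,a]\big\|\,.
$$

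For the last assertion, if $K=1$ the displayed double inequality collapses to $d_D(\varphi,\psi)=d_D^\times(\varphi,\psi)$, which by Proposition~\ref{prop:3} equals $d_{D_1}\!\boxtimes d_{D_2}(\varphi,\psi)$ for the product states $\varphi,\psi$; this is Pythagoras equality. Equivalently, for $K=1$ the commutator estimate becomes \eqref{eq:ineqD} and Proposition~\ref{prop:5} applies verbatim. I expect the single real obstacle to be the bookkeeping in the operator identity: one must verify that \eqref{eq:nullA} and \eqref{eq:nullB} annihilate exactly the unwanted cross terms and nothing else. Everything after that is formal — the estimate uses only the unitarity of $U$ and the definition $K=\|P_{\varphi,\psi}\|$, so that conjugating $P_{\varphi,\psi}$ by $U$ leaves its norm unchanged.
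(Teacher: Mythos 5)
Your proposal is correct and takes essentially the same approach as the paper: your identity $[D,P_{\varphi,\psi}(a)]=U\,P_{\varphi,\psi}(U[D,a])$ with $U=\gamma_1\otimes 1$ is precisely the paper's identity $[D,P_{\varphi,\psi}(a)]=\widetilde P_{\varphi,\psi}([D,a])$, where $\widetilde P_{\varphi,\psi}(b):=\gamma P_{\varphi,\psi}(\gamma b)$ and $\gamma=\gamma_1\otimes 1$, proved by the same term-by-term expansion in which \eqref{eq:nullA} and \eqref{eq:nullB} kill the cross terms. The remaining steps --- deducing $\|[D,P_{\varphi,\psi}(a)]\|\leq K\|[D,a]\|$ from the unitarity of $\gamma$, rerunning the proof of Prop.~\ref{prop:5} with the constant $K$, and collapsing the double inequality when $K=1$ --- coincide with the paper's argument.
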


\begin{proof}
Let $\gamma:=\gamma_1\otimes 1$ and
$$
\widetilde P_{\varphi,\psi}(b)=\gamma P_{\varphi,\psi}(\gamma b)
$$
for all $b\in\B(\HH)$. For any $y=\gamma x$, $y^*y=x^*x$ and then $\|y\|=\|x\|$. Hence
$$
\|\widetilde P_{\varphi,\psi}\|=\sup_{b\in\B(\HH):b\neq 0}\frac{\|\gamma P_{\varphi,\psi}(b')\|}{\|\gamma b'\|}=
\|P_{\varphi,\psi}\|=K \;,
$$
where $b'=\gamma b$.
For all $a\in\A$:
\begin{align*}
[D,P_{\varphi,\psi}(a)]
&=[D_1\otimes 1,\id\otimes\psi_2^\sharp(a)]+[\gamma_1\otimes D_2,\varphi_1^\sharp\otimes\id(a)]
\\
&=(\id\otimes\psi_2^\sharp)([D_1\otimes 1,a])+(\gamma_1\otimes 1)(\varphi_1^\sharp\otimes\id)([1\otimes D_2,a])
\\
&=(\id\otimes\psi_2^\sharp)([D_1\otimes 1,a])+(\widetilde{\varphi}_1^{\,\sharp}\otimes\id)([\gamma_1\otimes D_2,a])
\\
&=\widetilde P_{\varphi,\psi}([D_1\otimes 1,a])+\widetilde P_{\varphi,\psi}([\gamma_1\otimes D_2,a])
\\
&=\widetilde P_{\varphi,\psi}([D,a]) \;.
\end{align*}
In last-but-one equality we used \eqref{eq:nullA} and \eqref{eq:nullB}.
Thus $\|[D,P_{\varphi,\psi}(a)]\|\leq K\|[D,a]\|$.
If we now repeat the proof of
Prop.~\ref{prop:5}, we deduce that $d_D^\times(\varphi,\psi)\geq K^{-1}d_D(\varphi,\psi)$,
hence the thesis.
\end{proof}

While Prop.~\ref{prop:5} can be used to prove Pythagoras equality in some examples (cf.~\S\ref{sec:C2}), it is not clear if there are examples where $K\leq\sqrt{2}$, and then Prop.~\ref{prop:9} allows to improve the bounds in Theorem \ref{thm:2.3}. Note that a non-trivial $C^*$-algebra projection has always norm $K=1$\footnote{$P=P^*P$ implies $\|P\|^2=\|P^*P\|=\|P\|$ by the $C^*$-identity, hence $\|P\|$ is $0$ or $1$.}. Unfortunately this is not true for idempotent endomorphisms of a normed vector space. For example, if $P\in M_n(\C)$ is the matrix with all $1$'s in the first row and zero everywhere else, then $P^2=P$, but $PP^*=\mathrm{diag}(n,0,\ldots,0)$ implies that the norm is $\sqrt{n}$.

\section{Examples from classical and quantum transport}\label{sec:4}

In order to understand what the spectral distance looks like, it is useful to have in mind some examples.
In this section, I collect some examples where the distance can be explicitly computed, which include: the canonical spectral triple of a finite metric space and of a Riemannian manifold, which are useful to illustrate the connection with transport theory, several natural spectral triples for the state space of a ``qubit'' (\S\ref{sec:Bloch}), and a digression on the Wasserstein distance between quantum states for the Berezin quantization of a homogeneous space.

\subsection{The two-point space}\label{sec:twopoints}
Let us start with the simplest example, $\A=\C^2$.
We identify $\A$ with the subalgebra of $M_2(\C)$ of diagonal matrices,
acting on the Hilbert space $\HH=\C^2$ via matrix multiplication.
We get an even spectral triple $(\A,\HH,D,\gamma)$ by taking $D:=\frac{1}{2\lambda}F$, where:
\begin{equation}\label{eq:gammaF}
\gamma:=\left[\!\begin{array}{rr} 1 & 0 \\ 0 & \!-1 \end{array}\!\right] \;,\qquad\quad
F:=\left[\!\begin{array}{rr} 0 & \;1 \\ 1 & 0 \end{array}\!\right] \;.
\end{equation}
Here $\lambda>0$ is a fixed length parameter.
We can identify $\varphi\in[-\lambda,\lambda]$ with the state:
\begin{equation}\label{eq:stateC2}
a=\left[\!\begin{array}{cc} a_\uparrow & 0 \\ 0 & a_\downarrow \end{array}\!\right]
\mapsto \frac{1+\varphi/\lambda}{2}\,a_\uparrow+\frac{1-\varphi/\lambda}{2}\,a_\downarrow \;.
\end{equation}
Clearly $\mc{S}(\A)\simeq [-\lambda,\lambda]$,
and I will use the same symbol to denote a state and the corresponding point in $[-\lambda,\lambda]$.
Pure states are given by $\varphi=\pm\lambda$.

The proof that $d_D(\varphi,\psi)=|\varphi-\psi|$ is an exercise that I leave to the reader.

\subsection{The standard \texorpdfstring{$2$}{2}-simplex}\label{sec:equi}
After the two-point space, the simplest example is the space with three points at equal distance.
Let $\A=\C^3$ (with componentwise multiplication), represented on $\HH=\A\otimes\A$ by diagonal multiplication,
and $D$ be given by $D(a\oplus b)=D_-(b)\oplus D_+(a)$ where $D_+$ is the permutation matrix
$$
D_+=\left[\begin{array}{ccc}
0 & 0 & 1 \\
1 & 0 & 0 \\
0 & 1 & 0
\end{array}\right]
$$
and $D_-=(D_+)^*$.
States are in bijection with points of the simplex (Fig.~\ref{fig:3a}),
$$
\Delta^2:=\big\{\varphi\in\R^3\,:\,\varphi_i\geq 0\;\forall\;i\;\text{and}\;\varphi_1+\varphi_2+\varphi_3=1 \big\} \;,
$$
via the map
$$
\varphi=(\varphi_1,\varphi_2,\varphi_3)\mapsto\varphi^\star:\varphi^\star(a):=\inner{\varphi,a} \;,\qquad\forall\;a\in\A,
$$
where on the right we have the canonical inner product of $\C^3$. With a slight abuse of notation, from now on I will denote by $\varphi$ (without $\star$) both the vector and the corresponding state of $\A$.
The vertices $e_1=(1,0,0)$, $e_2=(0,1,0)$ and $e_3=(0,0,1)$ are mapped to pure states, and the barycenter
$$
e_0:=\tfrac{1}{3}(e_1+e_2+e_3)
$$
of $\Delta^2$ to the trace state.
For all $\varphi,\psi\in\Delta^2$, the difference $\varphi-\psi$ belongs to the plane through the origin (parallel to $\Delta^2$ and orthogonal to $e_0$) given by:
$$
\Pi:=\big\{v\in\R^3:v_1+v_2+v_3=0\big\} \;.
$$

\begin{figure}[t]
  \centering
  \subfloat[\label{fig:3a}The standard $2$-simplex]
	{\makebox[5cm]{\includegraphics[height=5cm]{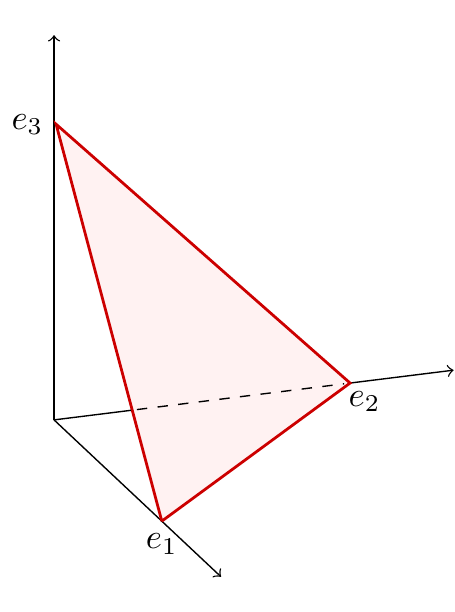}}}
  \subfloat[\label{fig:3b}The triangle $T$ (red), $Q(\Delta^2)$ (black), and the hexagon $E$.]
  {\makebox[8cm]{\includegraphics[height=5cm]{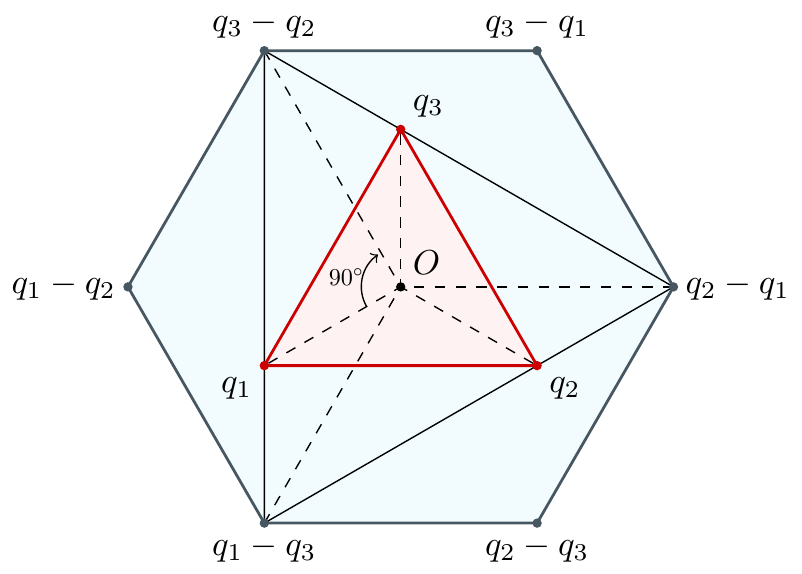}}}

	\caption{}
\end{figure}

\begin{lemma}\label{lemma:DaC3}
$\|[D,a]\|=\max_{i\neq j}|a_i-a_j|$ for all $a\in\A^{\mathrm{sa}}$.
\end{lemma}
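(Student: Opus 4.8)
The plan is to reduce everything to a single $3\times 3$ commutator and then recognise it as a monomial (generalised permutation) matrix, whose operator norm can be read off directly. First I would fix the even decomposition $\HH=\HH_+\oplus\HH_-$ with $\HH_\pm\cong\C^3$ corresponding to the two summands, on each of which $a\in\A$ acts by the same diagonal operator $\hat a:=\mathrm{diag}(a_1,a_2,a_3)$. In this decomposition $D$ is purely off-diagonal, with blocks $D_-\colon\HH_-\to\HH_+$ and $D_+\colon\HH_+\to\HH_-$, so the commutator inherits the same shape:
$$
[D,a]=\begin{pmatrix}0 & [D_-,\hat a]\\[2pt] [D_+,\hat a] & 0\end{pmatrix}.
$$
Since $[D,a]^*[D,a]$ is then block-diagonal with blocks $[D_+,\hat a]^*[D_+,\hat a]$ and $[D_-,\hat a]^*[D_-,\hat a]$, the norm equals $\max\big(\|[D_+,\hat a]\|,\|[D_-,\hat a]\|\big)$; and because $D_-=D_+^*$ together with $\hat a=\hat a^*$ gives $[D_-,\hat a]=-[D_+,\hat a]^*$, the two blocks have equal norm. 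Hence $\|[D,a]\|=\|[D_+,\hat a]\|$, and the problem is reduced to a single block.

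Next I would compute $[D_+,\hat a]$ entrywise. Because $\hat a$ is diagonal, $(D_+\hat a)_{ij}=(D_+)_{ij}\,a_j$ while $(\hat a D_+)_{ij}=a_i\,(D_+)_{ij}$, so
$$
[D_+,\hat a]_{ij}=(D_+)_{ij}\,(a_j-a_i).
$$
Thus $[D_+,\hat a]$ is supported exactly on the three nonzero positions of the cyclic permutation $D_+$, carrying the values $a_3-a_1$, $a_1-a_2$ and $a_2-a_3$; in particular it is a monomial matrix (one nonzero entry in each row and each column).

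Finally I would use that for a monomial matrix $M$ the product $M^*M$ is diagonal, with diagonal entries the squared moduli of the nonzero entries of $M$. Here this yields $M^*M=\mathrm{diag}\big((a_1-a_2)^2,(a_2-a_3)^2,(a_3-a_1)^2\big)$, the $a_i$ being real since $a\in\A^{\mathrm{sa}}$, so
$$
\|[D,a]\|=\|M\|=\sqrt{\|M^*M\|}=\max\big(|a_1-a_2|,\,|a_2-a_3|,\,|a_3-a_1|\big).
$$
Since on three indices the cyclic permutation visits every unordered pair exactly once, the right-hand side is precisely $\max_{i\neq j}|a_i-a_j|$, which is the claim. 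The computation presents no real obstacle; the one point to keep honest is that all three pairs $\{i,j\}$ genuinely occur among the cyclically adjacent differences produced by $D_+$ — a feature special to $n=3$, and the reason the formula is this clean here (for more points a single cyclic shift would only generate nearest-neighbour differences, and the statement would have to be adapted).
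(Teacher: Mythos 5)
Your proof is correct and follows essentially the same route as the paper: reduce to the single block $[D_+,a]$ by observing that $[D,a]^*[D,a]$ is block-diagonal with two blocks of equal norm (your $[D_-,\hat a]=-[D_+,\hat a]^*$ is the same observation the paper phrases via the $C^*$-identity), then compute $[D_+,a]^*[D_+,a]$ explicitly as the diagonal matrix $\mathrm{diag}\big((a_1-a_2)^2,(a_2-a_3)^2,(a_3-a_1)^2\big)$. Your closing remark that the clean formula relies on the cyclic shift exhausting all unordered pairs only when $n=3$ is a sound observation, though not needed for the claim itself.
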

\begin{proof}
$[D,a]^*[D,a]$ is the direct sum of the operators $[D_+,a]^*[D_+,a]$ and $[D_-,a]^*[D_-,a]=[D_+,a][D_+,a]^*$, which by the $C^*$-identity have the same norm. So $\|[D,a]\|=\|[D_+,a]\|$. From
$$
[D_+,a]^*[D_+,a]=\left[\begin{array}{ccc}
(a_1-a_2)^2 & 0 & 0 \\
0 & (a_2-a_3)^2 & 0 \\
0 & 0 & (a_1-a_3)^2
\end{array}\right]
$$
we get the thesis.
\end{proof}

With this, one can prove that $d_D$ is the metric $d_D(e_i,e_j)=1\;\forall\;i\neq j$ on pure states, and the Chebyshev metric on arbitrary states, as I show below.

With the vector product we define a surjective linear map $Q:\R^3\to\Pi$,
$$
Q(a):=a\wedge 3e_0=(a_2-a_3,a_3-a_1,a_1-a_2) \;,
$$
satisfying the algebraic identity:
\begin{equation}\label{eq:Qiso}
\inner{Q(v),Q(a)}=3\inner{v,a} \;,\qquad\forall\;v\in\Pi,a\in\R^3\;.
\end{equation}
The translation $a\to a-e_0$, orthogonal to $\Delta^2$, transforms $\Delta^2$ into a new equilateral triangle $T$ with barycenter at the origin $O$, and with vertices $\{q_i=e_i-e_0\}_{i=1}^3$.

For $a\in\Delta^2$, since $Q(a)=Q(a-e_0)$, we can imagine that $Q$ is the composition of the translation $\Delta^2\to T$ and the endomorphism $Q|_{\Pi}$ of the plane $\Pi\supset T$.
Since $v\in\Pi$ is orthogonal to $e_0$, $Q(v)=v\wedge 3e_0$ is a $90^\circ$ clockwise rotation composed with a dilatation (uniform scaling) by a factor $\|3e_0\|_2=\sqrt{3}$.
The situation is the one illustrated in Fig.~\ref{fig:3b}: the small triangle $T$ is rotated and then scaled until it matches the big circumscribed triangle $Q(\Delta^2)$,
with vertices $\{Q(e_i)\}_{i=1}^3$ and triple area.\footnote{It's easy to check that $T$ is inscribed in $Q(\Delta^2)$: for example, from the algebraic identity $q_3=\frac{2}{3}Q(e_1)+\frac{1}{3}Q(e_3)$ we see that $q_3$ is a convex combination of two vertices of $Q(\Delta^2)$, and then lie on the corresponding edge.}

\medskip

Let us come back to the spectral distance. The usefulness of the map $Q$ is that
\begin{equation}\label{eq:Dsup}
\|[D,a]\|=\|Q(a)\|_\infty\;,\qquad\forall\;a\in\A^{\mathrm{sa}},
\end{equation}
cf.~Lemma \ref{lemma:DaC3}.

\begin{lemma}
For all $\varphi,\psi\in\Delta^2$:
\begin{equation}\label{eq:auxD}
d_D(\varphi,\psi)=
\frac{1}{3}\sup_{w\in\Pi}\big\{ \inner{Q(\varphi-\psi),w} : \|w\|_\infty\leq 1\big\} \;.
\end{equation}
\end{lemma}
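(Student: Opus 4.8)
The plan is to reduce the spectral-distance supremum over selfadjoint $a$ to the stated supremum over $w\in\Pi$ by the linear change of variables $w=Q(a)$, exploiting the two facts already in hand: the norm identity \eqref{eq:Dsup} and the algebraic identity \eqref{eq:Qiso}.

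First I would rewrite the objective. Under the identification of states with vectors one has $\varphi(a)-\psi(a)=\inner{\varphi-\psi,a}$ for $a\in\A^{\mathrm{sa}}=\R^3$, and since $\varphi,\psi\in\Delta^2$ the difference $\varphi-\psi$ lies in the plane $\Pi$. This is precisely the hypothesis needed to invoke \eqref{eq:Qiso} with $v=\varphi-\psi$, which gives
$$
\inner{\varphi-\psi,a}=\tfrac{1}{3}\inner{Q(\varphi-\psi),Q(a)} \;,\qquad\forall\,a\in\A^{\mathrm{sa}} \;.
$$
Next I would rewrite the constraint: by \eqref{eq:Dsup} the condition $\|[D,a]\|\leq 1$ is the same as $\|Q(a)\|_\infty\leq 1$. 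Substituting both identities into the defining supremum \eqref{eq:spectrald} yields
$$
d_D(\varphi,\psi)=\tfrac{1}{3}\sup_{a\in\R^3}\big\{\inner{Q(\varphi-\psi),Q(a)}:\|Q(a)\|_\infty\leq 1\big\} \;.
$$

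Finally I would perform the change of variables. Both the objective and the constraint depend on $a$ only through $w:=Q(a)$, and $Q$ maps $\R^3$ \emph{onto} $\Pi$; hence, as $a$ runs over $\R^3$, the vector $w$ runs over all of $\Pi$, and the admissible set $\{a:\|Q(a)\|_\infty\leq 1\}$ maps exactly onto $\{w\in\Pi:\|w\|_\infty\leq 1\}$. Replacing $Q(a)$ by $w$ in the last display then gives \eqref{eq:auxD}.

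There is no serious obstacle in this argument; the only two points requiring care are already dealt with above. One is that $\varphi-\psi\in\Pi$, without which \eqref{eq:Qiso} would not apply, since that identity demands its first argument lie in $\Pi$. The other is that the kernel of $Q$, namely the line $\R(1,1,1)$, causes no loss of generality: distinct $a$ differing by an element of $\ker Q$ give the same $w=Q(a)$ and the same value of the objective, so passing from the supremum over $a$ to the supremum over $w\in\ran(Q)=\Pi$ is an exact rewriting rather than an inequality.
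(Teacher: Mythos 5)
Your proof is correct and is essentially identical to the paper's own argument: both rewrite the objective via $\varphi(a)-\psi(a)=\inner{\varphi-\psi,a}=\frac{1}{3}\inner{Q(\varphi-\psi),Q(a)}$ using \eqref{eq:Qiso}, rewrite the constraint via \eqref{eq:Dsup}, and conclude by the surjectivity of $Q:\R^3\to\Pi$. Your extra remark that the kernel $\R(1,1,1)$ of $Q$ causes no loss (since objective and constraint depend on $a$ only through $Q(a)$) just makes explicit what the paper leaves implicit.
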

\begin{proof}
For any $a\in\A^{\mathrm{sa}}=\R^3$, called $v=\varphi-\psi$, then
$\varphi(a)-\psi(a)=\inner{v,a}=\frac{1}{3}\inner{Q(v),Q(a)}$.
From this and \eqref{eq:Dsup}, we get \eqref{eq:auxD} by 
noticing that the map $\R^3\to\Pi$, $a\mapsto w=Q(a)$ is surjective.
\end{proof}

The distance can now be explicitly computed.

\begin{prop}\label{prop:triangle}
For all $\varphi,\psi\in\Delta^2$:
$$
d_D(\varphi,\psi)=\|\varphi-\psi\|_\infty \;.
$$
\end{prop}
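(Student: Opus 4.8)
The plan is to start from the Kantorovich-type formula \eqref{eq:auxD}, which reduces the computation of $d_D(\varphi,\psi)$ to the evaluation of a support function. Writing $v:=\varphi-\psi\in\Pi$, I need to compute
$$
\tfrac13\sup_{w\in\Pi}\big\{\inner{Q(v),w}:\|w\|_\infty\leq 1\big\}.
$$
The feasible set $\{w\in\Pi:\|w\|_\infty\leq1\}$ is exactly the intersection of the cube $[-1,1]^3$ with the plane $\Pi$, i.e.\ the hexagon $E$ of Fig.~\ref{fig:3b}. Being a compact convex polygon, the linear functional $w\mapsto\inner{Q(v),w}$ attains its maximum over $E$ at one of its vertices, so the whole problem collapses to a finite maximization.

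First I would identify these vertices. The plane $x_1+x_2+x_3=0$ meets the edges of $[-1,1]^3$ in the six points obtained by permuting $(1,0,-1)$, and these are precisely the vertices of $E$. Evaluating the functional at a vertex $w=\sigma\cdot(1,0,-1)$ gives $\inner{Q(v),w}=Q(v)_a-Q(v)_b$, where $a,b$ are the slots carrying the entries $+1$ and $-1$; as $\sigma$ ranges over all permutations, the pair $(a,b)$ ranges over all ordered pairs with $a\neq b$. Hence
$$
\sup_{w\in E}\inner{Q(v),w}=\max_{i\neq j}\big(Q(v)_i-Q(v)_j\big).
$$

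It then remains to compute these differences. Since $Q(v)=(v_2-v_3,\,v_3-v_1,\,v_1-v_2)$ and $v_1+v_2+v_3=0$, a one-line calculation gives $Q(v)_i-Q(v)_j=\pm 3v_k$, with $k$ the index complementary to $i,j$; for instance $Q(v)_1-Q(v)_2=v_1+v_2-2v_3=-3v_3$. The six differences are therefore exactly $\pm3v_1,\pm3v_2,\pm3v_3$, their maximum is $3\max_k|v_k|=3\|v\|_\infty$, and dividing by $3$ yields $d_D(\varphi,\psi)=\|v\|_\infty=\|\varphi-\psi\|_\infty$.

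The only genuinely delicate point is the geometric identification of $E$ and of its vertices, together with the reduction of the support function to a maximum over vertices; the rest is a direct computation that is streamlined considerably by the constraint $v\in\Pi$. As a sanity check, the same value can be reached purely algebraically without invoking $Q$: by the translation invariance of $\inner{v,\cdot}$ along $(1,1,1)$ (valid precisely because $\sum_i v_i=0$), the Lipschitz constraint $\max_{i\neq j}|a_i-a_j|\leq1$ can be normalized to $a\in[0,1]^3$, over which $\inner{v,a}$ is maximized at a vertex with value $\sum_{v_i>0}v_i=\tfrac12\|v\|_1$; and $\tfrac12\|v\|_1=\|v\|_\infty$ for every $v\in\Pi$, giving the same conclusion.
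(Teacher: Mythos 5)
Your proof is correct. It shares its skeleton with the paper's own proof --- both start from \eqref{eq:auxD}, identify the feasible set as the hexagon $E$ of Figure \ref{fig:3b}, and use the fact that a linear functional on a compact convex polygon attains its maximum at a vertex --- but the two proofs diverge in how that vertex maximization is carried out. The paper proceeds trigonometrically: it writes the inner product as $\sqrt{3}\,\|\varphi-\psi\|_2\|w'\|_2$, identifies the maximizer as the vertex nearest in angle to $Q(\varphi-\psi)$, and computes $\cos\alpha$ by passing to the rotated hexagon $\frac{1}{3}Q(E)$ with vertices $\pm q_i$ and using $\|q_i\|_2=\sqrt{2/3}$. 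You instead enumerate the six vertices of $E$ explicitly (the permutations of $(1,0,-1)$, which agree with the paper's $q_i-q_j=e_i-e_j$) and evaluate the functional at each, obtaining the identity $Q(v)_i-Q(v)_j=\pm 3v_k$ on $\Pi$; this replaces the paper's angle-and-projection bookkeeping by a short algebraic computation and is, if anything, less error-prone. Your closing ``sanity check'' is moreover a genuinely independent second proof: it bypasses $Q$ and \eqref{eq:auxD} altogether and works directly from Lemma \ref{lemma:DaC3} --- translation by multiples of $(1,1,1)$ (harmless since $\sum_i v_i=0$) normalizes the constraint $\max_{i\neq j}|a_i-a_j|\leq 1$ to $a\in[0,1]^3$, the maximum over the cube sits at a vertex with value $\frac{1}{2}\|v\|_1$, and $\frac{1}{2}\|v\|_1=\|v\|_\infty$ holds on $\Pi$ because a three-component vector summing to zero has either its positive or its negative part concentrated in a single entry. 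That second route is essentially the Kantorovich-duality computation in disguise, and it is the one that would generalize most readily beyond the planar picture.
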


\begin{proof}
The set
$$
E:=\{w\in\Pi:\|w\|_\infty\leq 1\}
$$
which appears in \eqref{eq:auxD} is the hexagon
(in the plane $\Pi$) with vertices $\{q_i-q_j\}_{i\neq j}$, cf.~Figure \ref{fig:3b}.
We want to maximize the product
$$
\inner{Q(\varphi-\psi),w}=\|Q(\varphi-\psi)\|_2\|w'\|_2
=\sqrt{3}\,\|\varphi-\psi\|_2\|w'\|_2
 \;,
$$
where $w\in E$ and $w'$ is the orthogonal projection in the direction of $Q(\varphi-\psi)$.
Clearly the maximum is reached when $w$ is in the boundary of $E$, and more precisely
when $w$ is the nearest vertex, cf.~Figure \ref{fig:4a}. So, the maximum
value of $\|w'\|$ is $\sqrt{2}\cos\alpha$, and
$$
d_D(\varphi,\psi)=
\sqrt{\frac{2}{3}}\,\|\varphi-\psi\|_2\cos\alpha \;.
$$
The angle between $Q(\varphi-\psi)$ and the nearest vertex of the hexagon $E$ is the same
as the angle between $\varphi-\psi$ and the nearest vertex of the rotated hexagon $\frac{1}{3}Q(E)$.
Since $\frac{1}{3}Q(E)$ has vertices $\{\pm q_i\}_{i=1,2,3}$, then (minimal angle means maximum
cosine):
$$
\cos\alpha=\max_{i=1,2,3}\frac{\left|\inner{\varphi-\psi,q_i}\right|}{\|\varphi-\psi\|_2\|q_i\|_2} \;.
$$
Since $\|q_i\|_2=\sqrt{2/3}$, we get
\begin{equation}\label{eq:tempdD}
d_D(\varphi,\psi)=\max_{i=1,2,3}\left|\inner{\varphi-\psi,e_i-e_0}\right| \;,
\end{equation}
but $\inner{\varphi-\psi,e_0}=0$, so
$d_D(\varphi,\psi)=
\max_{i=1,2,3}|\varphi_i-\psi_i|=
\|\varphi-\psi\|_{\infty}$.
\end{proof}

\begin{figure}[t]
  \hspace*{-1cm}%
  \subfloat[\label{fig:4a}Computing \eqref{eq:auxD}.\hspace*{8mm}]{\makebox[8cm]{\hspace*{8mm}\includegraphics[height=4cm]{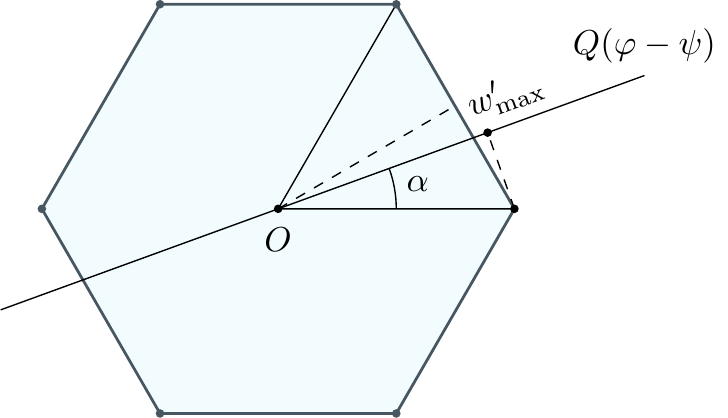}}}
  \subfloat[\label{fig:4b}The Reuleaux triangle.\hspace*{-10pt}]
  {\includegraphics[height=4cm]{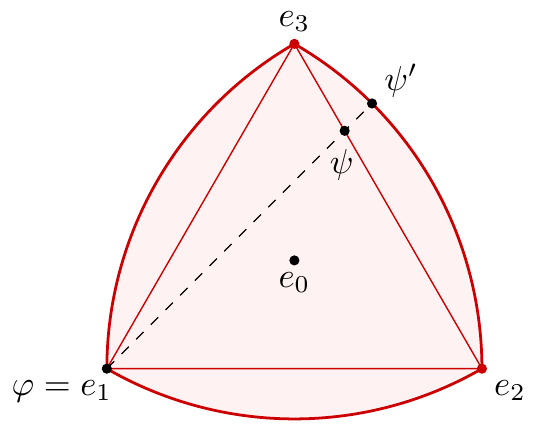}}

	\caption{}
\end{figure}

From \eqref{eq:tempdD} we get a nice geometrical interpretation.
The distance is obtained by projecting the vector $\varphi-\psi$
on the three medians of the simplex, and computing the max of the lengths of the three resulting segments.

\smallskip

To get an idea of the behavior of $d_D$, we can compute its value when one state is a vertex, and the other is on the opposite edge of the simplex, say e.g.~$\varphi=e_1$ and $\psi=(1-\lambda) e_2+\lambda e_3$ for some $\lambda\in [0,1]$. Then
$d_D(\varphi,\psi)=\max\{1,\lambda,1-\lambda\}=1$ is independent of $\lambda$.

If we draw a picture, the simplex with the spectral distance
looks like a Reuleaux triangle, cf.~Figure \ref{fig:4b}. In the figure, $\psi'$ is the intersection of the line through $\varphi$ and $\psi$ with the circle centered at $\varphi$ and passing through the opposite vertices; $d_D(\varphi,\psi)$ is the Euclidean distance between $\varphi$ and $\psi'$.

\subsection{Finite metric spaces}\label{sec:finite}
Given an arbitrary finite metric space $(X,g)$, there is a canonical even spectral triple $(\A,\HH,D,\gamma)$ associated to it, with the property that the spectral distance between pure states coincides with the original metric $g$.
This is constructed as follows.

We take $\A=\C^N\simeq C(X)$ and think of the pure states
\begin{equation}\label{eq:xxstar}
x_i(a):=a_i \;,\qquad\forall\;a=(a_1,\ldots,a_N)\in\C^N,\;i=1,\ldots,N,
\end{equation}
as the $N$-points of $X$.
For $i,j=1,\ldots,N$, let $\HH_{ij}=\C^2$ and $\pi_{ij}$ be the representation
$$
\pi_{ij}(a)=\left[\!\begin{array}{cc} a_i & 0 \\ 0 & a_j \end{array}\!\right] \;.
$$
A faithful (unital $*$-)representation of $\A$ is given on $\HH=\bigoplus_{i\neq j}\HH_{ij}$ by $\pi=\bigoplus_{i\neq j}\pi_{ij}$. I will identify $\A$ with $\pi(\A)$ and omit the symbol $\pi$.
Finally, we define
$D=\bigoplus_{i\neq j}D_{ij}$
and
$\gamma=\bigoplus_{i\neq j}\gamma_{ij}$,
where $D_{ij}$ and $\gamma_{ij}$ are the operators on $\HH_{ij}$:
$$
D_{ij}=\frac{1}{g_{ij}}\left[\begin{array}{rr} 0 & 1 \\ 1 & 0 \end{array}\right]
\;,\qquad
\gamma_{ij}=\left[\begin{array}{rr} 1 & 0 \\ 0 & \!\!-1 \end{array}\right] \;,
$$
and I use the shorthand notation
$g_{ij}:=g(x_i,x_j)$ (note that $g_{ij}\neq 0\;\forall\;i\neq j$).

It is not difficult to prove that the spectral distance between pure states is the metric we started from.

\begin{prop}\label{prop:3.1}
$d_D(x_k,x_l)=g_{kl}$ for all $k,l=1,\ldots,N$.
\end{prop}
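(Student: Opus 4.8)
The plan is to compute $d_D(x_k,x_l)$ directly from the definition \eqref{eq:spectrald} by establishing matching upper and lower bounds equal to $g_{kl}$. The key simplification is that for $a\in\A^{\mathrm{sa}}=\R^N$ (real diagonal entries $a_1,\dots,a_N$), the operator $[D,a]$ is block-diagonal over the summands $\HH_{ij}$, so its norm decouples across blocks. First I would compute the commutator on each block: since $D_{ij}=\frac{1}{g_{ij}}F$ with $F=\left[\begin{smallmatrix} 0 & 1 \\ 1 & 0\end{smallmatrix}\right]$ and $a$ acts as $\mathrm{diag}(a_i,a_j)$ on $\HH_{ij}$, a one-line matrix computation gives $[D_{ij},a]=\frac{a_j-a_i}{g_{ij}}\left[\begin{smallmatrix} 0 & -1 \\ 1 & 0\end{smallmatrix}\right]$, whose norm is $\frac{|a_i-a_j|}{g_{ij}}$. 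Taking the direct sum, I obtain the clean formula
\[
\|[D,a]\|=\max_{i\neq j}\frac{|a_i-a_j|}{g_{ij}}\;,
\]
exactly as in the analogue Lemma~\ref{lemma:DaC3} for the three-point space. The constraint $\|[D,a]\|\leq 1$ therefore reads $|a_i-a_j|\leq g_{ij}$ for all $i\neq j$; that is, $a$ must be a $1$-Lipschitz function on the metric space $(X,g)$.

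With this dictionary in hand, the spectral distance becomes an optimization over Lipschitz functions. Using the identification \eqref{eq:xxstar}, $x_k(a)-x_l(a)=a_k-a_l$, so
\[
d_D(x_k,x_l)=\sup\big\{a_k-a_l : a\in\R^N,\ |a_i-a_j|\leq g_{ij}\ \forall\,i\neq j\big\}\;.
\]
For the upper bound I would simply apply the Lipschitz constraint to the pair $(k,l)$: any admissible $a$ satisfies $a_k-a_l\leq |a_k-a_l|\leq g_{kl}$, giving $d_D(x_k,x_l)\leq g_{kl}$ immediately. This is the Kantorovich--Rubinstein picture: the spectral distance between the extreme points recovers the supremum of $\varphi(a)-\psi(a)$ over $1$-Lipschitz $a$.

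For the matching lower bound I would exhibit an explicit optimizer, namely the distance function $a_i:=-g(x_i,x_l)=-g_{il}$ (with $g_{ll}:=0$). This $a$ is $1$-Lipschitz because the triangle inequality for $g$ gives $|a_i-a_j|=|g_{il}-g_{jl}|\leq g_{ij}$, so it is admissible; and it achieves $a_k-a_l=-g_{kl}-0=-g_{kl}$. To get the sign right one uses the symmetric choice $a_i:=g_{il}$ (equally $1$-Lipschitz), which yields $a_k-a_l=g_{kl}$ and hence $d_D(x_k,x_l)\geq g_{kl}$. Combining the two bounds gives $d_D(x_k,x_l)=g_{kl}$.

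I expect the only genuine work to be the block-commutator norm computation, i.e.\ verifying the Lipschitz characterization $\|[D,a]\|=\max_{i\neq j}|a_i-a_j|/g_{ij}$; everything after that is a direct application of the triangle inequality and the choice of the distance-function optimizer. The main conceptual point—and the place where one must be careful—is that the supremum defining $d_D$ ranges over \emph{all} self-adjoint $a$, so the optimizer need not be supported on a single block; the triangle inequality for $g$ is precisely what guarantees that the globally defined test element $a_i=g_{il}$ stays admissible across every block simultaneously.
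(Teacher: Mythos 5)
Your proposal is correct and follows essentially the same route as the paper: compute $\|[D,a]\|=\max_{i\neq j}|a_i-a_j|/g_{ij}$ blockwise, read the constraint as $1$-Lipschitzness of $a$, get the upper bound from the pair $(k,l)$, and saturate it with the distance function $\bar a_i=g_{il}$, which is admissible by the (reverse) triangle inequality $|g_{il}-g_{jl}|\leq g_{ij}$. Your verification of admissibility is in fact stated more cleanly than the paper's displayed chain of inequalities, but the test element and the overall argument are identical.
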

\begin{proof}
Since $[D,a]=\bigoplus_{i\neq j}[D_{ij},\pi_{ij}(a)]$, clearly
\begin{equation}\label{eq:clearly}
\|[D,a]\|=
\max_{i\neq j}\|[D_{ij},\pi_{ij}(a)]\|=
\max_{i\neq j}\left|\frac{a_i-a_j}{g_{ij}}\right| \;,
\end{equation}
and then
$$
d_D(x_k,x_l)=\sup_{a\in\R^N}\big\{a_k-a_l:\|[D,a]\|\leq 1\big\}\leq g_{kl} \;.
$$
The inequality is saturated by the element $\bar a=(g_{1l},g_{2l},\ldots,g_{Nl})\in\R^N$, which satisfies
$$
\left|\frac{\bar a_i-\bar a_j}{g_{ij}}\right|
\leq \frac{|\bar a_i|+|\bar a_j|}{g_{ij}}
=\frac{g_{il}+g_{jl}}{g_{ij}}
\leq 1
$$
due to the triangle inequality --- so $\|[D,\bar a]\|\leq 1$ --- and
$$
\bar a_k-\bar a_l=g_{kl}-g_{ll}=g_{kl} \;. \vspace{-12pt}
$$
\end{proof}

This is basically the construction of \cite[\S9]{IKM01} (see also
\cite[\S2.2]{SuijBook}). This spectral triple is the discrete analogue of the Hodge spectral triple discussed in \S\ref{sec:Hodge}.

Note that the construction, as well as Prop.~\ref{prop:3.1}, remains valid if $g$ is an extended metric (with the convention that in the definition of $D$,
$D_{ij}=0$ if $g_{ij}=\infty$), but doesn't extend to semi-metrics
(we need $g_{ij}\neq 0\;\forall\;i\neq j$ in order to define $D$).
Since $d_D$ is an extended metric, it is not surprising that $g$ must be an extended metric in order for the proof of Prop.~\ref{prop:3.1} to work.

\medskip

It is easy to give a transport theory interpretation to $d_D(\varphi,\psi)$ in the finite case. A general state has the form
$\varphi=\sum\nolimits_i\varphi_ix_i$,
where $\{\varphi_i\}_{i=1}^N$ is a probability distributions on $X$.

As usual, suppose $\varphi$ describes the distribution of some material
in $X$, which we want to move to resemble another distribution $\psi$.
A \emph{transport plan} will be then described by a matrix $P=(p_{ij})\in M_N(\R)$ with non-negative entries: $p_{ij}\geq 0$. The quantity $p_{ij}$ tells us the \emph{fraction} of material we move from $x_i$ to $x_j$ (with $p_{ii}$ the fraction which remains in the site $i$), which means
\begin{equation}\label{eq:linA}
\sum\nolimits_kp_{ik}=1 \qquad\quad\forall\;i=1,\ldots,N.
\end{equation}
A matrix $P$ satisfying the above conditions is called \emph{stochastic}.

After the movement, the total material we want at the site $x_j$ is $\psi_j$.
Since at the beginning the amount was $\varphi_j$, the difference between what we move in and what we move out from $x_j$ must be $\psi_j-\varphi_j$.
The transport plan must then satisfy the condition
\begin{equation}\label{eq:inout}
\text{(in)}-\text{(out)}=\sum\nolimits_k\varphi_kp_{kj}
-\sum\nolimits_k\varphi_jp_{jk}
=\psi_j-\varphi_j \qquad\quad\forall\;j=1,\ldots,N,
\end{equation}
which due to \eqref{eq:linA} is equivalent to:
\begin{equation}\label{eq:linB}
\sum\nolimits_k\varphi_kp_{kj}=\psi_j \qquad\quad\forall\;j=1,\ldots,N.
\end{equation}
If $c_{ij}$ is the unit cost to move from $x_i$ to $x_j$, the minimum cost for a transport $\varphi\to\psi$ will be:
\begin{equation}\label{eq:cost}
K(\varphi,\psi):=\inf\;\sum\nolimits_{ij}\varphi_ic_{ij}p_{ij}
\end{equation}
where the inf is over all transport plans $P$ from $\varphi$ to $\psi$.

The set of transport plans $\varphi\to\psi$ is a subset of $[0,1]^{N^2}\subset\R^{N^2}$ defined by the system of linear equations \eqref{eq:linA} and \eqref{eq:linB}. It is the intersection of an affine subspace of 
$\R^{N^2}$ with a hypercube: so it's closed and bounded, hence compact (by Heine-Borel theorem). Since $P\mapsto \sum\nolimits_{ij}\varphi_ic_{ij}p_{ij}$ is a linear (hence continuous) function on a compact space,
by Weierstrass theorem the infimum in \eqref{eq:cost} is actually a minimum
(see e.g.~Appendix E of \cite{HJ90}), thus justifying the terminology.

Next proposition is the discrete (finite) version of Kantorovich duality, which can be easily derived from Exercise 1.7 of \cite{Vil03}.
Next lemma can be found in the appendix of \cite{Ev97}, and we repeat the proof here for the reader's ease.

\begin{lemma}
For all $\varphi,\psi$:
\begin{equation}\label{eq:kant}
K(\varphi,\psi)=\sup_{a,b\in\R^N}\big\{
\varphi(a)+\psi(b)\,:\,a_i+b_j\leq c_{ij}\;\forall\;i,j
\big\} \;.
\end{equation}
\end{lemma}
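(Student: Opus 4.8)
The plan is to read \eqref{eq:kant} as the strong-duality identity for the finite linear program \eqref{eq:cost}: the easy ``$\ge$'' direction is weak duality and is proved by hand, while the reverse inequality is obtained by a single minimax exchange, after which the dual variables $a,b$ emerge on their own.

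First I would dispose of the inequality $K(\varphi,\psi)\ge\sup\{\cdots\}$. Fix any transport plan $P$ and any $a,b\in\R^N$ with $a_i+b_j\le c_{ij}$ for all $i,j$. Multiplying this constraint by $\varphi_i p_{ij}\ge 0$ and summing over $i,j$ gives
$$\sum_{ij}\varphi_i c_{ij}p_{ij}\ge\sum_{ij}\varphi_i(a_i+b_j)p_{ij}=\sum_i\varphi_i a_i\sum_j p_{ij}+\sum_j b_j\sum_i\varphi_i p_{ij}\,.$$
By \eqref{eq:linA} and \eqref{eq:linB} the right-hand side equals $\varphi(a)+\psi(b)$; taking the infimum over $P$ and then the supremum over admissible $(a,b)$ yields $K(\varphi,\psi)\ge\sup\{\cdots\}$.

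For the reverse inequality, which is the heart of the matter, I would relax only the marginal constraint \eqref{eq:linB}, introducing multipliers $b_j$ and keeping the row-stochastic constraints \eqref{eq:linA} together with positivity as a hard constraint defining the compact convex set $\Delta$ of nonnegative row-stochastic matrices. Since the inner supremum below is $+\infty$ unless \eqref{eq:linB} holds, and equals the transport cost otherwise, one has
$$K(\varphi,\psi)=\inf_{P\in\Delta}\ \sup_{b\in\R^N}\ \Big[\sum_{ij}\varphi_i c_{ij}p_{ij}+\sum_j b_j\Big(\psi_j-\sum_i\varphi_i p_{ij}\Big)\Big]\,.$$
The bracketed expression is affine in $P$ and affine in $b$, and $\Delta$ is compact and convex, so Sion's minimax theorem permits exchanging the infimum and the supremum. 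After the exchange the minimization over $P\in\Delta$ decouples over the rows: for each $i$ one minimizes $\varphi_i\sum_j(c_{ij}-b_j)p_{ij}$ over the simplex $\{p_{i\bullet}\ge 0,\ \sum_j p_{ij}=1\}$, and the minimum is attained by placing all the mass at an index realizing $\min_j(c_{ij}-b_j)$. Hence
$$K(\varphi,\psi)=\sup_{b\in\R^N}\Big[\sum_j b_j\psi_j+\sum_i\varphi_i\min_j(c_{ij}-b_j)\Big]\,.$$

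To finish I would set $a_i:=\min_j(c_{ij}-b_j)$, which is precisely the largest vector satisfying $a_i+b_j\le c_{ij}$ for every $j$; recognizing $\sum_i\varphi_i a_i=\varphi(a)$ and $\sum_j b_j\psi_j=\psi(b)$ identifies the last display with $\sup\{\varphi(a)+\psi(b):a_i+b_j\le c_{ij}\}$, giving the reverse inequality and hence \eqref{eq:kant}. The one non-elementary ingredient is the minimax exchange, and I expect this to be the main obstacle, since weak duality is automatic. If one prefers to avoid Sion's theorem, the same conclusion follows by citing the strong-duality theorem of linear programming: the primal \eqref{eq:cost} is feasible (its feasible set is the nonempty compact polytope discussed before the statement) and bounded below by weak duality, so the dual optimum equals the primal optimum. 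The harmless case $\varphi_i=0$ needs no separate treatment, since such a row contributes zero to both the objective and the relaxed constraint.
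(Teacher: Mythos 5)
Your proof is correct, but it takes a genuinely different route from the paper's. The paper does not build the duality by hand: it cites the finite-dimensional duality relation of Villani (Exercise 1.7 of \cite{Vil03}), restated as a sup--inf identity \eqref{eq:minmax} between two inner product spaces, and then realizes \eqref{eq:kant} by choosing $V=\R^{2N}$, $W=M_N(\R)$ with the \emph{weighted} pairing $\inner{e,h}_W=\sum_{ij}\varphi_i\hspace{1pt}e_{ij}\hspace{1pt}h_{ij}$, the map $f(u)_{ij}=a_i+b_j$, and checking that the constraint $f^T(p)=v$ reproduces \eqref{eq:linA}--\eqref{eq:linB}; since that weighted pairing degenerates on rows with $\varphi_i=0$, the paper needs a separate argument there (replacing such rows by arbitrary probability vectors to restore \eqref{eq:linA}). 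You instead prove weak duality directly, then dualize \emph{only} the marginal constraint \eqref{eq:linB}, keeping row-stochasticity \eqref{eq:linA} as a hard constraint so that the infimum runs over a compact convex set; the Sion exchange is legitimate (the Lagrangian is affine in each variable and continuous, and only one of the two sets must be compact), and the row-by-row inner minimization produces the $c$-transform $a_i=\min_j(c_{ij}-b_j)$, the largest admissible $a$ for a given $b$, after which monotonicity of $\varphi$ in nonnegative coordinates closes the identification of the two suprema. What each approach buys: the paper's is shorter given the cited result and stays inside Villani's formalism, which it reuses immediately afterwards (the next proposition follows Villani's Theorem 1.14); yours is self-contained modulo one minimax theorem (or, as you note, LP strong duality), handles $\varphi_i=0$ with no case distinction at all, and makes the structure of the optimal dual variable explicit. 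Both rest on the same non-elementary ingredient -- finite linear programming duality in one guise or another -- exactly as you anticipate in your closing remark.
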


\begin{proof}
Since every finite-dimensional real inner product space is isometric to $\R^n$ for some $n$, we can restate the duality relation in \cite[Ex.~1.7]{Vil03} as follows. Let $V,W$ be two finite-dimensional real inner product spaces. Then for any $v\in V$, $w\in W$ and $f:V\to W$ we have
\begin{equation}\label{eq:minmax}
\sup_{u\in V}\big\{\inner{v,u}_V\,:\,w-f(u)\succeq 0\big\}=
\inf_{p\in W}\big\{ \inner{w,p}_W\,:\,p\succeq 0\;\text{and}\;f^T(p)=v
\big\} \;,
\end{equation}
where $\succeq 0$ means that all components of the vector are non-negative, and the transpose $f^T$ of $f$ is the map \mbox{$f^T:W\to V$} defined in the usual way:
$$
\inner{u,f^T(h)}_V=\inner{f(u),h}_W \;,\qquad\forall\;u\in V,h\in W.
$$
Now we apply this to $V=\R^{2N}$ with canonical inner product, and $W=M_N(\R)$ with weighted Hilbert-Schmidt inner product:
$$
\inner{e,h}_W=\sum\nolimits_{i,j}\varphi_i\hspace{1pt}e_{ij}\hspace{1pt}h_{ij} \;,\qquad\forall\;e,h\in W.
$$
Take
$$
v=(\varphi_1,\ldots,\varphi_N,\psi_1,\ldots,\psi_N) \;,\qquad
w=(g_{ij}) \;,
$$
and $f$ the function
$$
f(u)_{ij}=a_i+b_j\;,\qquad\forall\;u=(a_1,\ldots,a_N,b_1,\ldots,b_N)\in V.
$$
The transpose is
$$
f^T(h)_k=\varphi_k\sum\nolimits_jh_{kj} \;,\qquad
f^T(h)_{k+N}=\sum\nolimits_i\varphi_ih_{ik} \;,
$$
for all $k=1,\ldots,N$ and $h\in W$.

The condition $f^T(p)=v$ is equivalent to \eqref{eq:linB} plus
the condition $\sum\nolimits_kp_{ik}=1$ for all $i$ such that $\varphi_i\neq 0$.
On the other hand if $\varphi_{i_0}=0$ for some $i_0$, the row $i_0$ of $P$ does not contribute to $\inner{g,p}_W$, and we can always replace the row $i_0$ of $P$ by an arbitrary probability distribution and find a new $P$ satisfying \eqref{eq:linA} too. The right hand side of \eqref{eq:minmax} is then equal to $K(\varphi,\psi)$.
Computing the left hand side of \eqref{eq:minmax} one then gets \eqref{eq:kant}.
\end{proof}

\begin{prop}
If the unit cost is $c_{ij}=g_{ij}$, then $d_D(\varphi,\psi)=K(\varphi,\psi)\;\forall\;\varphi,\psi$.
\end{prop}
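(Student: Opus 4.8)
The plan is to show that both sides equal the same supremum over $1$-Lipschitz functions, using the just-established Kantorovich duality \eqref{eq:kant} on one side and the explicit commutator norm \eqref{eq:clearly} on the other. First I would observe that, by \eqref{eq:clearly}, the constraint $\|[D,a]\|\le 1$ in the definition \eqref{eq:spectrald} of $d_D$ is precisely $|a_i-a_j|\le g_{ij}$ for all $i\ne j$; that is, $a$ is $1$-Lipschitz for the metric $g$. Since $\varphi(a)-\psi(a)=\sum_i(\varphi_i-\psi_i)a_i$, this rewrites the spectral distance as
$$
d_D(\varphi,\psi)=\sup_{a\in\R^N}\big\{\varphi(a)-\psi(a):|a_i-a_j|\le g_{ij}\;\forall\,i,j\big\}.
$$
The task is then to match this with the right-hand side of \eqref{eq:kant} specialized to $c_{ij}=g_{ij}$, which is exactly the discrete Kantorovich--Rubinstein identity.

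The easy inequality $d_D\le K$ comes from noting that if $a$ is $1$-Lipschitz then the pair $(a,b)=(a,-a)$ is admissible for \eqref{eq:kant}, because $a_i+b_j=a_i-a_j\le g_{ij}$; its value $\varphi(a)+\psi(-a)=\varphi(a)-\psi(a)$ exhibits every competitor for $d_D$ as a competitor for $K$, so taking suprema gives $d_D\le K$.

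For the reverse inequality $K\le d_D$ I would symmetrize an arbitrary admissible pair by a $c$-transform. Given $(a,b)$ with $a_i+b_j\le g_{ij}$, set $\bar b_j:=\min_i(g_{ij}-a_i)$ and then $\bar a_i:=\min_j(g_{ij}-\bar b_j)$. Admissibility forces $b_j\le\bar b_j$ and then $a_i\le\bar a_i$, so since $\varphi_i,\psi_j\ge 0$ the replacements $b\rightsquigarrow\bar b$, $a\rightsquigarrow\bar a$ only increase $\varphi(a)+\psi(b)$ while preserving admissibility. The crucial point --- and the step where the \emph{triangle inequality} for $g$ enters --- is that $\bar a$ (and likewise $\bar b$) is automatically $1$-Lipschitz: if $j_0$ attains the minimum defining $\bar a_k$, then $\bar a_i-\bar a_k\le(g_{ij_0}-\bar b_{j_0})-(g_{kj_0}-\bar b_{j_0})=g_{ij_0}-g_{kj_0}\le g_{ik}$, and symmetrically. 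Combining the Lipschitz property of $\bar b$ with $g_{jj}=0$ then pins down $\bar b_j=-\bar a_j$, so that $\varphi(\bar a)+\psi(\bar b)=\varphi(\bar a)-\psi(\bar a)$ is a competitor for $d_D$. Hence $\varphi(a)+\psi(b)\le d_D(\varphi,\psi)$, and the supremum yields $K\le d_D$.

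I expect the $c$-transform symmetrization to be the main obstacle: one must check both that the double transform does not decrease the objective and that it lands in the $1$-Lipschitz class, the latter being exactly the Kantorovich--Rubinstein reduction that collapses the two dual potentials into a single $1$-Lipschitz function and which fails without the triangle inequality on $g$. Everything else is bookkeeping with the non-negativity of the weights $\varphi_i,\psi_j$ and the vanishing of $g_{ii}$.
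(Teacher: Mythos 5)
Your proposal is correct and follows essentially the same route as the paper: identify the constraint $\|[D,a]\|\leq 1$ with $1$-Lipschitz continuity via \eqref{eq:clearly}, and reduce the two-potential dual \eqref{eq:kant} to a supremum over a single $1$-Lipschitz function. The only difference is that the paper delegates this reduction to the proof of the Kantorovich--Rubinstein theorem (Theorem 1.14 of \cite{Vil03}), whereas you carry out the underlying $c$-transform symmetrization explicitly --- and correctly, including the use of the triangle inequality for the Lipschitz bound and of $g_{jj}=0$ together with the Lipschitz property of $\bar b$ to force $\bar b_j=-\bar a_j$.
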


\begin{proof}
One can follow the proof of Theorem 1.14 in \cite{Vil03} (there is basically no simplification in the finite case), and prove that $K(\varphi,\psi)$ is the supremum of $\varphi(a)-\psi(a)$ over $a\in\A$ satisfying $|a_i-a_j|\leq g_{ij}$, which by \eqref{eq:clearly} is equivalent to $\|[D,a]\|\leq 1$.
\end{proof}

An analogue of the Wasserstein distance for quantum states is discussed in \S7.7 of \cite{BZ06}. The advantage of a distance which can be defined both as an inf and a sup, is that from the former definition one can get upper bounds, and from the latter one gets lower bounds. With some luck, these will coincide and allow to explicitly compute the distance.
What is missing for the spectral distance is a formulation as an infimum.

In \cite{ZS01}, the authors define a ``Monge distance'' between quantum states as the Wasserstein distance between the corresponding Husimi distributions of quantum optics.
They also rewrite such a distance as a sup, dual to a seminorm on an operator space in the spirit of Rieffel's quantum metric spaces \cite{Rie99,Rie03} (cf.~Prop.~4 of \cite{ZS01}). It is not clear however if such a seminorm comes from a Dirac operator.

Following the approach of \cite{ZS01}, I will explain in \S\ref{sec:qWass} how to define a ``Wasserstein'' distance between quantum states using the symbol map of Berezin quantization.

\subsection{Riemannian manifolds}\label{sec:Riem}

Let $M$ be a finite-dimensional oriented Riemannian spin manifold with no boundary. A canonical even spectral triple can be defined as follows
\begin{equation}\label{eq:HD}
  \A=C^\infty_0(M), \qquad \HH=\Omega^\bullet(M), \qquad
  D=\de+\de^*,
\end{equation}
with $\HH$ the Hilbert space of square integrable differential
forms and $D$ the Hodge-Dirac operator (self-adjoint on
a suitable domain).
The grading $\gamma\omega:=(-1)^k\omega$ on $k$-forms is
extended by linearity to $\HH$.
I will refer to \eqref{eq:HD} as the \emph{Hodge} spectral triple of $M$.
This spectral triple is even, even if $M$ is odd-dimensional, and is unital if{}f $M$ is compact.

It is well-known that, on pure states/points of $M$, the spectral distance coincides with the geodesic distance (with the convention that the geodesic distance between points in different connected components, if any, is infinite).

If $M$ is complete, the spectral distance between two arbitrary states coincides with the Wasserstein distance of order $1$ between the associated probability distributions (see e.g.~\cite{DM09}).
To explain the difficulty in computing such a distance (and then, more generally, the spectral distance), it is worth mentioning that the only case where the problem is completely solved is the real line \cite[\S3.1]{RR98}.

\subsection{A simple noncommutative example: the Bloch sphere}\label{sec:Bloch}
The Bloch sphere is a geometrical realization of the space of pure states of a two-level quantum mechanical system (or \emph{qubit}), cf.~\S2.5.2 and \S10.1.3 of \cite{RP11}. Among the several metrics used in quantum information, it is worth mentioning the Bures metric, which for a qubit can be explicitly computed and is given by Eq.~(9.48) of \cite{BZ06}. Here I will discuss some natural metrics coming from spectral triples.

So, let $\A=M_2(\C)$ be our algebra of ``observables''.
We cannot use the obvious representation on $\C^2$, since there is not enough room for a Dirac operator giving a finite distance \cite{IKM01}. Take \mbox{$\HH=M_2(\C)$} with Hilbert-Schmidt inner product $\inner{a,b}_{\mathrm{HS}}=\frac{1}{2}\,\tr(a^*b)$.\footnote{Note the different normalization between $\inner{\,,\,}_{\mathrm{HS}}$, here and in \S\ref{sec:qWass}, and $\inner{\,,\,}_{\mathrm{Tr}}$ in \eqref{eq:innerHS}.} The basis of Pauli matrices:
$$
\sigma_0=1 \;,\qquad
\sigma_1=\left[\begin{array}{cc} 0 & 1 \\ 1 & 0 \end{array}\right] \;,\qquad
\sigma_2=\left[\begin{array}{cr} 0 & \!-i \\ i & 0 \end{array}\right] \;,\qquad
\sigma_3=\left[\begin{array}{cr} 1 & 0 \\ 0 & \!-1 \end{array}\right] \;,
$$
is orthonormal for such a product.
The algebra $\A$ is represented on $\HH$ by left multiplication.

Every state on $\A$ is normal, with density matrices in bijection with points of the closed unit ball $B_3=\{x\in\R^3:\|x\|_2\leq 1\}$
in $\R^3$. Indeed, any selfadjoint matrix with trace $1$ has the form
\begin{equation}\label{eq:densityBall}
\rho_x=\tfrac{1}{2}(1+\textstyle{\sum_{i=1}^3}x_i\sigma_i) \;,
\end{equation}
where $x=(x_1,x_2,x_3)\in\R^3$. Since its two eigenvalues are
$\frac{1}{2}(1\pm\|x\|_2)$, it is clearly positive if{}f $\|x\|_2\leq 1$. It is a projection, hence a pure state, if $\|x\|_2=1$.
The map
$$
B_3\to\mc{S}(\A) \;,\qquad
x\mapsto \tr(\rho_x\;.\,) \;,
$$
is then the bijection we were looking for (in fact, a homeomorphism with respect to the weak$^*$ topology). We will identify $\mc{S}(\A)$ and $B_3$.

We can retrieve the Euclidean distance by considering the
Dirac operator
$$
D(a):=a^* \;,\qquad\forall\;a\in\HH\,.
$$
We can think of $(\A,\HH,D)$ as a spectral triple over the real numbers.
As a real Hilbert space, $\HH$ has inner product $\inner{a,b}_{\R}=\Re\inner{a,b}_{\mathrm{HS}}$, Pauli matrices are still orthonormal, the representation of $\A$ is still a $*$-representation , and $D$ is selfadjoint in the sense of $\R$-linear operators:
$\inner{a,D(b)}_{\R}=\inner{D(a),b}_{\R}\;\forall\;a,b\in\HH$.

\begin{prop}
For all $x,y\in B_3$, $d_D(x,y)=\|x-y\|_2$.
\end{prop}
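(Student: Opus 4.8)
The plan is to evaluate the supremum in \eqref{eq:spectrald} directly, exploiting that $\A=M_2(\C)$ is spanned by the Pauli matrices and that the constraint $\|[D,a]\|\le 1$ can be recast in purely spectral terms. Throughout I would write a selfadjoint $a\in\A$ as $a=a_0\sigma_0+\vec a\cdot\vec\sigma$ with $a_0\in\R$ and $\vec a=(a_1,a_2,a_3)\in\R^3$, and denote by $L_a,R_a$ left and right multiplication on $\HH=M_2(\C)$.

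The first and crucial step is to identify the commutator norm. Since $D$ is conjugate-linear with $D(b)=b^*$, for selfadjoint $a$ one has $D\,L_a=R_a\,D$, whence $[D,L_a]=(R_a-L_a)\,D$. As $D$ is an $\R$-linear isometric involution of $\HH$ (it preserves $\|b\|_{\mathrm{HS}}=\|b^*\|_{\mathrm{HS}}$ and squares to the identity), composing with it leaves operator norms unchanged, so $\|[D,a]\|=\|R_a-L_a\|=\|\mathrm{ad}_a\|$, the norm of the adjoint action $\mathrm{ad}_a(b)=ab-ba$. I would then compute $\|\mathrm{ad}_a\|$ by diagonalizing $a=U\,\mathrm{diag}(\lambda_1,\lambda_2)\,U^*$: conjugation by $U$ is an HS-isometry, and in the matrix-unit basis $\mathrm{ad}_a(e_{ij})=(\lambda_i-\lambda_j)e_{ij}$, so $\|\mathrm{ad}_a\|=\lambda_{\max}-\lambda_{\min}$ is the spectral spread of $a$. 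Because the eigenvalues of $a=a_0\sigma_0+\vec a\cdot\vec\sigma$ are $a_0\pm\|\vec a\|_2$, the constraint $\|[D,a]\|\le 1$ reduces to a bound on $\|\vec a\|_2$ alone, with the scalar part $a_0$ unconstrained and (as the next step shows) irrelevant.

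The second step is the numerator. With $\varphi,\psi$ the normal states of density matrices $\rho_x,\rho_y$ from \eqref{eq:densityBall}, orthonormality of the Pauli basis gives $\varphi(a)-\psi(a)=\tr\big((\rho_x-\rho_y)\,a\big)=(\vec x-\vec y)\cdot\vec a$, the $a_0$ term dropping out since $\tr\rho_x=\tr\rho_y=1$. Substituting both computations, the supremum \eqref{eq:spectrald} collapses to maximizing the linear functional $\vec a\mapsto(\vec x-\vec y)\cdot\vec a$ over a Euclidean ball in $\R^3$. This final maximization is a one-line Cauchy--Schwarz estimate, attained at $\vec a$ parallel to $\vec x-\vec y$, and it returns the Euclidean distance $\|x-y\|_2$.

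The hard part will be entirely in the first step: correctly handling the conjugate-linear, real-structure nature of $D$ when forming $[D,a]$, and then recognizing the resulting norm as the spectral spread of the Hermitian matrix $a$. Once the identity $\|[D,a]\|=\lambda_{\max}(a)-\lambda_{\min}(a)$ is secured, the remaining steps are routine linear algebra. As an independent cross-check I would instead invoke Proposition~\ref{prop:drho}: every state here is normal with $\rho:=\rho_x-\rho_y\in\A$, and evaluating $\|\rho\|_{\mathrm{Tr}}^2$ together with $L(\rho)$ through the same diagonalization yields the same value, confirming the formula without recourse to the explicit supremum.
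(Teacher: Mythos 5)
Your route to the commutator norm is genuinely different from the paper's, and it is the better one: from $DL_a=R_aD$ (valid for $a=a^*$) you get $[D,L_a]=(R_a-L_a)D$, and since $D$ is a surjective $\R$-linear isometry, $\|[D,a]\|=\|R_a-L_a\|=\|\mathrm{ad}_a\|$, which by diagonalization is the spectral spread $\lambda_{\max}(a)-\lambda_{\min}(a)$. The paper instead passes to $D[D,a]b=[a,b]$ and evaluates the supremum by brute force in the Pauli basis, where the commutator becomes a cross product; your identification of the norm as the spectral spread is cleaner and works verbatim in $M_n(\C)$.

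There is, however, a genuine gap: your stated conclusion contradicts your own key identity, and you never did the one substitution that would have revealed it. For $a=a_0\sigma_0+\vec a\cdot\vec\sigma$ the eigenvalues are $a_0\pm\|\vec a\|_2$, so your formula gives $\|[D,a]\|=2\|\vec a\|_2$; the constraint $\|[D,a]\|\le 1$ is therefore the ball $\|\vec a\|_2\le\tfrac12$, and maximizing $(\vec x-\vec y)\cdot\vec a$ over it yields $\tfrac12\|x-y\|_2$, not $\|x-y\|_2$. The cross-check you propose confirms this rather than the claimed formula: $\rho=\rho_x-\rho_y=\tfrac12(\vec x-\vec y)\cdot\vec\sigma$ gives $\|\rho\|_{\mathrm{Tr}}^2=\tfrac12\|x-y\|_2^2$ and $L(\rho)=2\cdot\tfrac12\|x-y\|_2=\|x-y\|_2$, so Prop.~\ref{prop:drho} also returns $\tfrac12\|x-y\|_2$. (Concretely: for $\rho_x=\mathrm{diag}(1,0)$, $\rho_y=\mathrm{diag}(0,1)$ the supremum of $a_{11}-a_{22}$ subject to $\lambda_{\max}(a)-\lambda_{\min}(a)\le1$ is $1$, while $\|x-y\|_2=2$.) In fact your computation exposes a compensating slip in the paper's own proof: since $[\sigma_i,\sigma_j]=2i\epsilon_{ijk}\sigma_k$, one has $[\vec v\cdot\vec\sigma,\vec w\cdot\vec\sigma]=2i(\vec v\wedge\vec w)\cdot\vec\sigma$, so the numerator in the paper's displayed supremum should be $4\|v\wedge w\|_2^2$, giving $\|[D,a]\|=2\|v\|_2$ in agreement with your spectral spread. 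With $D(a)=a^*$ the spectral distance is thus \emph{proportional} to the Euclidean distance but equals $\tfrac12\|x-y\|_2$; the statement as written would require rescaling $D$ to $a\mapsto\tfrac12 a^*$. As it stands, your proof asserts a conclusion that its own (correct) steps do not support.
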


\begin{proof}
Let $a=a_01+\sum_iv_i\sigma_i$,
with $a_0\in\R$ and $v=(v_1,v_2,v_3)\in\R^3$ and note that
$
\tr\big\{(\rho_x-\rho_y)a\big\}=\inner{x-y,v}
$
is the canonical inner product on $\R^3$, and $d_D(x,y)$ is
its sup under the condition $\|[D,a]\|\leq 1$.
Since $D$ is an isometry, $[D,a]$ and $D[D,a]$ have the same norm. But $D[D,a]b=[a,b]$ and
$$
\|[D,a]\|^2=\sup_{b\neq 0}\frac{\inner{[a,b],[a,b]}_{\R}}{\inner{b,b}_{\R}} \;.
$$
Writing $a=a_01+\sum_iv_i\sigma_i$ as above and $b=b_01+\sum_iw_i\sigma_i$, $b_0\in\C$, $w\in\C^3$:
$$
\|[D,a]\|^2=\sup_{b\neq 0}\frac{\|v\wedge w\|_2^2}{|b_0|^2+\|w\|^2_2} \;.
$$
The maximum is reached when $b_0=0$ and $w$ is orthogonal to $v$,
and we get $\|[D,a]\|=\|v\|_2$. Therefore
$$
d_D(x,y)=\sup_{v\in\R^3}\big\{ \inner{x-y,v}:\|v\|_2\leq 1 \big\}\;,
$$
which is equal to $\|x-y\|_2$ by Cauchy-Schwarz inequality.
\end{proof}

We can get the same metric from a complex-linear Dirac operator, if we use a degenerate representation of the algebra. Let $\A$ as above, with representation on $\widetilde{\HH}=\C^2\oplus\C^2$ given by $\widetilde{\pi}(a):=a\oplus 0$ (so, acting by matrix multiplication on the first summand, and trivially on the second), and let $\widetilde{D}(v\oplus w)=w\oplus v$ be the flip.

\begin{prop}
For all $x,y\in B_3$, $d_{\widetilde{D}}(x,y)=\|x-y\|_2$.
\end{prop}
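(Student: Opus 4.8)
The plan is to mirror the proof of the previous proposition: the linear functional being maximized is literally the same, so the only new ingredient is the computation of $\|[\widetilde{D},\widetilde{\pi}(a)]\|$ for the flip.

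First I would recall that, writing a selfadjoint $a\in\A$ in the Pauli basis as $a=a_0 1+\sum_i v_i\sigma_i$ with $a_0\in\R$ and $v=(v_1,v_2,v_3)\in\R^3$, one has $\tr\{(\rho_x-\rho_y)a\}=\inner{x-y,v}$, exactly as before (the states $\tr(\rho_x\,\cdot\,)$ are intrinsic to $\A$ and do not see the chosen representation). Thus $d_{\widetilde{D}}(x,y)$ is the supremum of $\inner{x-y,v}$ over those $a$ with $\|[\widetilde{D},\widetilde{\pi}(a)]\|\le 1$, and in particular the objective depends on $a$ only through $v$.

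Next I would compute the commutator. Acting on $\widetilde{\HH}=\C^2\oplus\C^2$, for $v\oplus w\in\widetilde{\HH}$ one finds
$$
[\widetilde{D},\widetilde{\pi}(a)](v\oplus w)=(-aw)\oplus(av),
$$
since $\widetilde{\pi}(a)$ kills the second summand while $\widetilde{D}$ swaps the two summands. Hence $\|[\widetilde{D},\widetilde{\pi}(a)](v\oplus w)\|^2=\|av\|^2+\|aw\|^2$, and dividing by $\|v\|^2+\|w\|^2$ and taking the supremum gives $\|[\widetilde{D},\widetilde{\pi}(a)]\|=\|a\|$, the operator norm of $a$ on $\C^2$ (equivalently, $[\widetilde{D},\widetilde{\pi}(a)]^*[\widetilde{D},\widetilde{\pi}(a)]=a^*a\oplus a^*a$).

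Finally, since $a$ is selfadjoint its operator norm is the larger modulus of its two eigenvalues $a_0\pm\|v\|_2$, namely $\|a\|=|a_0|+\|v\|_2$. As the objective $\inner{x-y,v}$ is independent of $a_0$, the constraint $|a_0|+\|v\|_2\le 1$ is least restrictive — and the supremum therefore unchanged — when $a_0=0$, i.e.\ $\|v\|_2\le 1$. This yields $d_{\widetilde{D}}(x,y)=\sup_{\|v\|_2\le 1}\inner{x-y,v}=\|x-y\|_2$ by Cauchy-Schwarz, reproducing the answer of the previous proposition. The only nontrivial step is the commutator-norm computation, but the degenerate (rank-halving) form of $\widetilde{\pi}$ makes it immediate; everything else is identical to the real-Dirac-operator case.
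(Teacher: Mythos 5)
Your proposal is correct and follows essentially the same route as the paper: both reduce the computation to showing $\|[\widetilde{D},\widetilde{\pi}(a)]\|=\|a\|$ (the paper does this by noting $\widetilde{D}[\widetilde{D},\widetilde{\pi}(a)]=a\oplus -a$ with $\widetilde{D}$ an isometry, you by computing the block form of the commutator and its square --- a trivial variation), then both use $\|a\|=|a_0|+\|v\|_2$ in the Pauli basis and conclude by Cauchy--Schwarz with the saturating choice $a_0=0$, $v=(x-y)/\|x-y\|_2$.
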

\begin{proof}
Since $D[D,\pi(a)]=a\oplus -a\;\forall\;a\in\A$, $\|[D,a]\|$ is the matrix norm of $a$. Writing as before $a=a_01+\sum_iv_i\sigma_i$, with $a_0\in\R$ and $v=(v_1,v_2,v_3)\in\R^3$, we find
$\|a\|=|a_0|+\|v\|_2$. By Cauchy-Schwarz inequality, $\inner{x-y,v}\leq\|x-y\|_2\|v\|_2$, and the inequality is saturated if $a_0=0$ and $v=(x-y)/\|x-y\|_2$.
\end{proof}

Of the two spectral triples above, one was over the field of real numbers, and one was not unital ($\widetilde{\pi}(1)$ is not the identity on $\widetilde{\HH}$).
Another way to get the Euclidean distance, from a proper unital spectral triple (over $\C$), is via the the $SU(2)$-equivariant Dirac operator of the fuzzy sphere \cite[Prop.~4.3]{DLV12}.

Still another unital spectral triple on $M_2(\C)$ is the one obtained as a truncation of Moyal spectral triple \cite[\S4.3]{CDMW09}, which I briefly describe in the following.

Let $\A=M_2(\C)$ and $\widehat{\HH}=\C^2\oplus\C^2$ as before, but with representation $\widehat{\pi}(a):=a\oplus a$, and let $\widehat{D}$ be given by
$\widehat{D}(v\oplus w)=D_-(w)\oplus D_+(v)$, where
$$
D_+=\left[\begin{array}{cc} 0 & 1 \\ 0 & 0 \end{array}\right]
$$
and $D_-=(D_+)^*$. In this case the computation of $d_{\widehat{D}}$
is a bit more involved.

\begin{prop}[Prop.~4.4 of \cite{CDMW09}]
For all $x,y\in B_3$, with $x\neq y$, one has $d_{\widehat{D}}(x,y)=c(\theta)\|x-y\|_2$, where $\theta\in[0,\pi]$ is the polar angle of $x-y$ and
$$
c(\theta)=
\left\{\!\begin{array}{ll}
\;\sin\theta & \text{if }\;\frac{\pi}{4}\leq\theta\leq\frac{3\pi}{4},\\
|2\cos\theta|^{-1} & \text{otherwise}.
\end{array}\right.
$$
\end{prop}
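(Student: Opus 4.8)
The plan is to reduce the computation of $d_{\widehat{D}}$ to an explicit optimization in $\R^3$ and then solve it. First I would compute the commutator norm $\|[\widehat{D},\widehat{\pi}(a)]\|$ for an arbitrary selfadjoint $a\in\A$. Writing operators on $\widehat{\HH}=\C^2\oplus\C^2$ as $2\times2$ blocks, the representation is diagonal, $\widehat{\pi}(a)=\mathrm{diag}(a,a)$, while $\widehat{D}$ is purely off-diagonal with blocks $D_-$ and $D_+$; hence
$$
[\widehat{D},\widehat{\pi}(a)]=\begin{bmatrix}0 & [D_-,a]\\ [D_+,a] & 0\end{bmatrix}.
$$
For an off-diagonal block operator the norm is the maximum of the norms of the two blocks, and since $a=a^*$ one has $[D_+,a]^*=-[D_-,a]$, so the two blocks have equal norm. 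Therefore $\|[\widehat{D},\widehat{\pi}(a)]\|=\|[D_+,a]\|$.

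Second, I would make this explicit. Writing $a=a_01+\sum_i v_i\sigma_i$ with $a_0\in\R$, $v\in\R^3$, a short computation of $[D_+,\sigma_i]$ gives
$$
[D_+,a]=\begin{bmatrix} z & -2v_3\\ 0 & -z\end{bmatrix},\qquad z:=v_1+iv_2.
$$
Diagonalizing $[D_+,a]^*[D_+,a]$ (a positive $2\times2$ matrix of determinant $|z|^4$) and simplifying the larger eigenvalue yields the clean formula $\|[\widehat{D},\widehat{\pi}(a)]\|=\|v\|_2+|v_3|$. On the other hand, using $\rho_x-\rho_y=\tfrac12\sum_i(x_i-y_i)\sigma_i$ and $\tr(\sigma_i\sigma_j)=2\delta_{ij}$, the objective functional is $\tr\big((\rho_x-\rho_y)a\big)=\inner{x-y,v}$, independent of $a_0$. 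Hence, with $u:=x-y$,
$$
d_{\widehat{D}}(x,y)=\sup_{v\in\R^3}\big\{\inner{u,v}:\|v\|_2+|v_3|\le1\big\},
$$
i.e.\ $d_{\widehat{D}}(x,y)$ is the support function at $u$ of the convex body $\B=\{v:\|v\|_2+|v_3|\le1\}$.

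Finally I would evaluate this support function. The body $\B$ is invariant under rotations about the $v_3$-axis and under $v_3\mapsto-v_3$, so the optimal $v$ lies in the meridian half-plane containing $u$; writing $u=\|u\|_2(\sin\theta,0,\cos\theta)$ and $v=(\rho,0,v_3)$ with $\rho\ge0$, the problem becomes the maximization of $\sin\theta\,\rho+\cos\theta\,v_3$ over $\sqrt{\rho^2+v_3^2}+|v_3|\le1$. The boundary consists of two parabolic arcs $v_3=\pm(1-\rho^2)/2$ meeting at the equatorial vertex $(\rho,v_3)=(1,0)$. A one-variable computation on the upper (resp.\ lower) arc produces an interior maximum $\tfrac1{2\cos\theta}$ when $\tan\theta\in[0,1]$ (resp.\ $\tfrac1{2|\cos\theta|}$ when $\tan\theta\in[-1,0]$); outside these ranges the maximum is attained at the vertex $(1,0)$, where the value is $\sin\theta$. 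Collecting the cases reproduces $c(\theta)$, and multiplying by $\|u\|_2$ gives the claim.

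The main obstacle is this last step: one must correctly identify where the linear functional is maximized over $\B$, in particular that for the intermediate angles $\theta\in[\pi/4,3\pi/4]$ the maximum sits at the non-smooth equatorial vertex rather than on either arc. The cleanest way to pin down the thresholds $\pi/4$ and $3\pi/4$ is to check when the gradient direction $(\sin\theta,\cos\theta)$ lies in the normal cone of $\B$ at $(1,0)$, spanned by the outward normals $(1,\pm1)$ of the two arcs; this occurs exactly when $\sin\theta\ge|\cos\theta|$.
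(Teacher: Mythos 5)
Your proposal is correct and follows essentially the same route as the paper: both reduce the problem via $\|[\widehat{D},\widehat{\pi}(a)]\|=\|[D_+,a]\|=\|v\|_2+|v_3|$ to maximizing $\inner{x-y,v}$ over the convex body $\{v:\|v\|_2+|v_3|\leq 1\}$, and then resolve the same three-case split at $\theta=\pi/4$ and $\theta=3\pi/4$ by elementary calculus. The only difference is cosmetic: the paper parametrizes the final optimization in polar coordinates (saturating the radial constraint and maximizing $\frac{|\cos\theta|\cos\phi+\sin\theta\sin\phi}{1+\cos\phi}$ over $\phi$, with critical points at $\tan\frac{\phi}{2}=|\tan\theta|$), whereas you work in Cartesian coordinates on the meridian plane, identifying the boundary as two parabolic arcs and using the normal cone at the equatorial vertex to pin down the thresholds --- an equivalent and equally valid bookkeeping of the same cases.
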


\begin{proof}

As in the proof of Lemma \ref{lemma:DaC3}, $\|[\widehat{D},a]\|=\|[D_+,a]\|$ for all $a\in\mc{\A}^{\mathrm{sa}}$.
For $a=a_01+\sum_iv_i\sigma_i$, with $a_0\in\R$ and $v=(v_1,v_2,v_3)\in\R^3$,
$$
\|[D_+,a]\|=\|v\|_2+|v_3| \;.
$$
Due to a rotational symmetry on the horizontal plane,  we see that 
$d_{\widehat{D}}(x,y)$ is the sup of
$$
\inner{x-y,v}=r t (\cos\theta \cos\phi+\sin\theta \sin\phi)
$$
where $r=\|x-y\|_2$, $t=\|v\|_2$, $\phi$ is the polar angle of $v$ and $\theta$ the one of $x-y$. The norm constraint is
$t(1+|\cos\phi|)\leq 1$.
The factor $c(\theta)$ in the proposition is then given by
$$
c(\theta)=\sup_{0\leq\phi\leq \pi/2}\frac{|\cos\theta| \cos\phi+\sin\theta\sin\phi}{1+\cos\phi} \;.
$$
In the range $[0,\frac{\pi}{2}]$, the derivative vanishes only if
$\tan\frac{\phi}{2}=|\tan\theta|$, i.e.~i)~$\phi=2\theta$ if $\theta\in[0,\frac{\pi}{4}]$, ii)~$\phi=2(\pi-\theta)$ if $\theta\in[\frac{3\pi}{4},\pi]$, and iii)~it is always non-negative if $\frac{\pi}{4}<\theta<\frac{3\pi}{4}$. In case (i) and (ii), the stationary point is a maximum, and gives $c(\theta)=|2\cos\theta|^{-1}$. In case (iii), the sup is for $\phi=\pi/2$, and we get $c(\theta)=\sin\theta$.
\end{proof}

\subsection{Wasserstein distance between quantum states}\label{sec:qWass}
We can define a Wasserstein metric on quantum states by means of Berezin quantization.

Let $G$ be a compact group and $U:G\to A:=\B(\HH)$ a finite-dimensional unitary irreducible representation of $G$. The adjoint action $\alpha$ on $A$ is defined by $\alpha_x(a)=U_xaU_x^*$. I will denote by $\inner{a,b}_{\mathrm{HS}}=\frac{1}{N}\,\tr(a^*b)$ the Hilbert-Schmidt inner product on $A$, with $N:=\dim_{\C}(\HH)$, and by $\|a\|_{\mathrm{HS}}=\inner{a,a}_{\mathrm{HS}}^{1/2}$ the corresponding norm. Let $P\in\B(\HH)$ be a density matrix, $H\subset G$ the stabilizer of $P$:
$$
H:=\{x\in G:\alpha_x(P)=P\} \;,
$$
and $M=G/H$ the quotient space. I will denote by $\de\mu$ the $G$-invariant measure on $M$, normalized to $1$, and by
$$
\inner{f,g}_{L^2}:=\int_Mf(x)^*g(x)\de\mu_x
$$
the inner product on $L^2(M,\de\mu)$, where functions on $M$ are identified with right $H$-invariant functions on $G$.
We can define two linear maps, a \emph{symbol map} $A\to C(M)$,
$a\mapsto\sigma_a$, and a \emph{quantization map} $C(M)\to A$,
$f\mapsto\mc{Q}_f$, as follows:
$$
\sigma_a(x):=\tr\big(\alpha_x(P)a\big) \;,\qquad\quad
\mc{Q}_f:=N\int_M\alpha_x(P)f(x)\de\mu_x
$$
for all $a\in A$ and $f\in C(M)$. They are one the adjoint of the other, that is
$$
\inner{\sigma_a,f}_{L^2}=\inner{a,\mc{Q}_f}_{\mathrm{HS}}
 \;,\qquad\quad\forall\;a\in A,f\in C(M),
$$
as one can easily check. Since $\alpha_x(P)$ is a density matrix, the map $\sigma$ is unital, positive and norm non-increasing. With a little work one can prove that the same three properties hold for $\mc{Q}$ (see e.g.~\cite{Rie04}).
In particular, the operator
\begin{equation}\label{eq:bothsides}
N\int_M\alpha_x(P)\de\mu_x
\end{equation}
is $\alpha$-invariant, i.e.~commutes with $U_y$ for all $y\in G$. Since $U$ is irreducible, from Schur's lemma \eqref{eq:bothsides} is proportional to $1_A$. The proportionality constant can be computed by taking the trace of \eqref{eq:bothsides}, thus proving that $\mc{Q}$ maps $1$ to $1_A$.

Due to the above properties, $\frac{1}{N}\mc{Q}$ sends probability measures into density matrices, and $N\sigma$ sends density matrices into probability distributions\footnote{In particular,
$\int_MN\sigma_\rho(x)\de\mu_x=N\inner{\sigma_\rho,1}_{L^2}=
N\inner{\rho,1}_{\mathrm{HS}}
=\tr(\rho)=1$ for any $\rho$.}. The latter map is injective under the following assumptions: suppose $G$ is a connected compact semisimple Lie group and $P$ a rank-one projection (a pure state) which has the highest weight vector of the representation in its range. Then $\mathrm{Span}\{\alpha_x(P):x\in G\big\}=\B(\HH)$
\cite[Thm.~3.1]{Rie04}, the symbol map is injective\footnote{$\sigma_a(x)=\inner{N\alpha_x(P),a}_{\mathrm{HS}}=0\;\forall\;x$ implies $\inner{b,a}_{\mathrm{HS}}=0\;\forall\;b\in\B(\HH)$, and then $a=0$.} and the quantization map -- being its adjoint -- is surjective.

Now, with a surjective symbol map we can give the following definition. Note that on $M$ there is a unique $G$-invariant Riemannian metric, with normalization fixed by the condition that the associated volume form is $\de\mu$.

\begin{df}
We call \emph{cost-distance} $W(\rho,\tau)$ between two density matrices the Wasserstein distance of order $1$ between the two probability distributions $N\sigma_\rho$ and $N\sigma_\tau$, with cost given by the geodesic distance on $M$.
\end{df}

Note that the map $\rho\mapsto N\sigma_\rho$ coincides with the map $\mc{S}(A)\to\mc{S}(C(M))$ obtained by pulling back states with the quantization map that we considered in \S6 of \cite{DLM13}. This follows from the identity $\inner{N\sigma_\rho,f}_{L^2}=\tr(\rho\hspace{1pt}\mc{Q}_f)$.

\smallskip

That $W$ is finite follows from the next proposition.

\begin{prop}
$W(\rho,\tau)\leq N^{3/2}\ell \|\rho-\tau\|_{\mathrm{HS}}$
for all density matrices $\rho$ and $\tau$, where
$$
\ell:=\int_Md_{\mathrm{geo}}(e,x)\de\mu_x
$$
is the mean value of the geodesic distance.
\end{prop}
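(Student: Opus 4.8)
The plan is to combine Kantorovich--Rubinstein duality with a pointwise estimate on the symbol $\sigma_{\rho-\tau}$. Since $M=G/H$ is a compact homogeneous space, $d_{\mathrm{geo}}$ is bounded and $\ell<\infty$, so the Wasserstein distance of order $1$ between the probability densities $N\sigma_\rho$ and $N\sigma_\tau$ admits the dual description (see e.g.~\cite{Vil08})
\[
W(\rho,\tau)=N\sup_{f}\int_M f(x)\,\sigma_{\rho-\tau}(x)\,\de\mu_x \;,
\]
where the supremum runs over real functions $f$ on $M$ that are $1$-Lipschitz for $d_{\mathrm{geo}}$. Here I use linearity of the symbol map, $\sigma_\rho-\sigma_\tau=\sigma_{\rho-\tau}$, and that $\sigma_{\rho-\tau}$ is real-valued, since $\tr(\alpha_x(P)(\rho-\tau))$ is the trace of a product of two selfadjoint operators.

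First I would remove the ambiguity of $f$ up to an additive constant. From the adjunction between $\sigma$ and $\mc{Q}$, together with $\mc{Q}_1=1_A$,
\[
\int_M\sigma_{\rho-\tau}\,\de\mu=\inner{\sigma_{\rho-\tau},1}_{L^2}=\inner{\rho-\tau,\mc{Q}_1}_{\mathrm{HS}}=\inner{\rho-\tau,1_A}_{\mathrm{HS}}=\tfrac{1}{N}\tr(\rho-\tau)=0 \;.
\]
Hence replacing $f$ by $f-f(e)$ leaves the integral unchanged, and we may assume $f(e)=0$, where $e\in M$ is the class of the identity. The $1$-Lipschitz condition then yields $|f(x)|=|f(x)-f(e)|\le d_{\mathrm{geo}}(e,x)$ for all $x\in M$.

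The main input is a pointwise bound on the symbol. Writing $\sigma_{\rho-\tau}(x)=\tr(\alpha_x(P)(\rho-\tau))=N\inner{\alpha_x(P),\rho-\tau}_{\mathrm{HS}}$ (again using that $\alpha_x(P)$ is selfadjoint), Cauchy--Schwarz for $\inner{\,,\,}_{\mathrm{HS}}$ gives $|\sigma_{\rho-\tau}(x)|\le N\,\|\alpha_x(P)\|_{\mathrm{HS}}\,\|\rho-\tau\|_{\mathrm{HS}}$. Since $P$ is a rank-one projection and $\alpha_x$ is a $*$-automorphism, $\alpha_x(P)^2=\alpha_x(P)$, whence $\|\alpha_x(P)\|_{\mathrm{HS}}^2=\tfrac{1}{N}\tr(\alpha_x(P))=\tfrac{1}{N}$ independently of $x$. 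Therefore $|\sigma_{\rho-\tau}(x)|\le N^{1/2}\|\rho-\tau\|_{\mathrm{HS}}$.

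Putting the three estimates together,
\[
W(\rho,\tau)\le N\int_M |f(x)|\,|\sigma_{\rho-\tau}(x)|\,\de\mu_x
\le N^{3/2}\|\rho-\tau\|_{\mathrm{HS}}\int_M d_{\mathrm{geo}}(e,x)\,\de\mu_x
=N^{3/2}\ell\,\|\rho-\tau\|_{\mathrm{HS}} \;,
\]
which is the assertion (and in particular shows $W<\infty$). I expect the only delicate point to be the bookkeeping of the factors of $N$ coming from the convention $\inner{a,b}_{\mathrm{HS}}=\tfrac{1}{N}\tr(a^*b)$ and from the densities $N\sigma_\rho$; everything else reduces to Cauchy--Schwarz and the $1$-Lipschitz estimate. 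One should also record that $\ell$ is independent of the chosen basepoint, which follows from the $G$-invariance of both $d_{\mathrm{geo}}$ and $\de\mu$.
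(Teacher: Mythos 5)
Your proof is correct and follows essentially the same route as the paper: Kantorovich duality, subtraction of the constant $f(e)$ (justified by $\int_M\sigma_{\rho-\tau}\,\de\mu=0$, which you obtain via the adjunction $\inner{\sigma_a,f}_{L^2}=\inner{a,\mc{Q}_f}_{\mathrm{HS}}$ with $\mc{Q}_1=1_A$, while the paper equivalently invokes $N\int_M\alpha_x(P)\,\de\mu_x=1$ and $\inner{1,\rho-\tau}_{\mathrm{HS}}=0$), followed by Cauchy--Schwarz together with $\|\alpha_x(P)\|_{\mathrm{HS}}=N^{-1/2}$ and the $1$-Lipschitz bound. The bookkeeping of the powers of $N$ is also correct, so there is nothing to add.
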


\begin{proof}
$W(\rho,\tau)$ is the sup over all $1$-Lipschitz functions $f$ of:
\begin{align*}
N\int_M\big\{\sigma_\rho(x)-\sigma_\tau(x)\big\}f(x)\de\mu_x &=
N^2\int_M\inner{\alpha_x(P),\rho-\tau}_{\mathrm{HS}}f(x)\de\mu_x
\\
&=N^2\int_M\inner{\alpha_x(P),\rho-\tau}_{\mathrm{HS}}\big\{f(x)-f(e)\big\}\de\mu_x \;.
\end{align*}
Here $e\in M$ is the class of the unit element of $G$, and we used the fact that $N\int_M\alpha_x(P)\de\mu_x=1$, and $\inner{1,\rho-\tau}_{\mathrm{HS}}=N^{-1}(\tr\,\rho-\tr\,\tau)=0$. 
Note also that $\|\alpha_x(P)\|_{\mathrm{HS}}=N^{-\frac{1}{2}}$, since $\alpha_x(P)$ is a rank $1$ projection. Hence from Cauchy-Schwarz inequality and the $1$-Lipschitz condition:
$$
W(\rho,\tau)\leq N^{3/2}	\|\rho-\tau\|_{\mathrm{HS}}\int_Md_{\mathrm{geo}}(e,x)\de\mu_x \;. \vspace{-18pt}
$$
\end{proof}

Let $\rho_x:=\alpha_x(P)$ be the density matrix associated to a coherent state, depending on the class $x\in M$. The map $x\mapsto\rho_x$ is a homeomorphism from $M$ to the set of coherent states with weak$^*$ topology.

\begin{prop}
If $G$ is abelian, then $W(\rho_x,\rho_y)\leq d_{\mathrm{geo}}(x,y)\;\forall\;x,y\in M$.
\end{prop}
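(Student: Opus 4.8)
The plan is to bound $W(\rho_x,\rho_y)$ from above by exhibiting a single admissible transport plan, since $W$ is by definition an infimum over couplings of the two probability measures $N\sigma_{\rho_x}\de\mu$ and $N\sigma_{\rho_y}\de\mu$ on $M$ (recall $\int_M N\sigma_\rho\,\de\mu=1$ for any density matrix $\rho$, so these are genuine probability measures). First I would record the $G$-equivariance of the symbol map: a short trace manipulation using cyclicity and $U_x^*\alpha_z(P)U_x=\alpha_{x^{-1}z}(P)$ gives $\sigma_{\alpha_x(a)}(z)=\sigma_a(x^{-1}z)$ for all $z\in M$. Applied to $a=P$ this says that $N\sigma_{\rho_x}$ is the left $G$-translate by $x$ of the fixed density $\nu:=N\sigma_P$ (note $\rho_e=P$); in particular both $N\sigma_{\rho_x}$ and $N\sigma_{\rho_y}$ are translates of one and the same measure $\nu$.

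Next I would take the left translation $T:M\to M$, $T(z)=(yx^{-1})\hspace{1pt}z$, as a deterministic coupling. Since $\de\mu$ is $G$-invariant, the previous step gives $T_*\big(N\sigma_{\rho_x}\de\mu\big)=N\sigma_{\rho_y}\de\mu$ (using only $L_{yx^{-1}}L_x=L_y$, valid in any group), so $T$ is admissible and
$$
W(\rho_x,\rho_y)\leq\int_M d_{\mathrm{geo}}\big(z,(yx^{-1})\hspace{1pt}z\big)\,N\sigma_{\rho_x}(z)\,\de\mu_z \;.
$$
The whole argument then rests on showing that the integrand is constant, equal to $d_{\mathrm{geo}}(x,y)$, and this is the step I expect to be the crux. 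Writing $z=gH$ and $h:=yx^{-1}$, commutativity gives $hz=hgH=ghH=g\cdot(hH)$, so by $G$-invariance of the metric $d_{\mathrm{geo}}(z,hz)=d_{\mathrm{geo}}(gH,g\cdot hH)=d_{\mathrm{geo}}(H,hH)$, independent of $z$; a second use of commutativity ($yx^{-1}=x^{-1}y$) with invariance identifies $d_{\mathrm{geo}}(H,hH)=d_{\mathrm{geo}}(H,x^{-1}yH)=d_{\mathrm{geo}}(xH,yH)=d_{\mathrm{geo}}(x,y)$.

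With the integrand constant the integral collapses to $d_{\mathrm{geo}}(x,y)\int_M N\sigma_{\rho_x}\,\de\mu=d_{\mathrm{geo}}(x,y)$, which is the claimed bound. The only genuine difficulty is the displacement identity $d_{\mathrm{geo}}(z,hz)\equiv d_{\mathrm{geo}}(x,y)$: abelianness makes left translation by $h$ a \emph{Clifford translation}, moving every point of $M$ the same geodesic distance, so the deterministic plan attains a $z$-independent cost. For non-abelian $G$ this fails — $h\cdot z$ and $g\cdot(hH)$ differ, different points are displaced by different amounts, and no single translation gives a constant cost — which is consistent with the statement being asserted only in the abelian case.
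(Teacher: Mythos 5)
Your proof is correct, but it runs on the opposite side of Kantorovich duality from the paper's. The paper works entirely with the dual formulation: it writes $W(\rho_x,\rho_y)$ as a supremum over $1$-Lipschitz functions $f$ of $W_f(x,y)=N^2\int_M\inner{\alpha_z(P),\rho_x-\rho_y}_{\mathrm{HS}}f(z)\de\mu_z$, uses the same equivariance you record ($\inner{\alpha_z(P),\alpha_x(P)}_{\mathrm{HS}}=\inner{\alpha_{x^{-1}z}(P),P}_{\mathrm{HS}}$) plus a change of variables to rewrite this as $N^2\int_M\inner{\alpha_z(P),P}_{\mathrm{HS}}\{f(xz)-f(yz)\}\de\mu_z$, and then combines three facts: positivity of $\inner{\alpha_z(P),P}_{\mathrm{HS}}=|\inner{U_z\xi,\xi}|^2$, its normalization $N^2\int_M\inner{\alpha_z(P),P}_{\mathrm{HS}}\de\mu_z=1$, and the Lipschitz bound $f(xz)-f(yz)\leq d_{\mathrm{geo}}(xz,yz)$; commutativity enters only at the very end, exactly as in your argument, to give $d_{\mathrm{geo}}(xz,yz)=d_{\mathrm{geo}}(zx,zy)=d_{\mathrm{geo}}(x,y)$. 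You instead work on the primal side: you exhibit the explicit deterministic coupling $T(z)=(yx^{-1})z$, check the pushforward identity $T_*\big(N\sigma_{\rho_x}\de\mu\big)=N\sigma_{\rho_y}\de\mu$, and observe that in the abelian case this translation is a Clifford translation, displacing every point by exactly $d_{\mathrm{geo}}(x,y)$. What your route buys: it uses only the easy direction of optimal transport (any admissible plan bounds the infimum from above), so it does not need the nontrivial half of Kantorovich duality that the paper implicitly invokes when it identifies $W$, defined as an infimum over couplings, with the supremum over Lipschitz functions; it also makes the transport geometrically explicit as a rigid translation, suggesting (though not proving) optimality. What the paper's route buys: it stays within the dual formalism used consistently in that section (the preceding proposition bounding $W$ by $N^{3/2}\ell\|\rho-\tau\|_{\mathrm{HS}}$ is proved the same way), and it avoids the measure-theoretic bookkeeping of pushforwards. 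The geometric crux --- left $G$-invariance of $d_{\mathrm{geo}}$ plus commutativity forcing constant displacement --- is identical in both proofs, and both correctly isolate it as the only place where the abelian hypothesis is used.
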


\begin{proof}
$W(\rho_x,\rho_y)$ is the sup over all $1$-Lipschitz functions $f$ of:
\begin{align*}
W_f(x,y) &:=
N^2\int_M\inner{\alpha_z(P),\rho_x-\rho_y}_{\mathrm{HS}}f(z)\de\mu_z
\\[2pt]
&=N^2\int_M\inner{(\alpha_{x^{-1}z}-\alpha_{y^{-1}z})P,P}_{\mathrm{HS}}f(z)\de\mu_z
\\[2pt]
&=N^2\int_M\inner{\alpha_z(P),P}_{\mathrm{HS}}\big\{f(xz)-f(yz)\big\}\de\mu_z \;.
\end{align*}
Writing $P=\xi\inner{\xi,\,.\,}_{\HH}$, with $\xi\in\HH$ a unit vector in the range of $P$, we deduce that
$$
\inner{\alpha_z(P),P}_{\mathrm{HS}}=\left|\inner{U_z\xi,\xi}\right|^2
$$
is non-negative. Also
$$
N^2\int_M\inner{\alpha_z(P),P}_{\mathrm{HS}}\de\mu_z=
N\inner{1_A,P}_{\mathrm{HS}}=\tr(P)=1 \;.
$$
From the $1$-Lipschitz condition we get:
$$
W_f(x,y) \leq N^2\int_M\inner{\alpha_z(P),P}_{\mathrm{HS}}d_{\mathrm{geo}}(xz,yz)\de\mu_z
  \;.
$$
Here is where we need the hypothesis that $G$ is abelian: the geodesic distance is \emph{left} $G$-invariant, so if $G$ is abelian $d_{\mathrm{geo}}(xz,yz)=d_{\mathrm{geo}}(x,y)$ and we get:
$$
W_f(x,y) \leq d_{\mathrm{geo}}(x,y)\cdot N^2\int_M\inner{\alpha_z(P),P}_{\mathrm{HS}}\de\mu_z=d_{\mathrm{geo}}(x,y) \;. \vspace{-15pt}
$$
\end{proof}

\begin{ex}
If $G=SU(2)$ and $\HH=\C^2$ is the defining representation, one can explicitly compute the cost-distance $W$ between arbitrary density matrices: with the identification $\mc{S}(M_2(\C))\simeq B_3$, one finds that $W$ is proportional to the Euclidean distance on the unit ball.

If $G=SU(N)$ and $\HH=\C^N$ is the defining representation, the set of coherent states and the set of pure states coincide (both are isomorphic to $M=\C\mathrm{P}^{N-1}$).
\end{ex}

Since $\mc{Q}:C(M)\to A$ is surjective, we can define a quotient seminorm $L$ on $A$ as:
$$
L(a):=\inf_{f\in C(M)}\big\{\|f\|_{\mathrm{Lip}} :\mc{Q}_f=a \big\}\,,
$$
where $\|f\|_{\mathrm{Lip}}=\sup_{x\neq y}|f(x)-f(y)|/d_{\mathrm{geo}}(x,y)$ is the Lipschitz seminorm. Since $\inner{\sigma_\rho,f}_{L^2}=\inner{\rho,\mc{Q}_f}_{\mathrm{HS}}$ for any density matrix $\rho$, we get
$$
W(\rho,\tau)=\sup_{a\in A^{\mathrm{sa}}}\big\{
\inner{\rho-\tau,a}_{\mathrm{HS}}:L(a)\leq 1\big\} \;.
$$
It would be nice to prove that the seminorm $L$ comes from a spectral triple. A possible route to spectral triples is by extending the quantization map from functions on $M$ to spinors. The case $G=SU(2)$ is discussed in \cite{DLV12} and in \S6.4 of \cite{DLM13}. We can interpret the first inequality in Prop.~6.16 of \cite{DLM13} as follows: the spectral distance associated to the natural Dirac operator bounds the cost-distance from above.

\section{Pythagoras equality for commutative spectral triples}\label{sec:commex}

In this section, I collect some (commutative) examples of products for which Pythagoras equality holds.
Further examples, including a noncommutative one (Moyal plane), are briefly discussed in the next section (cf.~Cor.~\ref{cor:7.3}).

\subsection{Pythagoras for a product of Riemannian manifolds}\label{sec:Hodge}

Given the Hodge spectral triples of two manifolds $M_1$ and $M_2$, we can define two spectral triples on $M=M_1\times M_2$. One as product of the spectral triples of the two factors, and one as the Hodge spectral triple associated to the product Riemannian metric on $M$ cf.~\eqref{eq:lineel}. These two give the same distance, cf.~\S3.2 of \cite{DM13}.

Verifying Pythagoras equality for the product of the Hodge spectral triples of two Riemannian manifold is then reduced to the problem of proving that the product metric on $M$ in the sense of Riemannian manifold, i.e.~\eqref{eq:lineel}, induces  the product distance in the sense of metric spaces, cf.~\S\ref{sec:2.1}.
This can be proved as follows: given two points $x=(x_1,x_2),y=(y_1,y_2)\in M_1$ 
let $t\mapsto z(t)=(z_1(t),z_2(t))$ be a geodesic between $x$ and $y$,
parametrized by its proper length (it is enough to give the proof when $x,y$ are in the same connected component). One can prove that
$t\mapsto z_1(t)$ is a geodesic in $M_1$ between $x_1$ and $y_1$, and similarly $t\mapsto z_2(t)$ is a geodesic in $M_2$ between $x_2$ and $y_2$, cf. \S3.1 of \cite{DM13}, and that $t$ is an affine parameter
(not necessarily the proper length) for both curves. Integrating the line element one then proves that:
$$
d_D(x,y)^2=d_{D_1}(x_1,y_1)^2+d_{D_2}(x_2,y_2)^2 \;,
$$
where $d_{D_i}$ is the spectral/geodesic distance on $M_i$, and $d_D$ is the one of the Cartesian product. So, the product of Hodge spectral triples is orthogonal in the sense of Def.~\ref{def:2.2}.

In the proof, it is crucial the use of geodesics. In \S\ref{sec:CN}, I will show how to give a completely algebraic proof of Pythagoras equality for arbitrary finite metric spaces.

Let me also stress that, already in the example of Riemannian manifolds, Pythagoras equality doesn't hold for arbitrary states. In \S 3.3 of \cite{DM13} we discuss a simple example where the ratio $d_D/d_{D_1}\!\boxtimes d_{D_2}$ assumes all possible values between $1$ and $\sqrt{2}$.

\subsection{Product of two-point spaces}\label{sec:C2}
The simplest possible example is the product of two copies of the two-point space spectral triple discussed in \S\ref{sec:twopoints}.
Let then $(\A_1,\HH_1,D_1,\gamma_1)=(\C^2,\C^2,F,\gamma)$ and $(\A_2,\HH_2,D_2)=(\C^2,\C^2,F)$.
To simplify the discussion, we fixed to $1/2$ the parameter
$\lambda$ in \S\ref{sec:twopoints}.
We now prove that:

\begin{prop}\label{lemma:C2}
For all product states $\varphi,\psi$ and all
$a\in\A^{\mathrm{sa}}$, the condition \eqref{eq:ineqD} is satisfied.
So: Pythagoras equality holds for arbitrary product states.
\end{prop}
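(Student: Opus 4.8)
The plan is to verify the hypothesis \eqref{eq:ineqD} of Proposition~\ref{prop:5} by a direct $4$-dimensional computation and then quote that proposition. Fixing $\lambda=\tfrac12$ makes $D_1=D_2=F$, so on $\HH=\C^2\otimes\C^2$ the Dirac operator $D=F\otimes 1+\gamma_1\otimes F$ is an explicit self-adjoint $4\times 4$ matrix, while a self-adjoint $a\in\A^{\mathrm{sa}}$ is a real diagonal matrix $\mathrm{diag}(a_1,a_2,a_3,a_4)$ (ordering the basis $\uparrow\uparrow,\uparrow\downarrow,\downarrow\uparrow,\downarrow\downarrow$). Since $a$ is diagonal, $[D,a]$ is the real antisymmetric matrix with entries $D_{kl}(a_l-a_k)$; its nonzero entries are, up to sign, $\alpha:=a_2-a_1$, $\beta:=a_3-a_1$, $\beta+\mu=a_4-a_2$ and $\alpha+\mu=a_4-a_3$, where $\mu:=a_1-a_2-a_3+a_4$.

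First I would compute $\|[D,a]\|$. For $M:=[D,a]$ one has $\|M\|^2=\|M^{\mathsf T}M\|$, and a short calculation shows that $M^{\mathsf T}M$ is block diagonal, with a $2\times 2$ block on the index pair $\{1,4\}$ having diagonal $\big(\alpha^2+\beta^2,\ (\alpha+\mu)^2+(\beta+\mu)^2\big)$ and a second block on $\{2,3\}$ having diagonal $\big(\alpha^2+(\beta+\mu)^2,\ \beta^2+(\alpha+\mu)^2\big)$. Since the top eigenvalue of a positive $2\times 2$ symmetric matrix dominates each of its diagonal entries, $\|[D,a]\|^2$ is at least each of these four diagonal numbers.

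Next I would compute $b:=P_{\varphi,\psi}(a)$. By the factorisation \eqref{eq:onemP}, $a-b=(\id-\varphi_1^\sharp)\otimes(\id-\psi_2^\sharp)(a)$ lies in the one-dimensional space $\ker\varphi_1\otimes\ker\psi_2$; explicitly it equals $\mu\,w_1\otimes w_2$ with $w_1=(1-s,-s)$ and $w_2=(1-t,-t)$, where $s,t\in[0,1]$ are the values of $\varphi_1,\psi_2$ on the first minimal projection. This forces $b\in\A_1+\A_2$, so \eqref{eq:lemma1} applies and gives $\|[D,b]\|^2=(\alpha+(1-s)\mu)^2+(\beta+(1-t)\mu)^2$ (the two summands being the squared diagonal differences of the two legs of $b$).

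The proof then closes by convexity. As a function of $(s,t)$ the quantity $\|[D,b]\|^2$ is a sum of a convex function of $s$ and a convex function of $t$, hence its maximum over $[0,1]^2$ is attained at a vertex; evaluating at $(1,1),(0,0),(1,0),(0,1)$ reproduces precisely the four block-diagonal numbers $\alpha^2+\beta^2$, $(\alpha+\mu)^2+(\beta+\mu)^2$, $\alpha^2+(\beta+\mu)^2$, $\beta^2+(\alpha+\mu)^2$ found above. Therefore
$$
\|[D,P_{\varphi,\psi}(a)]\|^2\ \le\ \max\big\{\alpha^2+\beta^2,\ (\alpha+\mu)^2+(\beta+\mu)^2,\ \alpha^2+(\beta+\mu)^2,\ \beta^2+(\alpha+\mu)^2\big\}\ \le\ \|[D,a]\|^2,
$$
which is \eqref{eq:ineqD}; Proposition~\ref{prop:5} then yields Pythagoras equality for all product states. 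The one step that must be handled carefully is the bookkeeping that organises $M^{\mathsf T}M$ into the two $2\times 2$ blocks and matches their diagonals with the four vertex values of $\|[D,b]\|^2$; once this correspondence is in place, the elementary facts ``top eigenvalue $\ge$ diagonal entry'' and ``a separately convex function is maximised at a vertex'' finish the argument, and I expect this matching to be the only real obstacle.
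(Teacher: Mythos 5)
Your proof is correct, and it verifies \eqref{eq:ineqD} by a genuinely different computation than the paper's, even though the outer skeleton (check \eqref{eq:ineqD}, then quote Prop.~\ref{prop:5}) is necessarily the same, being dictated by the statement itself. I checked your two key computations: $[D,a]^{\mathsf{T}}[D,a]$ is indeed block diagonal on the index pairs $\{1,4\}$ and $\{2,3\}$ with the four diagonal entries you list (the off-diagonal entries of the blocks, $-(\alpha-\beta)\mu$ and $-\mu(\alpha+\beta+\mu)$, are never needed), and $\|[D,P_{\varphi,\psi}(a)]\|^2=(\alpha+(1-s)\mu)^2+(\beta+(1-t)\mu)^2$ does follow from \eqref{eq:lemma1} once one notes that $P_{\varphi,\psi}(a)\in\A_1+\A_2$. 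The paper instead takes coordinates adapted to the splitting $\A=(\A_1+\A_2)\oplus\ker P_{\varphi,\psi}$, computes the \emph{exact} norm $\|[D,a]\|$ as the function \eqref{eq:fdix} of those coordinates --- a characteristic-polynomial computation partly delegated to Mathematica --- and concludes by showing that this function is non-decreasing in the $\ker P_{\varphi,\psi}$-coordinate $|x_3|$, hence minimal at $x_3=0$, where it equals $\|[D,P_{\varphi,\psi}(a)]\|$. Your route avoids the exact norm entirely: you only need the cheap lower bounds coming from the diagonal of the block-diagonal $[D,a]^{\mathsf{T}}[D,a]$, the exact (and easy) norm on the $P_{\varphi,\psi}$ side, and separate convexity in $(s,t)$ to pass from the four vertices to the whole square. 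Besides being more elementary, this exposes a structural fact that the paper's argument hides: the vertices of $[0,1]^2$ are exactly the configurations where $\varphi_1$ and $\psi_2$ are pure, so \eqref{eq:ineqD} for arbitrary product states reduces by convexity to the four pure-state cases, where it is read off directly from $[D,a]^{\mathsf{T}}[D,a]$. A further small dividend: the paper's subsequent remark on rescaled Dirac operators requires redoing the exact formula \eqref{eq:fdix} with weights, whereas your scheme extends verbatim --- the two scale factors simply multiply the two squared differences on both sides of the inequality --- so the rescaled statement comes for free.
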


\begin{proof}
Product states $\varphi=\varphi_1\otimes\varphi_2$ are in bijection with pairs $(\varphi_1,\varphi_2)\in [-1,1]\times [-1,1]$, cf.~\eqref{eq:stateC2}.
Given two product states $\varphi,\psi$, $\ker(P_{\varphi,\psi})$ is spanned by the vector:
$$
v_{\varphi,\psi}:=
\left[\!\begin{array}{cc} \varphi_1+1 & 0 \\ 0 & \varphi_1-1 \end{array}\!\right]
\otimes
\left[\!\begin{array}{cc} \psi_2+1 & 0 \\ 0 & \psi_2-1 \end{array}\!\right] \in\A^{\mathrm{sa}} \;.
$$
We can decompose any $a\in\A^{\mathrm{sa}}$ as
$$
a=x_0\hspace{1pt}1\otimes 1
+\frac{x_1}{2}\hspace{1pt}\gamma\otimes 1
+\frac{x_2}{2}\hspace{1pt}1\otimes\gamma
+\frac{x_3}{2}\hspace{1pt}v_{\varphi,\psi} \;,\qquad x_0,\ldots,x_3\in\R,
$$
where the first three terms span $\A_1+\A_2$: they are linearly independent and $\A_1+\A_2$ has dimension $\dim\A-\dim\ker(P_{\varphi,\psi})=3$.
A simple computation gives
\begin{equation}\label{eq:mat}
-(\gamma\otimes\gamma)[D,a]=F\otimes\big\{(x_1+x_3\psi_2)\gamma+x_3\big\}
+\big\{(x_2+x_3\varphi_1)+x_3\gamma\big\} \otimes F
\;,
\end{equation}
were in the computation we noticed that
$$
v_{\varphi,\psi}=
(\varphi_1\,1+\gamma)\otimes
(\psi_2\,1+\gamma) \;.
$$
With the isomorphism of unital $*$-algebras $M_2(\C)\otimes M_2(\C)\to M_4(\C)$
(the right factor inserted in $2\times 2$ blocks) we transform \eqref{eq:mat}
into the matrix:
\begingroup
\addtolength{\arraycolsep}{-5pt}
$$
L:=\left[\;\begin{matrix}
0 & (x_2+x_3\varphi_1)+x_3 & x_3+(x_1+x_3\psi_2) & 0 \\
(x_2+x_3\varphi_1)+x_3 & 0 & 0 & x_3-(x_1+x_3\psi_2) \\
x_3+(x_1+x_3\psi_2) & 0 & 0 & (x_2+x_3\varphi_1)-x_3 \\
0 & x_3-(x_1+x_3\psi_2) & (x_2+x_3\varphi_1)-x_3 & 0
\end{matrix}\;\right] .
$$
\endgroup
The eigenvalues of $L^2$ can be computed by first writing the characteristic polynomial and then solving a degree $2$ equation (or call $a=x_1+x_3\psi_2$, $b=x_2+x_3\varphi_1$,
$c=x_3$ and compute the norm as a function of $a,b,c$ with Mathematica$^\copyright$).
In this way we get $\|L\|=\|[D,a]\|$, that, as a function of $x_1,x_2,x_3$, is given by:
\begin{equation}\label{eq:fdix}
f(x):=\sqrt{2}\,|x_3|+\sqrt{(x_1+x_3\psi_2)^2+(x_2+x_3\varphi_1)^2}
\end{equation}
The norm $\|[D,P_{\varphi,\psi}(a)]\|$ is obtained from \eqref{eq:fdix} the substitution $x_3\to 0$.
From the triangle inequality $|a+b|\geq |a|-|b|$,
$$
f(x)\geq\sqrt{2}\,|x_3|+\sqrt{(|x_1|-|x_3|)^2+(|x_2|-|x_3|)^2} \;.
$$
Called $t:=|x_3|$, one can check that $\partial f/\partial t$ is non-negative, so
$f$ is an increasing function of $|x_3|$. Thus
$f(x)\geq f|_{x_3=0}$, which is what we wanted to prove.
\end{proof}

Every state of $\C^2$ is normal.
We can use this simple example to show that the upper bound in Prop.~\ref{prop:9} is not optimal.
Next proposition is independent of the choice of density matrices, i.e.~on how we extend states of $\C^2$ to $M_2(\C)$.

\begin{prop}\label{prop:11}
For $\varphi_1=\psi_2=1$, $K=\|P_{\varphi,\psi}\|=3$.
\end{prop}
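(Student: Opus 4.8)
The plan is to combine the general upper bound $K\le 3$ from Lemma \ref{lemma:K3} with a matching lower bound, obtained by exhibiting a single test element $b\in\B(\HH)$ on which $\|P_{\varphi,\psi}(b)\|=3\|b\|$. First I would pin down the maps involved. Since $\varphi_1=\psi_2=1$ are the pure states of $\C^2$, their density matrices are uniquely $e_{11}$, so that $\varphi_1^\sharp(a)=\psi_2^\sharp(a)=a_{11}1$ for every $a\in M_2(\C)$; in particular the extension of $P_{\varphi,\psi}$ to $\B(\HH)=M_4(\C)$ is canonical, which is exactly why the statement does not depend on any choice of density matrix. Writing $E:=\varphi_1^\sharp=\psi_2^\sharp$, I record the two evaluations $(\id-E)(e_{11})=e_{11}-1=-e_{22}$ and $(\id-E)(e_{22})=e_{22}$, which are all that the computation needs.

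The key step is producing the right $b$. Guided by the identity \eqref{eq:onemP}, namely $\id-P_{\varphi,\psi}=(\id-\varphi_1^\sharp)\otimes(\id-\psi_2^\sharp)$, I would look for a $b$ on which $(\id-E)\otimes(\id-E)$ acts as a scalar as large as possible while $\|b\|$ stays equal to $1$. The choice
$$
b:=e_{11}\otimes e_{11}-e_{11}\otimes e_{22}-e_{22}\otimes e_{11}
$$
works: as a diagonal matrix it is $\mathrm{diag}(1,-1,-1,0)$, so $\|b\|=1$. Applying $(\id-E)\otimes(\id-E)$ termwise and using the two evaluations above, each of the three terms contributes $e_{22}\otimes e_{22}$, so that $(\id-P_{\varphi,\psi})(b)=3\,e_{22}\otimes e_{22}$. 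Hence
$$
P_{\varphi,\psi}(b)=b-3\,e_{22}\otimes e_{22}=\mathrm{diag}(1,-1,-1,-3),
$$
whose operator norm is $3$. This gives $K\ge\|P_{\varphi,\psi}(b)\|/\|b\|=3$, and together with Lemma \ref{lemma:K3} yields $K=3$.

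The only genuine obstacle is guessing $b$; the verification afterwards is a two-line computation. The way to find it is to ask when the bound $K\le 3$ — which comes from the triangle inequality applied to the three summands $\varphi_1^\sharp\otimes\id$, $\id\otimes\psi_2^\sharp$ and $\varphi_1^\sharp\otimes\psi_2^\sharp$ of $P_{\varphi,\psi}$ — can be saturated. Saturation forces the three pieces to reinforce one another coherently on a common vector, and the symmetric sign pattern of the diagonal $b$ above is precisely what makes $(\id-E)\otimes(\id-E)$ act as multiplication by $3$ on $e_{22}\otimes e_{22}$. I would also note, for completeness, that the purity of $\varphi_1$ and $\psi_2$ removes any ambiguity in the density matrices, so the value $K=3$ is unconditional.
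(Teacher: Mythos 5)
Your proof is correct and follows essentially the same strategy as the paper: the upper bound $K\le 3$ from Lemma~\ref{lemma:K3} is combined with a lower bound obtained by exhibiting a single norm-one test element that $P_{\varphi,\psi}$ sends to an element of norm $3$. The paper uses $a=\gamma\otimes\gamma$ and the eigenvector $(0,1)^t\otimes(0,1)^t$ of $P_{\varphi,\psi}(a)$, whereas you use $b=\mathrm{diag}(1,-1,-1,0)$ and the factorization \eqref{eq:onemP}; the two computations are interchangeable.
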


\begin{proof}
Due to Lemma \ref{lemma:K3}, it is enough to prove that $K\geq 3$.
For $a=\gamma\otimes\gamma$:
$$
P_{\varphi,\psi}(a)=1\otimes 1-\gamma\otimes 1-1\otimes\gamma \;.
$$
Since $\|a\|=1$,
one has $
K\geq \|P_{\varphi,\psi}(a)\|.
$
For $v=(0,1)^t\otimes (0,1)^t$, $P_{\varphi,\psi}(a)v=3v$.
Hence $\|P_{\varphi,\psi}(a)\|\geq 3$.
\end{proof}

\begin{rem}
Prop.~\ref{lemma:C2} remains valid if we rescale the Dirac operators by arbitrary scale factors, say $D_1=\lambda F$ and $D_2=\mu F$, with $\lambda,\mu>0$.
In this case \eqref{eq:fdix} is replaced by
\begin{align*}
f(x) &:=\sqrt{\lambda^2+\mu^2}\,|x_3|+\sqrt{\lambda^2(x_1+x_3\psi_2)^2+\mu^2(x_2+x_3\varphi_1)^2} \\
&\geq
\sqrt{\lambda^2+\mu^2}\,|x_3|+\sqrt{\lambda^2(|x_1|-|x_3|)^2+\mu^2(|x_2|-|x_3|)^2}
 \;,
\end{align*}
(by the triangle inequality) and with a derivation with respect to $t=|x_3|$ we prove that $f$ is a non-decreasing function of $t$, so
$f(x)\geq f|_{x_3=0}=\|[D,P_{\varphi,\psi}(a)]\|$.
\end{rem}

\subsection{Product of finite metric spaces}\label{sec:CN}
Here we consider two arbitrary finite metric spaces $X_1$
and $X_2$, with $N_1$ resp.~$N_2$ points, and the product of the corresponding canonical spectral triples introduced in \S\ref{sec:finite}. We adopt the notations of \S\ref{sec:finite}, and distinguish the two spectral triples by a sub/super-script $1,2$.

Their product is a direct sum,
over all $i\neq j$ and $k\neq l$, of the spectral triples:
\begin{equation}\label{eq:summandA}
\Big(\, \A_1\otimes\A_2
\,,\,
\HH_{ij}^1\otimes\HH_{jk}^2\,,\,
\pi_{ijkl}:=\pi_{ij}^1\otimes \pi_{kl}^2 \,,\,
D_{ijkl}:=D^1_{ij}\otimes 1+\gamma_{ij}^1\otimes D^2_{kl} \,\Big)
\end{equation}
which in turn is the product of the triples
\begin{equation}\label{eq:summandB}
( \A_1,\HH_{ij}^1=\C^2,\pi_{ij}^1,D^1_{ij},\gamma^1_{ij} ) \;,\qquad\quad
( \A_2,\HH_{kl}^2=\C^2,\pi_{kl}^2,D^2_{kl} ) \;,
\end{equation}
In this example, Pythagoras equality is satisfied by arbitrary pure states.

\begin{thm}
Given two arbitrary finite metric spaces, the product of their canonical spectral triples is orthogonal in the sense of Def.~\ref{def:2.2}.
\end{thm}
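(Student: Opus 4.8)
The plan is to prove the two inequalities of Definition~\ref{def:2.2} separately, the lower one being immediate and the upper one carrying all the content. Every pure state of $\A=\A_1\otimes\A_2\simeq\C^{N_1N_2}$ is a point $(x_i,y_k)$, i.e.\ the product state $x_i\otimes y_k$ with marginals the pure states $x_i\in\mc{S}(\A_1)$, $y_k\in\mc{S}(\A_2)$; so by Prop.~\ref{prop:3.1} the product distance between $\varphi=x_i\otimes y_k$ and $\psi=x_j\otimes y_l$ is $\sqrt{(g^1_{ij})^2+(g^2_{kl})^2}$. Since both factors are unital, the lower bound $d_D(\varphi,\psi)\geq\sqrt{(g^1_{ij})^2+(g^2_{kl})^2}$ is exactly \eqref{eq:ineqDtimes}. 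The degenerate cases $i=j$ or $k=l$ (a shared marginal) are settled at once by Lemma~\ref{lemma:2.4}, which gives $d_D(\varphi,\psi)\leq g^2_{kl}$ resp.\ $\leq g^1_{ij}$, matching the lower bound; so from now on I assume $i\neq j$ and $k\neq l$, and I only have to prove the matching upper bound.

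The key idea for the upper bound is to discard all but one summand of the product triple. Because $D=\bigoplus_{i'\neq j',\,k'\neq l'}D_{i'j'k'l'}$ is block-diagonal, one has, exactly as in \eqref{eq:clearly}, $\|[D,a]\|=\max_{i'j'k'l'}\|[D_{i'j'k'l'},\pi_{i'j'k'l'}(a)]\|\geq\|[D_{ijkl},\pi_{ijkl}(a)]\|$ for every $a$. Hence the constraint $\|[D,a]\|\leq 1$ is stronger than the single-block constraint, so the admissible set shrinks and the supremum defining $d_D$ can only drop. Moreover $\varphi(a)-\psi(a)=a_{ik}-a_{jl}$ depends on $a$ only through $\pi_{ijkl}(a)=\mathrm{diag}(a_{ik},a_{il},a_{jk},a_{jl})$, and (as $i\neq j$, $k\neq l$ pick out four distinct index pairs) the map $a\mapsto\pi_{ijkl}(a)$ surjects onto the selfadjoint part of the block algebra. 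Therefore
\[
d_D(\varphi,\psi)\;\leq\;\sup_{b=b^*}\big\{(x_i\otimes y_k)(b)-(x_j\otimes y_l)(b):\|[D_{ijkl},b]\|\leq 1\big\}\;=\;d_{D_{ijkl}}(x_i\otimes y_k,\,x_j\otimes y_l),
\]
the spectral distance of the single summand \eqref{eq:summandA}, evaluated on the two ``opposite corner'' pure product states.

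Now the summand \eqref{eq:summandA} is precisely the product of the two two-point triples \eqref{eq:summandB}, with Dirac operators $\tfrac{1}{g^1_{ij}}F$ and $\tfrac{1}{g^2_{kl}}F$ as in \S\ref{sec:twopoints}. For such a product, Prop.~\ref{lemma:C2} --- together with the rescaling remark closing \S\ref{sec:C2} --- shows that Pythagoras equality holds for all product states, in particular for $x_i\otimes y_k$ and $x_j\otimes y_l$. Thus $d_{D_{ijkl}}(x_i\otimes y_k,x_j\otimes y_l)=\sqrt{d_{D^1_{ij}}(x_i,x_j)^2+d_{D^2_{kl}}(y_k,y_l)^2}$, and since the two factors are themselves canonical two-point triples, Prop.~\ref{prop:3.1} evaluates the right-hand side as $\sqrt{(g^1_{ij})^2+(g^2_{kl})^2}$. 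Combining with the lower bound gives $d_D(\varphi,\psi)=\sqrt{(g^1_{ij})^2+(g^2_{kl})^2}=d_{D_1}\!\boxtimes d_{D_2}(\varphi,\psi)$ for every pair of pure states, which is orthogonality in the sense of Def.~\ref{def:2.2}.

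The main obstacle is really that the block reduction be \emph{lossless}: a priori, dropping the other summands could enlarge the supremum and yield only $d_D\leq d_{D_{ijkl}}$ with a strict gap. What rescues the argument is that this single-block upper bound coincides with the global lower bound \eqref{eq:ineqDtimes}; this tightness is exactly the content of the two-point computation of \S\ref{sec:C2}, where the optimal test element lies in $\A_1+\A_2$ and already saturates the bound within one block. For a self-contained version one can bypass \S\ref{sec:C2} entirely: writing $[D_{ijkl},\pi_{ijkl}(a)]$ as a real antisymmetric operator on $\C^2\otimes\C^2$, its squared norm is $\tfrac12\big(S+\sqrt{S^2-4P}\big)\geq S/2$, with $S=\tfrac{1}{(g^1_{ij})^2}(p^2+q^2)+\tfrac{1}{(g^2_{kl})^2}(r^2+s^2)$ in the four ``edge differences'' $p,q,r,s$ of $a$ (which satisfy $p+s=q+r=a_{ik}-a_{jl}$) and $P$ a Pfaffian term; minimizing $S$ under this linear constraint gives $\|[D_{ijkl},\pi_{ijkl}(a)]\|^2\geq S/2\geq\tfrac{(a_{ik}-a_{jl})^2}{(g^1_{ij})^2+(g^2_{kl})^2}$, so $\|\cdot\|\leq1$ forces $a_{ik}-a_{jl}\leq\sqrt{(g^1_{ij})^2+(g^2_{kl})^2}$, and equality is attained, pinning the supremum at the desired value.
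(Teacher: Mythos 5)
Your proof is correct and follows essentially the same route as the paper's: reduce to a single block $D_{ijkl}$ of the direct sum via $\|[D,a]\|\geq\|[D_{ijkl},\pi_{ijkl}(a)]\|$, identify that block with a product of two two-point triples and invoke \S\ref{sec:C2} (with the rescaling remark), then close with the lower bound \eqref{eq:ineqDtimes}. Your explicit handling of the degenerate cases $i=j$ or $k=l$ via Lemma~\ref{lemma:2.4} (which the paper only does in the analogous Prop.~\ref{prop:C2toCN}) and your self-contained antisymmetric-matrix/Pfaffian estimate $\|[D_{ijkl},\pi_{ijkl}(a)]\|^2\geq S/2\geq(a_{ik}-a_{jl})^2/\bigl((g^1_{ij})^2+(g^2_{kl})^2\bigr)$ are both sound refinements, but they do not change the underlying argument.
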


\begin{proof}
With the pure states \eqref{eq:xxstar} one can construct morphisms:
$$
x^s_i\otimes x^s_j:\C^{N_s}\to\C^2 \;,\qquad
a\mapsto (a_i,a_j) \;,
$$
where $s=1,2$.
Each triple \eqref{eq:summandB} is the pullback of the canonical spectral triple on $\C^2$ in \S\ref{sec:twopoints} (possibly with different normalizations of the Dirac operator).
Assume that $\varphi$ and $\psi$ are pure:
$$
\varphi=x_r^1\otimes x_p^2 \;,\qquad\quad
\psi=x_s^1\otimes x_q^2 \;,
$$
with $r,p,s,q$ fixed.
Since \eqref{eq:summandA} is a direct sum,
$$
\|[D,a]\|=\max_{i\neq j,k\neq l}\|[D_{ijkl},\pi_{ijkl}(a)]\|\geq \|[D_{rspq},\pi_{rspq}(a)]\|
$$
and
\begin{equation}\label{eq:finiteineq}
d_D(\varphi,\psi)\leq d_{D_{rspq}}(\varphi,\psi) \;.
\end{equation}
In this way, we reduce the problem to a product of two-point spaces.
As shown in \S\ref{sec:C2}, for a product of two-point spaces
$$
d_{D_{rspq}}(\varphi,\psi)=\sqrt{ (g^1_{rs})^2+(g^2_{pq})^2 }=
\sqrt{ d_{D_1}(\varphi_1,\psi_1)^2+d_{D_2}(\varphi_2,\psi_2)^2 } \;.
$$
This proves $d_D(\varphi,\psi)\leq
d_{D_1}\!\boxtimes d_{D_2}(\varphi,\psi)$, the opposite inequality being
always true, cf.~\eqref{eq:ineqDtimes}.
\end{proof}

\section{Pythagoras from generalized geodesics}\label{sec:geo}

In this section, I will discuss a class of states generalizing pure states of a complete Riemannian manifold
and translated states on Moyal plane. For such states, I will then prove Pythagoras equality for the product of an arbitrary spectral triple with the two-point space, cf.~Prop.~\ref{prop:prodC2}, generalizing the results in \cite{MW02,MT13}. I will then show how to extend the result from the two-point space to an arbitrary finite metric space, cf.~Prop.~\ref{prop:C2toCN}.

\subsection{Geodesic pairs}

Let $\sigma:\R\to\mathrm{Aut}(A)$ be a strongly continuous one-parameter group of $*$-automorphisms of a $C^*$-algebra $A$.
Let $\mc{D}(\delta)$ be the set of all $a\in A$ for which the norm limit:
$$
\delta(a):=\lim_{t\to 0}\frac{\sigma_t(a)-a}{t}
$$
exists. Then $\mc{D}(\delta)$ is a dense $*$-subalgebra of $A$ and $\delta$ is a (unbounded) closed \mbox{$*$-derivation} with domain $\mc{D}(\delta)$ \cite[\S3]{Sak91}. Conversely, one can give sufficient conditions for $\delta$ to generate an action of $\R$ on $A$ \cite[\S3.4]{Sak91}.

Let $\A\subset\mc{D}(\delta)\subset A$ be a $*$-subalgebra (not necessarily dense).
To any state $\tau_0$ on $A$ we can associate a family of states:
\begin{equation}\label{eq:Tstates}
\tau_t(a):=\tau_0(\sigma_t(a)) \;.
\end{equation}
What allows to prove Pythagoras theorem in the examples in \cite{MW02,MT13} is that the curve $t\mapsto\tau_t$ in state space is parametrized by the arc length (or has ``unit speed'').

\begin{df}\label{def:gp}
Let $(\A,\HH,D,\gamma)$ be a spectral triple. A curve $t\mapsto\varphi_t$ in state space is called a \emph{metric straight line} if $d_D(\varphi_t,\varphi_s)=|t-s|$ for all $t,s\in\R$ \cite[\S6.1]{DD09}.
\linebreak
Let $I\subset\R$ be an interval containing $0$.
In the notations above,
$(\sigma|_I,\tau_0)$ will be called a \emph{geodesic pair} if
$d_D(\tau_t,\tau_s)=|t-s|\;\forall\;t,s\in I$.
\end{df}

\begin{ex}\label{ex:M}
Consider the Hodge-Dirac spectral triple of a complete oriented Riemannian manifold $(M,g)$, or the natural spectral triple of a complete Riemannian spin manifold. In both cases, one easily checks that (see e.g.~\cite{DM09} for the latter example):
\begin{equation}\label{eq:sg1}
[D,f]^*[D,f]=g^{\mu\nu}\partial_\mu f\partial_\nu f
\;,\qquad\forall\;f\in C^\infty_0(M,\R) \;,
\end{equation}
where we use Einstein's convention of summing over repeated indexes.

By the Hopf-Rinow theorem, any two points are connected by a geodesic of minimal length, and every geodesic can be extended indefinitely (but it is only locally a metric straight line, hence the need of the interval $I$ in the definition above). Let
$$
c:\R\to M
$$
be any geodesic and $I\ni\{0\}$ a closed interval where $c$ is of minimal length. If the curve is parametrized by the proper length, $|\dot c(t)|=1\;\forall\;t$. Since $c(I)\subset M$ is a properly embedded submanifold, the vector field $\dot c$ along $c$ can be extended to a globally defined vector field $V$ on $M$ (cf.~Prop.~5.5 and exercise 8-15 of \cite{Lee13}). Clearly $|V_x|=|\dot c(t)|=1$ if $x=c(t)\in c(I)$. We can choose $V$ such that $|V_x|\leq 1$ for all $x\in M$ (use a parallel frame to define such a $V$ in a neighborhood of $c(I)$, and a bump function to extend it globally as the zero vector field outside $c(I)$).

Thinking of $V$ as a derivation on $f\in C^\infty_0(M)$, we define:
$$
\sigma_t(f)(x):=e^{tV}f(x) \;.
$$
In the notations above, $\delta(f)=Vf$
and by Cauchy-Schwarz inequality:
\begin{equation}\label{eq:sg2}
|\delta(f)(x)|^2\leq \|\partial_\mu f(x)\partial^\mu\|_2\cdot\|V_x\|_2\leq\|\partial_\mu f(x)\partial^\mu\|_2
=(g^{\mu\nu}\partial_\mu f\partial_\nu f)(x) \;,
\end{equation}
for all $f\in C^\infty_0(M,\R)$ and $x\in M$.
We will need \eqref{eq:sg1} and \eqref{eq:sg2} later on.

If $\tau_0$ is the pure state $\tau_0(f)=f(c(0))$, clearly $\tau_t(f)=f(c(t))\;\forall\;t\in I$, and
$$
d_D(\tau_t,\tau_s)=d_{\mathrm{geo}}(c(t),c(s))=|t-s| \;,
$$
for all $t,s\in I$. Hence $(\sigma|_I,\tau_0)$ is a geodesic pair.
\end{ex}

\begin{lemma}\label{lemma:msl}
$d_D(\tau_t,\tau_s)\leq |t-s|\;\forall\;t,s
\iff \tau_t\big(\delta(a)\big)\leq\|[D,a]\|\;\forall\;t$ and $a\in\A^{\mathrm{sa}}$.
\end{lemma}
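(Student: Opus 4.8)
The plan is to reduce the biconditional to one differentiability statement and two one-sided estimates. First I would establish that for every selfadjoint $a\in\A\subset\mc{D}(\delta)$ the scalar function $t\mapsto\tau_t(a)$ is continuously differentiable with $\frac{\de}{\de t}\tau_t(a)=\tau_t(\delta(a))$. To prove this, I would write the difference quotient, using the group law $\sigma_{t+h}=\sigma_t\circ\sigma_h$ together with linearity of $\tau_0$, as
$$
\frac{\tau_{t+h}(a)-\tau_t(a)}{h}=\tau_0\!\left(\sigma_t\!\left(\frac{\sigma_h(a)-a}{h}\right)\right).
$$
Since $a\in\mc{D}(\delta)$, the inner quotient tends to $\delta(a)$ in norm; as $\sigma_t$ is an isometry and $\tau_0$ is norm-continuous, passing to the limit gives $\frac{\de}{\de t}\tau_t(a)=\tau_0(\sigma_t(\delta(a)))=\tau_t(\delta(a))$. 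Strong continuity of $\sigma$ makes $u\mapsto\tau_u(\delta(a))$ continuous, so the fundamental theorem of calculus yields $\tau_t(a)-\tau_s(a)=\int_s^t\tau_u(\delta(a))\,\de u$.

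For the direction ``$\Leftarrow$'', I would assume $\tau_u(\delta(a))\le\|[D,a]\|$ for all $u$ and all $a\in\A^{\mathrm{sa}}$. Taking $t\ge s$ and any $a\in\A^{\mathrm{sa}}$ with $\|[D,a]\|\le 1$, the integral above is at most $\int_s^t\|[D,a]\|\,\de u\le t-s$; taking the supremum over all such $a$ gives $d_D(\tau_t,\tau_s)\le t-s$. Since both $d_D$ and $|t-s|$ are symmetric, this is the desired bound for all $t,s$.

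For the direction ``$\Rightarrow$'', I would first note that for $a\in\A^{\mathrm{sa}}$ with $[D,a]\ne 0$ the element $a/\|[D,a]\|$ is admissible in \eqref{eq:spectrald}, so $\varphi(a)-\psi(a)\le\|[D,a]\|\,d_D(\varphi,\psi)$ for all states $\varphi,\psi$. Applying this with $\varphi=\tau_{t+h}$, $\psi=\tau_t$, $h>0$, and the hypothesis $d_D(\tau_{t+h},\tau_t)\le h$, I get $\frac{\tau_{t+h}(a)-\tau_t(a)}{h}\le\|[D,a]\|$, and letting $h\to 0^+$ produces $\tau_t(\delta(a))\le\|[D,a]\|$. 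The remaining case $[D,a]=0$ must be handled separately: then $\lambda a$ is admissible for every $\lambda\in\R$, so finiteness of $d_D(\tau_{t+h},\tau_t)$ forces $\tau_{t+h}(a)=\tau_t(a)$ for all $h$, whence $\tau_t(\delta(a))=0=\|[D,a]\|$.

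I expect the only genuinely delicate step to be the first paragraph, namely justifying that the derivative of $t\mapsto\tau_t(a)$ exists and equals $\tau_t(\delta(a))$: this is precisely where the hypothesis $\A\subset\mc{D}(\delta)$ (norm-differentiability of $\sigma_t(a)$) combines with the boundedness of $\tau_0$ and the isometry property of $\sigma_t$ to permit the interchange of limit and state. Once that interchange is in place, the two implications follow immediately from the spectral-distance formula \eqref{eq:spectrald}.
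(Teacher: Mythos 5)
Your proof is correct and takes essentially the same route as the paper's: both reduce the distance bound to the dual estimate $\tau_t(a)-\tau_s(a)\leq\|[D,a]\|\cdot|t-s|$, write $\tau_t(a)-\tau_s(a)=\int_s^t\tau_\xi\big(\delta(a)\big)\,\de\xi$, and then get ``$\Leftarrow$'' by integrating the bound and ``$\Rightarrow$'' by taking the one-sided difference quotient. Your write-up is in fact slightly more careful on two points the paper leaves implicit, namely the justification of differentiability of $t\mapsto\tau_t(a)$ via the group law and the isometry of $\sigma_t$, and the separate treatment of the degenerate case $[D,a]=0$.
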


\begin{proof}
$d_D(\tau_t,\tau_s)\leq |t-s|$ if{}f
\begin{equation}\label{eq:condL}
\tau_t(a)-\tau_s(a)\leq\|[D,a]\|\cdot|t-s|\quad\forall\;a\in\A^{\mathrm{sa}} .
\end{equation}
By symmetry, we can assume that $t>s$.
By linearity and continuity of $\tau_t$:
\begin{equation}\label{eq:inttau}
\tau_t(a)-\tau_s(a)=\int_s^t\frac{\de}{\de\xi}\,\tau_\xi(a)\de\xi=
\int_s^t\tau_\xi\big(\delta(a)\big)\de\xi \;.
\end{equation}
If $\tau_\xi\big(\delta(a)\big)\leq\|[D,a]\|$ for all $\xi$ in the interior $\mathring{I}$ of $I$, then
$$
\tau_t(a)-\tau_s(a)\leq\int_s^t \|[D,a]\| \de\xi =\|[D,a]\|\cdot|t-s|
$$
for all $t,s\in I$, proving ``$\Leftarrow$''. If, on the other hand \eqref{eq:condL} holds for all $t,s\in I$ and $a\in\A^{\mathrm{sa}}$, then by linearity and continuity of $\tau_t$:
$$
\tau_t\big(\delta(a)\big)=\lim_{t\to 0^+}\frac{\tau_{t+\epsilon}(a)-\tau_\epsilon(a)}{\epsilon}\leq\|[D,a]\|
$$
for all $t\in\mathring{I}$.
By continuity we can replace $\mathring{I}$ by $I$, proving ``$\Rightarrow$''.
\end{proof}

\begin{rem}
Note that everything works even if $\A$ is not dense in $A$. But in this case the map $\mc{S}(A)\to\mc{S}(\bar\A)$ is surjective but not bijective, and $\tau_0$ must be a state of $A$. In particular, given any unital spectral triple $(\A,\HH,D)$ and taking $A=\B(\HH)$, we can consider the action $\sigma_t(a)=e^{itD}ae^{-itD}$; the associated derivation is $\delta(a)=i[D,a]$ and automatically satisfies the condition of Lemma \ref{lemma:msl}.
\end{rem}

Unfortunately, in order to prove Pythagoras equality the condition in Lemma \ref{lemma:msl} is not enough. We need the following stronger assumption.

\begin{df}
A geodesic pair $(\sigma|_I,\tau_0)$ will be called \emph{strongly geodesic} if
\begin{equation}\label{eq:sg}
\tau_t\big((\delta a)^*(\delta a)+b^*b\big)
\leq\big\|[D,a]^*[D,a]+b^*b\big\|\;,\quad
\forall\;a,b\in\A^{\mathrm{sa}},t\in I.
\end{equation}
\end{df}

\begin{ex}
The geodesic pair in Example \ref{ex:M} is strongly geodesic. Indeed, for all real functions $a,b$ on $M$:
$$
\tau_t\big((\delta a)^*(\delta a)+b^*b\big)\leq
\big\|
g^{\mu\nu}\partial_\mu a\partial_\nu a+b^2
\big\|_\infty
=\|[D,a]^*[D,a]+b^*b\|
$$
where in the first inequality I used \eqref{eq:sg2}, and in the last equality I used \eqref{eq:sg1} and the observation that for functions the operator norm and the sup norm coincide.
\end{ex}

A noncommutative example is given by Moyal plane, which is recalled below.

\begin{lemma}\label{lemma:sgp}
Let $(\A,\HH\otimes\C^2,D)$ be an even spectral triple, with
obvious grading, trivial action of $\A$ on the $\C^2$ factor and
\begin{equation}\label{eq:Dpm}
D=\left[\begin{array}{cc}
0 & D_- \\ D_+ & 0
\end{array}\right] \;.
\end{equation}
Let $(\sigma|_I,\tau_0)$ be a geodesic pair. Assume
$$
\delta(a)=\tfrac{1}{2}[uD_+-\bar uD_-,a] \;,\qquad\forall\;a\in\A^{\mathrm{sa}}
$$
for some fixed $u\in U(1)$.
Then $(\sigma|_I,\tau_0)$ is strongly geodesic.
\end{lemma}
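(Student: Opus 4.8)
The plan is to reduce the inequality \eqref{eq:sg} to a single operator inequality on $\HH$ combined with the defining positivity of a state; the geodesic-pair hypothesis will then be needed only to qualify $(\sigma|_I,\tau_0)$ as a genuine geodesic pair, not for the estimate itself.

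First I would fix notation. For $a\in\A^{\mathrm{sa}}$ put $X:=[D_+,a]$. Since $D=\left(\begin{smallmatrix}0&D_-\\D_+&0\end{smallmatrix}\right)$ is self-adjoint we have $D_-=D_+^*$, and for self-adjoint $a$ the identity $[D_+,a]^*=-[D_-,a]$ gives $[D_-,a]=-X^*$. Using that $a,b$ act as $a\otimes 1$, $b\otimes 1$, a block computation yields
$$
[D,a]=\begin{pmatrix}0&-X^*\\X&0\end{pmatrix},\qquad
[D,a]^*[D,a]+b^*b=\begin{pmatrix}X^*X+b^2&0\\0&XX^*+b^2\end{pmatrix},
$$
so the right-hand side of \eqref{eq:sg} equals $\max\{\|X^*X+b^2\|,\|XX^*+b^2\|\}$. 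On the other side, writing $Z:=uX$ (so that $Z^*Z=X^*X$ and $ZZ^*=XX^*$, as $|u|=1$), the hypothesis becomes $\delta(a)=\tfrac12(uX+\bar uX^*)=\tfrac12(Z+Z^*)$, which is self-adjoint, whence $(\delta a)^*(\delta a)=\delta(a)^2=\tfrac14(Z+Z^*)^2$.

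The crux is the elementary operator inequality, valid for every $Z\in\B(\HH)$,
$$
\big(\tfrac12(Z+Z^*)\big)^2\le\tfrac12\big(Z^*Z+ZZ^*\big),
$$
which I would prove by observing that the difference of the two sides equals $-\tfrac14(Z-Z^*)^2$, and $Z-Z^*$ is skew-adjoint so its square is $\le0$. Applied to $Z=uX$ this gives $\delta(a)^2\le\tfrac12(X^*X+XX^*)$, and adding $b^2$ to both sides,
$$
(\delta a)^*(\delta a)+b^*b=\delta(a)^2+b^2\le\tfrac12(X^*X+b^2)+\tfrac12(XX^*+b^2).
$$
Since $\tau_t$ is a state and the left-hand element is positive, $\tau_t\big(\delta(a)^2+b^2\big)\le\|\delta(a)^2+b^2\|$, and by the last display this norm is at most $\tfrac12\|X^*X+b^2\|+\tfrac12\|XX^*+b^2\|\le\max\{\|X^*X+b^2\|,\|XX^*+b^2\|\}$, which is exactly $\|[D,a]^*[D,a]+b^*b\|$. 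This is \eqref{eq:sg}; being a geodesic pair by hypothesis, $(\sigma|_I,\tau_0)$ is then strongly geodesic.

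The only step requiring care is matching the quadratic bound to the block-diagonal structure: the cross terms $u^2X^2+\bar u^2(X^*)^2$ in $\delta(a)^2$ are precisely what the skew-adjoint square annihilates, and one must then separate the two positive blocks $X^*X+b^2$ and $XX^*+b^2$ via the bound $\|\tfrac12P+\tfrac12Q\|\le\max\{\|P\|,\|Q\|\}$ for $P,Q\ge0$. It is worth remarking that the geodesic-pair property (equivalently, Lemma \ref{lemma:msl}) plays no role in deriving \eqref{eq:sg}: the inequality is forced entirely by the algebraic form $\delta=\tfrac12[uD_+-\bar uD_-,\,\cdot\,]$ together with the positivity and normalization of $\tau_t$.
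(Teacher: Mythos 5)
Your proof is correct, and it reaches \eqref{eq:sg} by a genuinely different mechanism than the paper's. The paper never squares $\delta(a)$: using the $C^*$-identity it identifies $\|(\delta a)^*(\delta a)+b^*b\|^{1/2}$ with the operator norm of the column operator $\bigl[\begin{smallmatrix}\delta a & 0\\ b & 0\end{smallmatrix}\bigr]$, writes that operator as the average of $\bigl[\begin{smallmatrix}u[D_+,a] & 0\\ b & 0\end{smallmatrix}\bigr]$ and $\bigl[\begin{smallmatrix}-\bar u[D_-,a] & 0\\ b & 0\end{smallmatrix}\bigr]$ (the full $b$ appearing in each summand), and applies the triangle inequality; each summand's norm is $\|[D_\pm,a]^*[D_\pm,a]+b^*b\|^{1/2}$, and the conclusion then follows from the same two facts you finish with, namely block-diagonality of $[D,a]^*[D,a]+b^*b$ and average $\le$ max. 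You instead square: the operator inequality $\bigl(\tfrac12(Z+Z^*)\bigr)^2\le\tfrac12(Z^*Z+ZZ^*)$, proved from $-(Z-Z^*)^2\ge 0$, together with monotonicity of the norm on positive elements, replaces the column-operator trick entirely. What your route buys: it is more elementary and self-contained; it exploits $D_-=D_+^*$ and $a=a^*$ to reduce everything to the single operator $X=[D_+,a]$, which makes the block-diagonal form $\mathrm{diag}(X^*X+b^2,\,XX^*+b^2)$ of the right-hand side transparent; and it states explicitly that the geodesic-pair hypothesis plays no role in deriving \eqref{eq:sg}, a point the paper makes only implicitly through the phrase ``since states have norm $1$''. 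What the paper's route buys: a marginally sharper intermediate estimate --- it bounds the column operator's norm by the \emph{arithmetic} mean of the two norms $\|[D_\pm,a]^*[D_\pm,a]+b^*b\|^{1/2}$, whereas your positivity argument yields their \emph{quadratic} mean --- but since both means are dominated by the maximum, the difference is immaterial for the lemma. Both arguments are complete.
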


\begin{proof}
Since states have norm $1$,
it is enough to prove that
$$
\big\|(\delta a)^*(\delta a)+b^*b\big\|
\leq\big\|[D,a]^*[D,a]+b^*b\big\|
$$
for all $a,b\in\A^{\mathrm{sa}}$. The left hand side of this inequality is the operator norm of
$$
\left[\begin{array}{cc}
\delta a & 0 \\ b & 0
\end{array}\right]=
\frac{1}{2}\left[\begin{array}{cc}
u[D_+,a] & 0 \\ b & 0
\end{array}\right]+
\frac{1}{2}\left[\begin{array}{cc}
-\bar u[D_-,a] & 0 \\ b & 0
\end{array}\right] \;.
$$
From the triangle inequality:
$$
\mathrm{l.h.s.}\leq
\frac{1}{2}\left\|\left[\begin{array}{cc}
[D_+,a] & 0 \\ b & 0
\end{array}\right]\right\|+
\frac{1}{2}\left\|\left[\begin{array}{cc}
[D_-,a] & 0 \\ b & 0
\end{array}\right]\right\|=\frac{1}{2}\sum_{i=\pm}
\big\|[D_i,a]^*[D_i,a]+b^*b\big\|
 \;.
$$
On the other hand, from
$$
[D,a]^*[D,a]+b^*b=\left[\begin{array}{cc}
[D_+,a]^*[D_+,a]+b^*b & 0 \\ 0 & [D_-,a]^*[D_-,a]+b^*b
\end{array}\right]
$$
we get
$$
\big\|[D,a]^*[D,a]+b^*b\big\|=\max_{i\in\{+,-\}}
\big\|[D_i,a]^*[D_i,a]+b^*b\big\| \;,
$$
hence the thesis.
\end{proof}

\begin{ex}\label{ex:Moyal}
Let $(\A,\HH,D)$ be the even irreducible spectral triple of Moyal plane in \cite[\S3.3]{DLM13b}, given by $\A=\mc{S}(\N^2)$ the algebra of rapid decay matrices on $\HH=\ell^2(\N)\otimes\C^2$ with standard grading, and Dirac operator as in \eqref{eq:Dpm}. Here $D_+$ is given on the canonical orthonormal basis of $\ell^2(\N)$ by:
$$
D_+\ket{n}=\sqrt{\tfrac{2}{\theta}(n+1)}\ket{n+1} \;,
$$
for all $n\geq 0$, $D_-=(D_+)^*$ and $\theta>0$ is a deformation parameter.
Every state $\tau_0$ of $A=\mc{K}(\ell^2(\N))$ is normal,
$\tau_0(a)=\inner{\psi_0|a|\psi_0}$. For $z\in\C$,
we call
$$
T(z):=\exp\big\{\tfrac{1}{2}(zD_+-\bar zD_-)\big\} \;.
$$
The vectors $\ket{\psi_z}:=T(z)\ket{\psi_0}$
give a family of normal states $\Psi_z$, with $z\in\C$.
Fix $u,v\in \C$ with $|u|=1$, and let $t\mapsto z_t=u t+v$ be the corresponding line in $\C$. Then
$$
t\mapsto e^{\frac{i}{2\theta}\Im(u\bar v)}T(z_t)=
e^{-itX}T(v)
$$
is a  strongly continuous one-parameter group of unitaries generated by the (unbounded, selfadjoint on a suitable domain) operator
\begin{equation}\label{eq:XMoyal}
X:=\frac{i}{2}(uD_+-\bar uD_-) \;,
\end{equation}
and $\sigma_t(a)=T(z_t)^*aT(z_t)$
is a strongly continuous one-parameter group
of\linebreak $*$-automorphisms generated by the derivation
$\delta(a)=i[X,a]$.
In the notations above,
if $\tau_0:=\Psi_{z_0}$,
then $\tau_t=\Psi_{z_t}$ for all $t\in\R$;
from \cite[Prop.~4.3]{DLM13b} we get
$d_D(\tau_t,\tau_s)=|t-s|$ for all $t,s\in\R$ (the original proof is in \cite{MT13}). So, $(\sigma|_I,\tau_0)$ is a geodesic pair;
it follows from Lemma \ref{lemma:sgp} that it is also strongly geodesic.
\end{ex}

\subsection{Products by \texorpdfstring{$\C^2$}{C2}}
Let $(\A_1,\HH_1,D_1,\gamma_1)$ be the spectral triple of the two-point space in \S\ref{sec:twopoints},
$(\A_2,\HH_2,D_2)$ any unital spectral triples and
$(\A,\HH,D)$ the product spectral triple.
Let $(\sigma|_I,\tau_0)$ be a strongly geodesic pair on the second spectral triple, and $\tau_t^2$ the corresponding state. Finally, let
$$
[-\lambda,\lambda]\to\mc{S}(\C^2) \;,\qquad
t\mapsto \tau^1_t:=\tfrac{1+\lambda^{-1}t}{2}\,\delta_\uparrow+\tfrac{1-\lambda^{-1}t}{2}\,\delta_\downarrow \;,
$$
be the map in \eqref{eq:stateC2}.
For $x=(x_1,x_2)\in [-\lambda,\lambda]\times I$ we define a product state
$$
\tau_x:=\tau_{x_1}^1\otimes\tau_{x_2}^2 \;.
$$

\begin{prop}\label{prop:prodC2}
For all $x,y$, the states $\tau_x,\tau_y$ satisfy Pythagoras equality.
\end{prop}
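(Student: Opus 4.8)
The plan is to establish the two bounds $d_{D_1}\!\boxtimes d_{D_2}(\tau_x,\tau_y)\le d_D(\tau_x,\tau_y)\le d_{D_1}\!\boxtimes d_{D_2}(\tau_x,\tau_y)$ separately, the first being essentially free. Since $\tau_x,\tau_y$ are product states, \eqref{eq:ineqDtimes} already gives $d_D(\tau_x,\tau_y)\ge d_{D_1}\!\boxtimes d_{D_2}(\tau_x,\tau_y)$; and because the two-point computation of \S\ref{sec:twopoints} gives $d_{D_1}(\tau^1_{x_1},\tau^1_{y_1})=|x_1-y_1|$ while the geodesic pair property gives $d_{D_2}(\tau^2_{x_2},\tau^2_{y_2})=|x_2-y_2|$, the right-hand side equals $\|x-y\|_2$. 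Everything therefore reduces to the upper bound $d_D(\tau_x,\tau_y)\le\|x-y\|_2$.

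For that I would use the dual definition \eqref{eq:spectrald}: fix $a\in\A^{\mathrm{sa}}$ with $\|[D,a]\|\le1$ and integrate $\tau(a)$ along the straight segment $\gamma(s)=(1-s)x+sy$, $s\in[0,1]$, which stays in $[-\lambda,\lambda]\times I$ by convexity of both intervals. Writing $a$ in block-diagonal form $\mathrm{diag}(a_\uparrow,a_\downarrow)$ under $\A=\C^2\odot\A_2$ acting on $\HH_2\oplus\HH_2$ (so $a_\uparrow,a_\downarrow\in\A_2^{\mathrm{sa}}$) and differentiating the product state, the two-point factor contributes through $\tfrac{\de}{\de t}\tau^1_t$ and the second factor through the derivation $\delta$ (cf.\ Lemma \ref{lemma:msl}). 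With $\omega_s:=\tau^2_{\gamma_2(s)}$, $c:=\tfrac1{2\lambda}(a_\uparrow-a_\downarrow)$, and $a^{(s)}:=p_s a_\uparrow+q_s a_\downarrow=(\tau^1_{\gamma_1(s)}\otimes\id)(a)$ (where $p_s,q_s\ge0$, $p_s+q_s=1$), a direct computation gives
$$\frac{\de}{\de s}\tau_{\gamma(s)}(a)=(y_1-x_1)\,\omega_s(c)+(y_2-x_2)\,\omega_s(\delta a^{(s)}).$$
By Cauchy--Schwarz in $\R^2$ it then suffices to prove the pointwise estimate $\omega_s(c)^2+\omega_s(\delta a^{(s)})^2\le\|[D,a]\|^2$, since integrating $|\tfrac{\de}{\de s}\tau_{\gamma(s)}(a)|\le\|x-y\|_2$ over $[0,1]$ yields the claim.

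The heart of the argument is this last estimate, and it is exactly where the \emph{strongly} geodesic hypothesis \eqref{eq:sg} is needed rather than the mere geodesic condition of Lemma \ref{lemma:msl}. First, the state Cauchy--Schwarz inequality bounds each square by $\omega_s$ of the corresponding square, so the left-hand side is at most $\omega_s\big((\delta a^{(s)})^*(\delta a^{(s)})+c^2\big)$. I would then apply \eqref{eq:sg} to the self-adjoint pair $(a^{(s)},c)$ --- the key point being that the first-factor commutator term $c$ is precisely the auxiliary element $b$ of the strongly geodesic inequality --- bounding this by $\|[D_2,a^{(s)}]^*[D_2,a^{(s)}]+c^2\|$. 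Finally, representing $[D,a]$ as a $2\times2$ block operator on $\HH_2\oplus\HH_2$, one has $[D_2,a^{(s)}]=p_s[D_2,a_\uparrow]+q_s[D_2,a_\downarrow]$ while the diagonal blocks of $[D,a]^*[D,a]$ are exactly $[D_2,a_\uparrow]^*[D_2,a_\uparrow]+c^2$ and $[D_2,a_\downarrow]^*[D_2,a_\downarrow]+c^2$; a convexity estimate (triangle inequality and convexity of $s\mapsto s^2$, using $p_s+q_s=1$) together with the fact that diagonal compressions of a positive operator are dominated in norm by the whole gives $\|[D_2,a^{(s)}]^*[D_2,a^{(s)}]+c^2\|\le\|[D,a]^*[D,a]\|$, closing the estimate.

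The main obstacle I anticipate is precisely this operator-norm inequality and the bookkeeping linking it to the strongly geodesic condition: one must recognize that the convex combination $p_s[D_2,a_\uparrow]+q_s[D_2,a_\downarrow]$ and the shared block $c$ reassemble into the diagonal compressions of $[D,a]^*[D,a]$, so that the norm is controlled by the \emph{product} commutator and not merely by the second-factor data. The differentiation step needs only minor care (differentiability of $s\mapsto\tau^2_{\gamma_2(s)}(b)$ with derivative $\tau^2_{\gamma_2(s)}(\delta b)$ for $b\in\A_2\subset\mc{D}(\delta)$, and convexity of $I$ to keep $\gamma_2(s)\in I$), which is routine once the structural estimate is in place.
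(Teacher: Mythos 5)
Your proof is correct and follows essentially the same route as the paper's: reduce to the upper bound $d_D(\tau_x,\tau_y)\leq\|x-y\|_2$ via \eqref{eq:ineqDtimes}, differentiate along the straight segment joining $x$ and $y$, apply Cauchy--Schwarz in $\R^2$ and the Kadison--Schwarz inequality, invoke the strongly geodesic condition \eqref{eq:sg}, and finally dominate everything by $\|[D,a]\|^2$ using the diagonal blocks of $[D,a]^*[D,a]$. The only (harmless) difference is the order of operations: the paper uses affinity of $\tau_\xi(a)$ in $\xi_1$ to reduce to the endpoint elements $a_\uparrow,a_\downarrow$ \emph{before} applying \eqref{eq:sg}, whereas you apply \eqref{eq:sg} directly to the convex combination $a^{(s)}=p_s a_\uparrow+q_s a_\downarrow$ and then pass to the diagonal blocks via your operator-level convexity estimate --- both steps are valid.
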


\begin{proof}
Due to \eqref{eq:ineqDtimes}, it is enough to prove that
\begin{equation}\label{eq:enoughtoprove}
d_D(\tau_x,\tau_y)\leq
\sqrt{d_{D_1}(\tau^1_{x_1},\tau^1_{y_1})^2+d_{D_2}(\tau^2_{x_2},\tau^2_{y_2})^2}=\|x-y\|_2 \;,
\end{equation}
where the latter equality follows from the definition of geodesic pair.

Let $u\in\R^2$ be the unit vector 
$$
u=\frac{x-y}{\|x-y\|_2}
$$
and $f_a(t):=\tau_{ut+y}(a)$ (for all $t$ s.t.~$u_1t+y_1\in[-\lambda,\lambda]$). For all $a=(a_\uparrow,a_\downarrow)\in\A^{\mathrm{sa}}$:
\begin{equation}\label{eq:txty}
\tau_x(a)-\tau_y(a)=f_a(\|x-y\|_2)-f_a(0)=\int_0^{\|x-y\|_2}f_a'(t)\de t \leq \|f_a'\|_\infty \|x-y\|_2 \;,
\end{equation}
where $f_a'$ is the total derivative of $f_a$ with respect to $t$. Using Cauchy-Schwarz inequality in $\R^2$ we derive:
$$
f_a'(t)=\left.\left(u_1\frac{\partial}{\partial\xi_1}\tau_\xi(a)+u_2\frac{\partial}{\partial\xi_2}\tau_\xi(a)\right)\right|_{\xi=ut+y} \hspace{-3pt} \leq\left.\sqrt{
\left|\frac{\partial}{\partial\xi_1}\tau_\xi(a)\right|^2+\left|\frac{\partial}{\partial\xi_2}\tau_\xi(a)\right|^2
}\right|_{\xi=ut+y}  \hspace{-3pt} .
$$
An explicit computation gives
$$
\frac{\partial}{\partial\xi_1}\tau_\xi(a)=\frac{1}{2\lambda}\tau_{\xi_2}^2(a_\uparrow-a_\downarrow) \;,\qquad
\frac{\partial}{\partial\xi_2}\tau_\xi(a)=\tau_\xi\big((\id\otimes\delta_2)(a)\big) \;.
$$
Since $\tau_\xi(a)$ is linear in $\xi_1$, the supremum
of $\left|\tau_\xi\big((\id\otimes\delta_2)(a)\big)\right|^2$ over
all $\xi_1\in[-\lambda,\lambda]$ is attained on one of the extremal points (is a degree $2$ polynomial in $\xi_1$ with non-negative leading coefficient). Hence:
$$
\left|\tau_\xi\big((\id\otimes\delta_2)(a)\big)\right|^2
\leq\max\big\{
|\tau_{\xi_2}^2(\delta_2a_\uparrow)|^2,
|\tau_{\xi_2}^2(\delta_2a_\downarrow)|^2
\big\} \;.
$$
For any state $\varphi$ and any normal operator $b$, $|\varphi(b)|^2\leq\varphi(b^*b)$ (Kadison-Schwarz inequality).
Thus
$$
|f'_a(t)|^2\leq\max_{i\in\{\uparrow,\downarrow\}}
\tau_{\xi_2}^2\big(c_i
\big)
$$
where, for $i\in\{\uparrow,\downarrow\}$:
$$
c_i:=(\delta_2a_i)^2+
\tfrac{1}{(2\lambda)^2}(a_\uparrow-a_\downarrow)^2 \;.
$$
From \eqref{eq:sg} we get
$$
|f'_a(t)|^2\leq\max_{i\in\{\uparrow,\downarrow\}}
\big\|[D_2,a_i]^*[D_2,a_i]+\tfrac{1}{(2\lambda)^2}(a_\uparrow-a_\downarrow)^2 \big\| \;.
$$
On the other hand:
\begingroup
\addtolength{\arraycolsep}{-30pt}
$$
[D,a]^*[D,a]=\left[\;\begin{matrix}
[D_2,a_\uparrow]^*[D_2,a_\uparrow]+(2\lambda)^{-1}(a_\uparrow-a_\downarrow)^2 &
\cdot \\ \cdot &
[D_2,a_\downarrow]^*[D_2,a_\downarrow]+(2\lambda)^{-1}(a_\uparrow-a_\downarrow)^2 
\end{matrix}\;\right] ,
$$
\endgroup
where the off-diagonal elements are omitted.
By considering vectors in $\HH=\C^2\otimes\HH_2$ with one component equal to zero, we get a lower bound on the norm:
$$
\|[D,a]\|^2\geq
\max_{i\in\{\uparrow,\downarrow\}}
\big\|[D_2,a_i]^*[D_2,a_i]+\tfrac{1}{(2\lambda)^2}(a_\uparrow-a_\downarrow)^2 \big\|
\geq |f'_a(t)|^2 \;.
$$
Now \eqref{eq:txty} becomes
$\tau_x(a)-\tau_y(a)\leq\|[D,a]\|\cdot\|x-y\|_2$, that is \eqref{eq:enoughtoprove}.
\end{proof}

\subsection{From two-sheeted to \texorpdfstring{$N$}{N}-sheeted noncommutative spaces}
Inspired by \S\ref{sec:CN}, we shall now see how one can transfer results about Pythagoras from a product with $\C^2$ to a product with an arbitrary finite metric space.

Let $(\A_1,\HH_1,D_1,\gamma_1)$ be the canonical spectral triple on a metric space with $N$ points, introduced in \S\ref{sec:finite},
$(\A_2,\HH_2,D_2)$ an arbitrary unital spectral triple, and
$(\A,\HH,D)$ their product \eqref{eq:prod}.
Let $(\A',\HH',D')$ be the product of the spectral triple on a two-point space of \S\ref{sec:twopoints} with the same spectral triple $(\A_2,\HH_2,D_2)$ above. Let $\delta_\uparrow(a_\uparrow,a_\downarrow)=a_\uparrow$
and
$\delta_\downarrow(a_\uparrow,a_\downarrow)=a_\downarrow$
be the two pure states of $\C^2$.

\begin{prop}\label{prop:C2toCN}
Let $\varphi_2,\psi_2\in\mc{S}(\A_2)$. If Pythagoras equality is satisfied by $\delta_\uparrow\otimes\varphi_2$ and $\delta_\downarrow\otimes\psi_2$
in the product $(\A',\HH',D')$,
then it is satisfied by
$\varphi_1\otimes\varphi_2$ and $\psi_1\otimes\psi_2$
in the product $(\A,\HH,D)$ for any two pure states
$\varphi_1,\psi_1$ of $\A_1=\C^N$.
\end{prop}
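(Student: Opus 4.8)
The plan is to reproduce, one factor at a time, the reduction used in the proof of the Theorem of \S\ref{sec:CN}: the product Dirac operator $D$ is a direct sum, so restricting the supremum defining $d_D$ to a single summand turns the problem into a (rescaled) copy of the two--point product $(\A',\HH',D')$, where the hypothesis applies.

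Since $\varphi_1,\psi_1$ are pure states of $\A_1=\C^N$, they are two of the points $x_r,x_s$ of \eqref{eq:xxstar}. If $r=s$ then $\varphi_1=\psi_1$, $d_{D_1}(\varphi_1,\psi_1)=0$, and Pythagoras equality follows at once from Lemma \ref{lemma:2.4} together with the lower bound \eqref{eq:ineqDtimes}; so I may assume $r\neq s$. Writing $\varphi=x_r\otimes\varphi_2$, $\psi=x_s\otimes\psi_2$ and $D_{ij}^{\mathrm{prod}}:=D_{ij}\otimes 1+\gamma_{ij}\otimes D_2$, the direct--sum structure $D=\bigoplus_{i\neq j}D_{ij}^{\mathrm{prod}}$ gives, exactly as in \eqref{eq:clearly},
\[
\|[D,a]\|=\max_{i\neq j}\big\|[D_{ij}^{\mathrm{prod}},(\pi_{ij}\otimes\id)(a)]\big\|\geq\big\|[D_{rs}^{\mathrm{prod}},(\pi_{rs}\otimes\id)(a)]\big\|,\qquad\forall\,a\in\A^{\mathrm{sa}},
\]
whence, as in \eqref{eq:finiteineq}, $d_D(\varphi,\psi)\leq d_{D_{rs}^{\mathrm{prod}}}(\varphi,\psi)$, the spectral distance computed inside the single $(r,s)$--summand.

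Next I would identify this summand with $(\A',\HH',D')$. The coordinate map $c_{rs}\colon\C^N\to\C^2$, $a\mapsto(a_r,a_s)$, is a surjective unital $*$--homomorphism (surjective precisely because $r\neq s$), and $\pi_{rs}$ factors as $c_{rs}$ followed by the diagonal representation of $\C^2$; tensoring with $\id_{\A_2}$ exhibits the $(r,s)$--summand as the pullback of $(\A',\HH',D')$ along $c_{rs}\otimes\id$, once the length parameter of the two--point triple of \S\ref{sec:twopoints} is fixed to $\lambda=\tfrac12 g_{rs}$ so that $\tfrac{1}{2\lambda}F=\tfrac{1}{g_{rs}}F=D_{rs}$. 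Under this pullback $x_r$ becomes $\delta_\uparrow$ and $x_s$ becomes $\delta_\downarrow$, and as $a$ runs over $\A^{\mathrm{sa}}$ the image $(c_{rs}\otimes\id)(a)$ runs over all of $(\A')^{\mathrm{sa}}$; hence the constrained supremum defining $d_{D_{rs}^{\mathrm{prod}}}(\varphi,\psi)$ coincides with the one defining the spectral distance in $(\A',\HH',D')$, i.e.
\[
d_{D_{rs}^{\mathrm{prod}}}(\varphi,\psi)=d_{D'}(\delta_\uparrow\otimes\varphi_2,\delta_\downarrow\otimes\psi_2).
\]

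Finally I would feed in the hypothesis: Pythagoras equality for $\delta_\uparrow\otimes\varphi_2$ and $\delta_\downarrow\otimes\psi_2$ in $(\A',\HH',D')$ says the right--hand side above equals $\sqrt{d(\delta_\uparrow,\delta_\downarrow)^2+d_{D_2}(\varphi_2,\psi_2)^2}$, where $d(\delta_\uparrow,\delta_\downarrow)$ is the two--point spectral distance. With $\lambda=\tfrac12 g_{rs}$ this is $2\lambda=g_{rs}$ (cf.~\S\ref{sec:twopoints}), which by Prop.~\ref{prop:3.1} equals $d_{D_1}(x_r,x_s)=d_{D_1}(\varphi_1,\psi_1)$. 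Chaining the three displays gives
\[
d_D(\varphi,\psi)\leq\sqrt{d_{D_1}(\varphi_1,\psi_1)^2+d_{D_2}(\varphi_2,\psi_2)^2}=d_{D_1}\!\boxtimes d_{D_2}(\varphi,\psi),
\]
and since the opposite inequality is \eqref{eq:ineqDtimes}, Pythagoras equality follows. I expect the only real work to be the bookkeeping of the third paragraph: matching the Dirac normalization $\lambda=\tfrac12 g_{rs}$, and using surjectivity of $c_{rs}$ to guarantee that the constrained supremum over the summand genuinely reproduces the one over $(\A',\HH',D')$ rather than a proper sub--supremum. No analytic input beyond \S\ref{sec:finite}--\S\ref{sec:twopoints} should be needed.
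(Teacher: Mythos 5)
Your proposal is correct and follows essentially the same route as the paper's proof: reduce to the $(r,s)$ summand via the direct-sum structure of $D$ (after dismissing the trivial case $r=s$), identify that summand with the two-point product $(\A',\HH',D')$, invoke the hypothesis to evaluate $d_{D_{rs}}(\varphi,\psi)$ as $\sqrt{g_{rs}^2+d_{D_2}(\varphi_2,\psi_2)^2}$, and close with \eqref{eq:ineqDtimes}. The only difference is that you make explicit the bookkeeping the paper leaves implicit --- the surjective coordinate map $c_{rs}$ and the Dirac normalization $\lambda=\tfrac12 g_{rs}$ --- which is a welcome clarification, not a different argument.
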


\begin{proof}
We have a decomposition 
$\HH=\bigoplus_{i\neq j}\C^2\otimes\HH_2$, where on the summand $(i,j)$ the representation is given by
$$
\pi_{ij}(a)=\left[\!\begin{array}{cc} a_i & 0 \\ 0 & a_j \end{array}\!\right] \;,
$$
for all $a=(a_1,\ldots,a_N)\in \A=(\A_2)^N\simeq\C^N\otimes\A_2$ (each $a_i$ here is a bounded operator on $\HH_2$). The Dirac operator is
$D=\bigoplus_{i\neq j}D_{ij}$, with
$$
D_{ij}=\frac{1}{g_{ij}}\left[\begin{array}{rr} 0 & 1 \\ 1 & 0 \end{array}\right]+
\left[\begin{array}{cr} D_2 & 0 \\ 0 & \!\!-D_2 \end{array}\right]
\;,
$$
where each matrix entry is an operator on $\HH_2$. Let $\varphi_1=x_r$ and $\psi_1=x_s$ be two pure states as in \eqref{eq:xxstar}.
If $r=s$, Pythagoras equality is trivial (it follows from Theorem \ref{thm:2.3}(i) and Lemma \ref{lemma:2.4}, i.e.~the fact that taking a product doesn't change the horizontal resp.~vertical distance). We can then assume that $r\neq s$.

Since $\|[D,a]\|=\max_{i\neq j}\|[D_{ij},\pi_{ij}(a)]\|\geq\|[D_{rs},\pi_{rs}(a)]\|$,
then
$$
d_D(\varphi,\psi)\leq d_{D_{rs}}(\varphi,\psi)
=\sup\big\{\varphi_2(a_r)-\psi_2(a_s):
\|[D_{rs},\pi_{rs}(a)]\|\leq 1
\big\} \;.
$$
On the other hand, $d_{D_{rs}}(\varphi,\psi)$ is the distance on a product of $(\A_2,\HH_2,D_2)$ with a two-point space
(the two states $x_r$ and $x_s$ will give one $\delta_\uparrow$ and one $\delta_\downarrow$), which by hypothesis is equal to
$$
\sqrt{g_{rs}^2+d_{D_2}(\varphi_2,\psi_2)^2}=
\sqrt{d_{D_1}(x_r,x_s)^2+d_{D_2}(\varphi_2,\psi_2)^2} \;.
$$
This proves $d_D(\varphi,\psi)\leq
d_{D_1}\!\boxtimes d_{D_2}(\varphi,\psi)$, the opposite inequality being
always true, cf.~\eqref{eq:ineqDtimes}.
\end{proof}

\noindent
Note that the states $\varphi_2,\psi_2$ in previous proposition are not necessarily pure.

\begin{cor}\label{cor:7.3}
Pythagoras equality is satisfied by pure states in the product of any Riemannian manifold with
any finite metric space \cite{MW02}, and by translated states in the product of Moyal spectral triple with
any finite metric space \cite{MT13}.
\end{cor}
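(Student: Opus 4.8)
The plan is to deduce Corollary~\ref{cor:7.3} by composing three facts already established: the strongly geodesic pairs exhibited in Examples~\ref{ex:M} and~\ref{ex:Moyal}, the two-point Pythagoras computation of Proposition~\ref{prop:prodC2}, and the transfer mechanism of Proposition~\ref{prop:C2toCN}. Throughout I take the finite metric space as the (even) first factor, so $\A_1=\C^N$ carries its canonical spectral triple from \S\ref{sec:finite}, while $\A_2$ is the algebra of the Riemannian manifold (resp.\ of Moyal plane). A pure product state of the product is then $\varphi_1\otimes\varphi_2$ with $\varphi_1$ a point of the $N$-point space and $\varphi_2$ a point of $M$ (resp.\ a state of Moyal plane); in the commutative case both factors are commutative, so \emph{every} pure state is of this form and ``Pythagoras for all pure states'' is exactly orthogonality in the sense of Definition~\ref{def:2.2}.

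First I would realize the relevant second-factor states as endpoints of a strongly geodesic pair. In the Riemannian case, two points $\hat p,\hat q$ of $M$ are joined, by the Hopf--Rinow theorem, by a minimizing geodesic $c$; parametrizing $c$ by arc length on $I=[0,d_{\mathrm{geo}}(p,q)]$, Example~\ref{ex:M} shows that $(\sigma|_I,\hat p)$ is a geodesic pair and in fact strongly geodesic, with $\tau^2_0=\hat p$ and $\tau^2_{d_{\mathrm{geo}}(p,q)}=\hat q$. In the Moyal case, two translated states $\Psi_{z_0},\Psi_{z_1}$ are connected by the line $z_t=ut+v$ with $u=(z_1-z_0)/|z_1-z_0|$ and $v=z_0$, and Example~\ref{ex:Moyal} (through Lemma~\ref{lemma:sgp}) shows the resulting pair is strongly geodesic, with endpoints $\Psi_{z_0}$ and $\Psi_{z_1}$.

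Next I would apply Proposition~\ref{prop:prodC2} to the product of the two-point space of \S\ref{sec:twopoints} with $(\A_2,\HH_2,D_2)$: choosing $x_1,y_1$ to be the two pure states $\pm\lambda$ of $\C^2$, namely $\delta_\uparrow$ and $\delta_\downarrow$, and $x_2,y_2$ the parameters singled out above, the proposition gives Pythagoras equality for $\delta_\uparrow\otimes\varphi_2$ and $\delta_\downarrow\otimes\psi_2$. This is precisely the hypothesis of Proposition~\ref{prop:C2toCN}, which promotes it to Pythagoras equality for $\varphi_1\otimes\varphi_2$ and $\psi_1\otimes\psi_2$ in the product with $\C^N$, for \emph{arbitrary} pure states $\varphi_1,\psi_1$ of $\A_1=\C^N$. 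Letting $\hat p,\hat q$ (resp.\ $\Psi_{z_0},\Psi_{z_1}$) range over all admissible pairs then yields the corollary: all pure product states in the Riemannian case (hence orthogonality), and all products of a pure state of the finite space with a translated pair in the Moyal case.

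The genuine content lies in Proposition~\ref{prop:prodC2} and in the strong-geodesy verifications of the Examples, so the corollary is mainly an assembly; the step deserving care is the \emph{matching}, i.e.\ checking that every admissible pair of second-factor states arises as the two endpoints of a single strongly geodesic pair, which is what Hopf--Rinow supplies on the manifold side and the explicit one-parameter translation group $T(z_t)$ supplies for Moyal. A secondary subtlety worth flagging is unitality: for non-compact $M$ the Hodge triple is non-unital, and Moyal's algebra $\mc{S}(\N^2)$ is non-unital as well, whereas Propositions~\ref{prop:prodC2} and~\ref{prop:C2toCN} are phrased for unital $\A_2$. This is reconciled by the geodesic-pair formalism of \S\ref{sec:geo}, which (cf.\ the remark following Lemma~\ref{lemma:msl}) is insensitive to whether $\A$ is dense in $A$ and lets $\tau_0$ be a state of the ambient $C^*$-algebra, so the endpoint identifications and the commutator-norm estimates used in the two propositions carry over unchanged for these examples.
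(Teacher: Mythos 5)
Your assembly is exactly the paper's intended argument: Corollary~\ref{cor:7.3} is stated there with no separate proof, being precisely the combination of the strongly geodesic pairs of Examples~\ref{ex:M} and~\ref{ex:Moyal}, the two-point Pythagoras result of Proposition~\ref{prop:prodC2}, and the transfer mechanism of Proposition~\ref{prop:C2toCN}, with the matching of endpoints supplied by Hopf--Rinow on the manifold side and by the translation group $T(z_t)$ on the Moyal side, just as you describe. One caveat on your final paragraph: the unitality issue you flag is real (Moyal's algebra, and the Hodge triple of a non-compact $M$, are non-unital, while Propositions~\ref{prop:prodC2} and~\ref{prop:C2toCN} assume $\A_2$ unital), but the remark following Lemma~\ref{lemma:msl} concerns density of $\A$ in $A$ rather than unitality, and while the commutator-norm (upper-bound) estimates indeed carry over, the lower bound \eqref{eq:ineqDtimes} --- which Theorem~\ref{thm:2.3}(i) guarantees only for unital triples --- is not covered by that remark; in the non-unital cases this half of the equality rests on the direct arguments of the cited references \cite{MW02,MT13}, a point the paper itself leaves implicit.
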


%%% ======================================================================

\section{A digression on purification}\label{sec:pur}

Consider a product of unital spectral triples $(\A_1,\HH_1,D_1,\gamma_1)$ and $(\A_2,\HH_2,D_2)$. From \eqref{eq:ineqDtimes}:
$$
d_{D_1}(\varphi_1,\psi_1)\leq d_D(\varphi,\psi)
$$
for any states $\varphi,\psi$ with marginals $\varphi_1^\flat=\varphi_1$ and $\psi_1^\flat=\psi_1$.
When the second spectral triple is given by $\A_2=\HH_2=\C$ and $D_2=0$, for $\varphi=\varphi_1\otimes\id$ and $\psi=\psi_1\otimes\id$ (the identity map is a state on $\C$), the above inequality is clearly an equality. We can then say that:
$$
d_{D_1}(\varphi_1,\psi_1)=\min\,d_D(\varphi,\psi)
$$
is a minimum over all the \emph{extensions} $\varphi,\psi$ of 
$\varphi_1,\psi_1$ and over all unital spectral triples $(\A_2,\HH_2,D_2)$. This may seem an overly convoluted way to look at the spectral distance, but establishes an analogy with the purified distance of quantum information, which is briefly discussed below.

\smallskip

Given two states $\varphi_1\in\mc{S}(A_1)$ and $\varphi_2\in\mc{S}(A_2)$, a \emph{purification} is any pure state $\varphi\in\mc{S}(A_1\otimes A_2)$ with marginals $\varphi_1$ and $\varphi_2$.
In Example \ref{ex:2.1}, the Bell state is a purification of the tracial states $\varphi_1=\varphi_2=\tr_{\C^2}(\,.\,)$.

\smallskip

Let us focus on the example of matrix algebras, which is the one studied in quantum information. States are in bijection with density matrices, and in this section they will be identified. A state is pure if the density matrix $\rho\in M_{N}(\C)$ is a projection in the direction of a unit vector $v$ of $\C^N$, i.e.~$\rho=vv^*$, where we think of $v$ resp.~$v^*$ as a column resp.~row vector.

In the matrix case, every state $\rho$ has an almost canonical purification. Since $\rho$ is positive, we can find an orthonormal eigenbasis $\{v_i\}_{i=1}^N$ of $\C^N$ and write
$$
\rho=\sum\nolimits_{i=1}^Np_iv_iv_i^* \;,
$$
where $\{p_i\}_{i=1}^N$ is the probability distribution given by the eigenvalues of $\rho$. A purification is then the state $\rho^\circ\in M_{N}(\C)\times M_{N}(\C)$ given by
$$
\rho^\circ=\sum\nolimits_{i,j=1}^N\sqrt{\rule{0pt}{7pt}p_ip_j}\,v_iv_j^*\otimes v_iv_j^* \;,
$$
which is a projection in the direction of $v^\circ=\sum\nolimits_{i=1}^N\sqrt{\rule{0pt}{7pt}p_i}\,v_i\otimes v_i$. It is almost canonical in the sense that it depends on the choice of eigenbasis.

The \emph{purified distance} $d_{\mc{P}}$ is usually defined in terms of Uhlmann's fidelity, but the most interesting characterization is as the minimal trace distance between all possible purifications of the states \cite[\S3.4]{Tom15}. For $\rho,\tau\in M_N(\C)$:
\begin{equation}\label{eq:purified}
d_{\mc{P}}(\rho,\tau)=\inf\, \sqrt{1-\left|\inner{v,w}\right|^2}
\end{equation}
where the inf is over all rank $1$ projections $\rho'=vv^*,\tau'=ww^*\in M_N(\C)\otimes M_{N'}(\C)$, for all $N'\geq 1$, that have $\rho$ and $\tau$ as marginals:
$$
(\id_{\C^N}\otimes\tr_{\C^{N'}})(\rho')=\rho \;,\qquad
(\id_{\C^N}\otimes\tr_{\C^{N'}})(\tau')=\tau \;.
$$
One can easily verify that the trace distance and the purified distance coincide on pure states.

\begin{ex}[Bloch's sphere]
Consider the map $B_3\to\mc{S}(M_2(\C))$, $x\mapsto\rho_x$, in \S\ref{sec:Bloch}. To compute the purified distance we use the formula $d_{\mc{P}}(\rho,\tau)=\sqrt{1-F(\rho,\tau)}$
\cite[Def.~3.8]{Tom15}, together with the formula
$F(\rho,\tau)=\tr(\rho\tau)+2\sqrt{\det\rho\hspace{1pt}\det\tau}$
for the fidelity \cite[Eq.~(9.47)]{ZS01}.
We get
$$
d_{\mc{P}}(\rho_y,\rho_y)=2^{-\frac{1}{2}}
\sqrt{1-\inner{x,y}-(1-\|x\|_2^2)^{\frac{1}{2}}(1-\|y\|_2^2)^{\frac{1}{2}}} \;.
$$
If $x,y$ are unit vectors (pure states),
with some algebraic manipulation one verifies that $d_{\mc{P}}$ is proportional to the Euclidean distance: $d_{\mc{P}}(\rho_y,\rho_y)=\frac{1}{2}\|x-y\|_2$.
\end{ex}

\appendix
\section{\texorpdfstring{The surface $\Delta^1\times\Delta^1\subset\Delta^3$}{}}\label{app:A}
Since $\Delta^1$ is a segment, from a metric point of view $\Delta^1\times\Delta^1$ with product metric is a square.
We can plot its embedding in $\Delta^3$ as follows. Vertices of $\Delta^3$ are vectors $\{e_i\}_{i=1}^4$ of the canonical basis of $\R^4$; let $f:\R^4\to\R^3$ be the linear map 
sending the vertices of $\Delta^3$ to alternating vertices of a cube, that is: $f(e_1)=(1,1,1)$, $f(e_2)=(1,-1,-1)$,
$f(e_3)=(-1,1,-1)$, $f(e_4)=(-1,-1,1)$.

With the obvious identification $\Delta^1\simeq [-1,1]$,
the map from $\Delta^1\times\Delta^1$ to product states is
$$
[-1,1]^2\ni (t,s)\mapsto
p(t,s):=\left(
\tfrac{(1+t)(1+s)}{4},
\tfrac{(1+t)(1-s)}{4},
\tfrac{(1-t)(1+s)}{4},
\tfrac{(1-t)(1-s)}{4}
\right)
\in\Delta^3 .
$$
Performing a parametric plot of
\begin{equation}\label{eq:parplot}
(t,s)\mapsto f\circ p(t,s)=(t,s,t\cdot s)
\end{equation}
one gets the surface in Figure \ref{fig:parplot}. One can imagine that the square is first bent along a diagonal, so that its two halves cover two faces of the tetrahedron\footnote{Of course, one has to shrink the diagonal until its length matches the edges.}. Then it is bent by pushing its center, until it coincides with the barycenter of the tetrahedron\footnote{The center $(0,0)$ of the square is sent to $(0,0,0)\in\R^3$, which is the barycenter of $f(\Delta^3)$.}.

\begin{figure}[t]
  \centering
	\includegraphics[height=5cm]{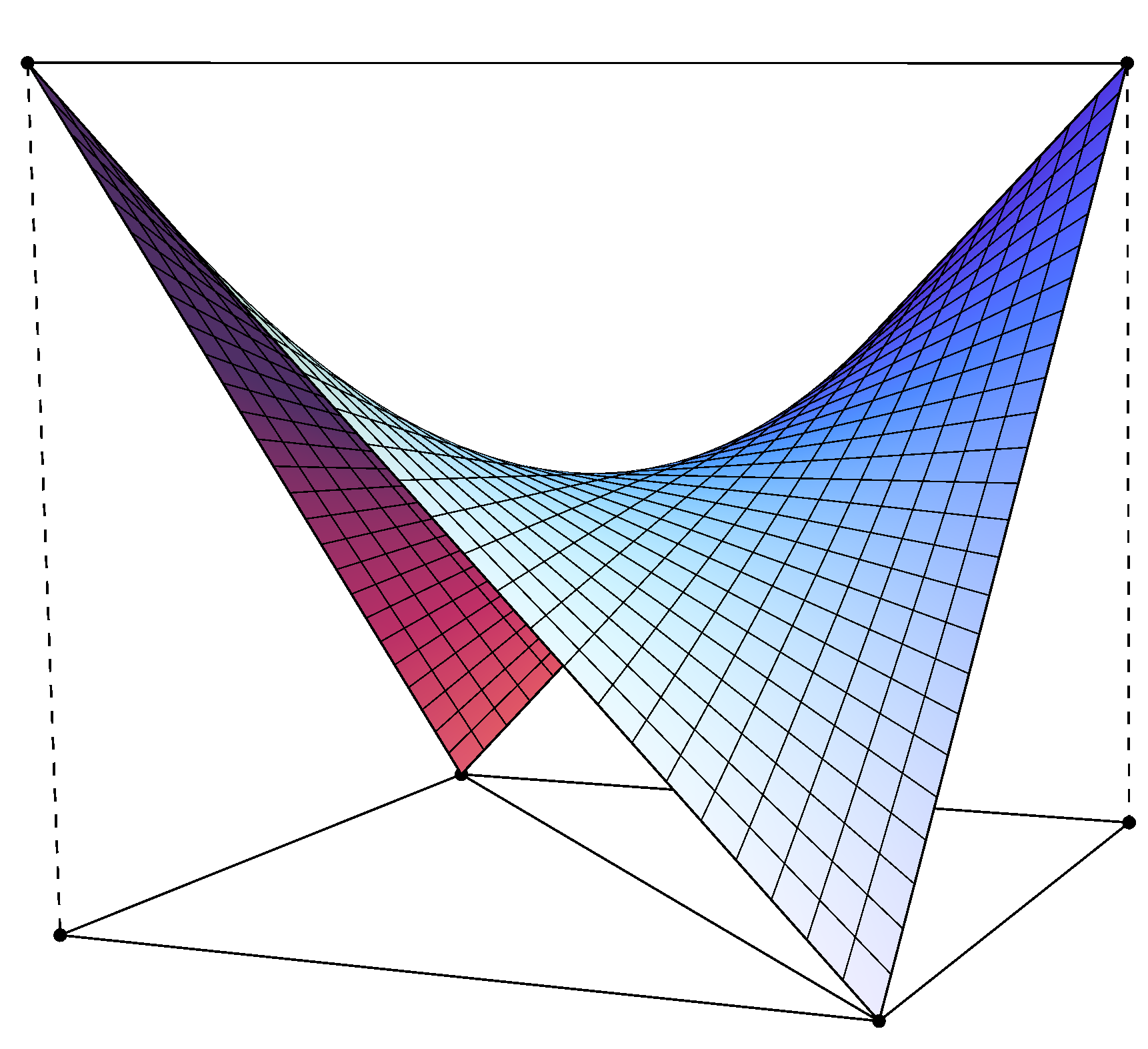}
	
	\vspace{-10pt}
	
	\caption{Plot of \eqref{eq:parplot}.}
	\label{fig:parplot}
\end{figure}

Note that this surface is just the intersection of a hyperbolic paraboloid (with equation \mbox{$z=xy$}) with the cube of vertices $(\pm 1,\pm 1,\pm 1)$, so in particular it is not an isometric deformation of the original square (by Gauss's Theorema Egregium, since it is negatively curved). Thus, the geodesic distance induced by this embedding of $\Delta^1\times\Delta^1\to\R^3$ is not the product distance. On the other hand, the embeddings of $\Delta^1\times\{s\}$ and $\{t\}\times\Delta^1$ are isometric (these are the straight lines with constant $s$ resp.~$t$ which are shown in the figure). The product metric is the pullback of the Euclidean distance on the base square in Figure \ref{fig:parplot} via the vertical projection.

One can also check that the map $\varphi\to\varphi^\flat$ from a state to the product of its marginals is just the vertical projection to the red-blue surface of the corresponding point in the tetrahedron in Figure \ref{fig:parplot}.


\begin{thebibliography}{9}

\bibitem{Bog15}
A.~Bogomolny, \url{http://www.cut-the-knot.org/pythagoras/FalseProofs.shtml}.

\bibitem{BZ06}
I. Bengtsson and K. Zyczkowski, \textit{Geometry of Quantum States},
Cambridge Univ. Press, 2006.

\bibitem{BR97}
O. Bratteli and D.W. Robinson,
\textit{Operator Algebras and Quantum Statistical Mechanics 1}, Springer Verlag, 1996.

\bibitem{CDMW09}
E.~Cagnache, F.~D'Andrea, P.~Martinetti and J.-C. Wallet, \textit{The spectral distance on the Moyal plane}, J. Geom. Phys. 61 (2011), 1881--1897.

\bibitem{Con94}
A.~Connes, \textit{Noncommutative Geometry}, Academic Press, 1994.

\bibitem{CM08}
A.~Connes and M.~Marcolli, \textit{Noncommutative geometry, quantum fields and motives}, Colloquium Publications, vol.~55, AMS, 2008.

\bibitem{DLM13}
F.~D'Andrea, F.~Lizzi and P.~Martinetti, \textit{Spectral geometry with a cut-off: topological and metric aspects}, 
J. Geom. Phys. 82 (2014), 18--45.

\bibitem{DLM13b}
F.~D'Andrea, F.~Lizzi and P.~Martinetti, \textit{Matrix geometries emergent from a point}, Rev. Math. Phys. 26 (2014), 1450017.

\bibitem{DLV12}
F.~D'Andrea, F.~Lizzi and J.C. V\'arilly, \textit{Metric Properties of the Fuzzy Sphere}, Lett. Math. Phys. 103 (2013), 183--205.

\bibitem{DM09}
F.~D'Andrea and P.~Martinetti, \textit{A view on Transport Theory from Noncommutative Geometry}, SIGMA 6 (2010) 057.

\bibitem{DM13}
F. D'Andrea and P.~Martinetti, \textit{On Pythagoras Theorem for Products of Spectral Triples}, Lett.~Math.~Phys.~103 (2013), 469--492.

\bibitem{DD09}
M.M. Deza and E.D. Deza, \textit{Encyclopedia of Distances}, Springer, 2009.

\bibitem{Ev97}
L.C. Evans, \textit{Partial Differential Equations and Monge-Kantorovich Mass Transfer}, in ``Current Developments in Mathematics 1997'', Int. Press (Boston, 1999), pp.~65--126.

\bibitem{GVF01}
J.M. Gracia-Bond\'ia, J.C. V\'arilly and H. Figueroa, \textit{Elements of Noncommutative Geometry}, Birkh\"auser, Boston, 2001.

\bibitem{HJ90}
R.A. Horn and C.R. Johnson, \textit{Matrix Analysis}, Cambridge University Press, 1990.

\bibitem{IKM01}
B. Iochum, T. Krajewski and P. Martinetti, \textit{Distances in finite spaces from noncommutative geometry}, J. Geom. Phys. 31 (2001), 100--125.

\bibitem{Lan02}
G.~Landi, \textit{An introduction to noncommutative spaces and their geometries}, Springer, 2002.

\bibitem{Lan98book}
N.P.~Landsman, \textit{Topics Between Classical And Quantum Mechanics}, Springer, 1998.

\bibitem{Lee13}
J.M.~Lee, \textit{Introduction to Smooth Manifolds}, 2nd ed.,
Springer, 2013.

\bibitem{Loo68}
E.S. Loomis, \textit{The Pythagorean Proposition}, National Council of Teachers of Mathematics (Reston, Va), 1968.

\bibitem{Mao07}
E. Maor, \textit{The Pythagorean Theorem: A 4,000-year History}, Princeton Univ. Press, 2007.

\bibitem{MT13}
P. Martinetti and L. Tomassini,
\textit{Noncommutative geometry of the Moyal plane: translation isometries, Connes distance between coherent states, Pythagoras equality},
Commun. Math. Phys. 323 (2013), 107--141.

\bibitem{MW02}
P. Martinetti and R. Wulkenhaar, \textit{Discrete Kaluza-Klein from scalar fluctuations in noncommutative geometry}, J. Math. Phys. 43 (2002), 182--204.

\bibitem{RR98}
S.T. Rachev and L. R{\"u}schendorf,
\textit{Mass Transportation Problems, Vol.~1},
Springer, 1998.

\bibitem{Rie99}
M.A.~Rieffel, \textit{Metric on state spaces}, Documenta Math. 4 (1999), 559--600.

\bibitem{Rie04}
M.A.~Rieffel, \textit{Matrix algebras converge to the sphere for quantum Gromov--Hausdorff distance}, Mem. Amer. Math. Soc. 168 (2004) no. 796, 67--91.

\bibitem{Rie03}
M.A.~Rieffel, \textit{Compact quantum metric spaces}, in ``Operator algebras, quantization, and noncommutative geometry'', Contemp. Math. 365 (AMS, Providence, RI, 2004),
pp.~315--330.

\bibitem{Rie11}
M.A. Rieffel, \textit{Leibniz seminorms and best approximation from $C^*$-subalgebras}, Science China Math. 54 (2011), 2259--2274.

\bibitem{RP11}
E. Rieffel and W. Polak, \textit{Quantum Computing: A Gentle Introduction}, The MIT Press, 2011.

\bibitem{Sak91}
S.~Sakai, \textit{Operator algebras in dynamical systems}, Encyclopedia of Mathematics and its Applications 41, Cambridge
Univ. Press, 1991.

\bibitem{SC13}
F.G. Scholtz and B. Chakraborty,
\textit{Spectral triplets, statistical mechanics and emergent geometry in non-commutative quantum mechanics},
J. Phys. A46 (2013), 085204.

\bibitem{SuijBook}
W.D. van Suijlekom, \textit{Noncommutative Geometry and Particle Physics}, Springer, 2015.

\bibitem{Tom15}
M. Tomamichel, \textit{Quantum Information Processing with Finite Resources}, Springer, 2015.

\bibitem{Thu97}
W.P. Thurston, \textit{Three-Dimensional Geometry and Topology}, Princeton Univ. Press, 1997.

\bibitem{Vil03}
C.~Villani, \emph{Topics in Optimal Transportation}, Graduate Studies in Math.~58, AMS, 2003.

\bibitem{Vil08}
C.~Villani, \emph{Optimal Transport: Old and New}, Grundlehren der
mathematischen Wissenschaften, vol. 338, Springer, 2009.

\bibitem{ZS01}
K. Zyczkowski and W. Slomczynski, \textit{Monge Metric on the Sphere and Geometry of Quantum States}, J. Phys. A34 (2001), 6689--6722.

\end{thebibliography}
\end{document}